\newtcolorbox[blend into=figures]{stlbox}[1][]{colframe=black!50,colback=white,#1}
\newlength\myheight
\newlength\mydepth
\settototalheight\myheight{Xygp}
\newtheorem{lemma}{Lemma}
\newtheorem{example}{Example}
\newcommand*{\set}[1]{\{ #1 \}}
\newcommand*{\marked}[1]{\overline{#1}}
\newcommand*{\stlunop}[3][]{\mathop{\mathsf{#2}^{#1}_{#3}}}
\newcommand*{\stlbinop}[5][]{{#4} \mathbin{{#2}^{#1}_{#3}} {#5}}
\newcommand*{\evenop}{\mathsf{F}}
\newcommand*{\globop}{\mathsf{G}}
\newcommand*{\untilop}{\mathsf{U}}
\newcommand*{\releaseop}{\mathsf{R}}
\newcommand*{\suntilop}{\mathsf{sU}}
\newcommand*{\sreleaseop}{\mathsf{sR}}
\newcommand*{\until}[3]{\stlbinop{\untilop}{#1}{#2}{#3}}
\newcommand*{\release}[3]{\stlbinop{\releaseop}{#1}{#2}{#3}}
\newcommand*{\even}[2][]{\stlunop[{#1}]{\evenop}{#2}}
\newcommand*{\glob}[2][]{\stlunop[{#1}]{\globop}{#2}}
\newcommand*{\meven}[2][]{\stlunop[{#1}]{\marked{\evenop}}{#2}}
\newcommand*{\mglob}[2][]{\stlunop[{#1}]{\marked{\globop}}{#2}}
\newcommand*{\suntil}[4][]{\stlbinop[#1]{\suntilop}{#2}{#3}{#4}}
\newcommand*{\srelease}[4][]{\stlbinop[#1]{\sreleaseop}{#2}{#3}{#4}}
\newcommand*{\anybinop}[4][]{\stlbinop[{#1}]{\mathsf{B}}{#2}{#3}{#4}}
\newcommand*{\msuntil}[4][]{\stlbinop[#1]{\marked{\suntilop}}{#2}{#3}{#4}}
\newcommand*{\msrelease}[4][]{\stlbinop[#1]{\marked{\sreleaseop}}{#2}{#3}{#4}}
\newcommand*{\anyunop}[2][]{\stlunop[#1]{A}{#2}}
\renewcommand{\implies}{\rightarrow}
\renewcommand{\iff}{\leftrightarrow}
\renewcommand*{\exp}{\operatorname{tex}}
\DeclareMathOperator{\pcl}{pcl}
\DeclareMathOperator{\horiz}{h}
\DeclareMathOperator{\unmark}{unmark}
\newcommand{\cmark}{\textcolor{ForestGreen}{\ding{51}}}
\newcommand{\xmark}{\textcolor{red}{\ding{55}}}
\begin{document}

\title{A Tree-Shaped Tableau for Checking the Satisfiability of Signal Temporal Logic with Bounded Temporal Operators}

\author{Beatrice Melani}
\affiliation{%
  \institution{Politecnico di Milano}
  \city{Milan}
  \country{Italy}}
\orcid{0000-0002-9051-1597}
\email{beatrice.melani@polimi.it}

\author{Ezio Bartocci}
\orcid{0000-0002-8004-6601}
\email{ezio.bartocci@tuwien.ac.at}

\author{Michele Chiari}
\affiliation{%
  \institution{TU Wien}
  \city{Vienna}
  \country{Austria}}
\orcid{0000-0001-7742-9233}
\email{michele.chiari@tuwien.ac.at}

\begin{CCSXML}
<ccs2012>
   <concept>
       <concept_id>10011007.10011074.10011075.10011076</concept_id>
       <concept_desc>Software and its engineering~Requirements analysis</concept_desc>
       <concept_significance>500</concept_significance>
       </concept>
   <concept>
       <concept_id>10002950.10003705.10003707</concept_id>
       <concept_desc>Mathematics of computing~Solvers</concept_desc>
       <concept_significance>100</concept_significance>
       </concept>
   <concept>
       <concept_id>10003752.10003790.10003793</concept_id>
       <concept_desc>Theory of computation~Modal and temporal logics</concept_desc>
       <concept_significance>500</concept_significance>
       </concept>
   <concept>
       <concept_id>10010520.10010553</concept_id>
       <concept_desc>Computer systems organization~Embedded and cyber-physical systems</concept_desc>
       <concept_significance>300</concept_significance>
       </concept>
 </ccs2012>
\end{CCSXML}

\ccsdesc[500]{Software and its engineering~Requirements analysis}
\ccsdesc[100]{Mathematics of computing~Solvers}
\ccsdesc[500]{Theory of computation~Modal and temporal logics}
\ccsdesc[300]{Computer systems organization~Embedded and cyber-physical systems}

\begin{abstract}
Signal Temporal Logic (STL) is a widely recognized formal specification language to express rigorous temporal requirements on mixed analog signals produced by cyber-physical systems (CPS).
A relevant problem in CPS design is how to efficiently and automatically check whether a set of STL requirements is logically consistent.
This problem reduces to solving the STL satisfiability problem, which is decidable when we assume that our system operates in discrete time steps dictated by an embedded system's clock.

This paper introduces a novel tree-shaped, one-pass tableau method for satisfiability checking of discrete-time STL with bounded temporal operators. Originally designed to prove the consistency of a given set of STL requirements, this method has a wide range of applications beyond consistency checking.
These include synthesizing example signals that satisfy the given requirements, as well as verifying or refuting the equivalence and implications of STL formulas.

Our tableau exploits redundancy arising from large time intervals in STL formulas to speed up satisfiability checking,
and can also be employed to check Mission-Time Linear Temporal Logic (MLTL) satisfiability.
We compare our tableau with Satisfiability Modulo Theories (SMT) and First-Order Logic encodings from the literature
on a benchmark suite, partly collected from the literature, and partly provided by an industrial partner.
Our experiments show that, in many cases, our tableau outperforms state-of-the-art encodings.
\end{abstract}

\keywords{Signal Temporal Logic, Satisfiability, Consistency Checking, Requirements.}

\maketitle

\section{Introduction}
\label{sec:intro}

The formal specification of requirements is an essential engineering activity~\cite{BoufaiedJBBP21} throughout the development process of a cyber-physical system (CPS), from its design and testing/debugging~\cite{BoufaiedMBB23,cps2025,BartocciMMMN21} before deployment, to monitoring and runtime verification after deployment~\cite{JaksicBGKNN15}. 

Real-time temporal logics~\cite{Koymans90} such as Signal Temporal Logic~\cite{MalerN04} (STL) provide a very well-established~\cite{BartocciDDFMNS18} and intuitive formal framework to express and reason about temporal requirements over discrete and continuous signals.  These mixed-analog signals characterize both the external behavior of the physical environment that the CPS measures through its sensors and the internal behavior of the embedded computational device controlling the physical environment.  STL has become widely used in various industrial settings, as evidenced by numerous scientific papers on its applications published by industrial research laboratories~\cite{MolinANZBE23,Kapinski2016STLibAL,RoehmHM17}. 

The process of requirement engineering is highly complex and heterogeneous: requirements are collected and documented at all stages of the CPS design process, initially in natural language, which is then manually translated into unambiguous formal specifications. This step requires significant skills in formal methods on the part of human designer.
More recently, transformer-based machine learning models~\cite{HeBNIG22} such as large language models (LLMs), have also been employed to assist the engineer in translating requirements from natural language to STL or other temporal logics, but this task also requires accurate human supervision~\cite{MohammadinejadPXKTD24}, since LLMs are prone to errors and hallucinations. Furthermore, guessing requirements about the physical environment in which a CPS operates
can be very cumbersome when done manually. These formal specifications can be inferred from observed data using automated techniques such as specification mining methods~\cite{BartocciMNN22}. 

In this scenario, \emph{an important open problem is how to validate efficiently a set of STL requirements and make sure that they are consistent}.
The definition of consistency of a requirement set that is adopted in the context of this paper is the same as the ISO/IEC/IEEE 2918 standard \cite{ISO29148} in which a requirement set is considered consistent when it ``does not have individual requirements which are contradictory''.

\paragraph{An example}
Consider the following two requirements of a railroad system~\cite{BaeLee19}:

\begin{tcolorbox}[
    colframe=black!50, colback=white, 
    title={STL Requirements for a Railroad system from~\cite{BaeLee19}}
]
\renewcommand{\arraystretch}{1.5} 
\footnotesize
\begin{tabular}{>{\raggedright}p{5.8cm} | p{6.6cm}}
    
    $\glob{[3,50]}(\even{[5,20]}(a \geq 80))$
    &
    For all time steps in \([3,50]\), there must be a future moment within \([5,20]\) where the angle $a$ reaches at least the value 80. \\ 
    
    $\glob{[10,60]}(a \geq 80 \implies \glob{[20,40]}(a < 60))$ 
    &
    If at any time in \([10,60]\) the angle is at least 80, then for all time steps in the following interval \([20,40]\), the angle must remain below 60. \\ 
\end{tabular}
\end{tcolorbox}

These two requirements are inconsistent, because there exists no signal that would satisfy these two STL formulas in conjunction. However, it is very hard for an engineer to understand at a glance that they are inconsistent.
Proving it manually requires several logical steps and a good knowledge of formal methods.
The first requirement is equivalent to another formula:
\begin{equation}
\label{eq:railroad-shift-interval}
\glob{[3,50]}(\even{[5,20]}(a \geq 80)) \leftrightarrow \glob{[8,55]}(\even{[0,15]}(a \geq 80))
\end{equation}
We can transform the implication in the second requirement by using $(\phi \rightarrow \psi) \leftrightarrow (\neg \phi \vee \psi)$ and then we need to prove the consistency of this expression:
\[
\glob{[8,55]}(\even{[0,15]}(a \geq 80))  \wedge \glob{[10,60]}( (a < 80) \vee \glob{[20,40]}(a < 60))
\]
If we consider STL interpreted over discrete-time signals we can rewrite the formula as follows:
\[
\glob{[8,29]}(\even{[0,15]}(a \geq 80)) \wedge \glob{[30,55]}(\even{[0,15]}(a \geq 80)) \wedge \glob{[10,60]}( (a < 80) \vee \glob{[20,40]}(a < 60))
\]
Using the rule  $\glob{[i+k,j+k]}(\even{[l,m]} \phi) \leftrightarrow \glob{[i,j]} (\even{[l+k,m+k]} \phi)$ s.t. ($i,j,l,m,k \in \mathbb{N}, i < j, l < m$) we can obtain the following formula:
\begin{equation}
\glob{[8,29]}(\even{[0,15]}(a \geq 80)) \wedge \glob{[10,25]}(\textcolor{Green}{\even{[20,35]}(a \geq 80)}) \wedge \textcolor{blue}{\glob{[10,60]}( (a < 80) \vee \glob{[20,40]}(a < 60))}
\label{eq:railroad-shift-again}
\end{equation}
We can split the third globally expression in \textcolor{blue}{blue} in the conjunction as:
\begin{equation}
\glob{[10,25]}( (a < 80) \vee \textcolor{Green}{\glob{[20,40]}(a < 60)}) \wedge \glob{[26,60]}( (a < 80) \vee \glob{[20,40]}(a < 60))
\label{eq:railroad-split-glob}
\end{equation}
The two formulas in \textcolor{Green}{green} within $\glob{[10,25]}$ in \eqref{eq:railroad-shift-again} and \eqref{eq:railroad-split-glob} are contradictory, so we can exclude $\glob{[20,40]}(a < 60)$, and only $\glob{[10,25]}(a < 80)$ in \eqref{eq:railroad-split-glob} may hold.
Using the rule $\glob{[i,j]}(\phi) \wedge \glob{[i,j]}(\psi) \leftrightarrow \glob{[i,j]}(\phi \wedge \psi)$ s.t. ($i,j \in \mathbb{N}, i < j$) and the distributivity of $\wedge$, we merge the two $\glob{[10,25]}$ operators:
\[
\glob{[8,29]}(\even{[0,15]}(a \geq 80)) \wedge \textcolor{purple}{\glob{[10,25]}(\even{[20,35]}(a \geq 80) \wedge (a < 80) )} \wedge \glob{[26,60]}( (a < 80) \vee \glob{[20,40]}(a < 60))
\]
Using the rule $\glob{[i,j]}(\phi \wedge \psi) \leftrightarrow \glob{[i,j]}(\phi) \wedge \glob{[i,j]}(\psi) $ we then split the globally expression in purple:
\[\textcolor{purple}{\glob{[10,25]}(\even{[20,35]}(a \geq 80)) \wedge \glob{[10,25]}(a < 80)}\]


Now, to prove that the original requirement is contradictory, we just need to prove that this formula is always false:
\[\glob{[8,29]}(\even{[0,15]}(a \geq 80)) \wedge \glob{[10,25]}(a < 80)\]
By applying the following rule $\glob{[i+1,j]} (\even{[l,m]} \phi) \leftrightarrow \even{[l+i+1,m+i+1]} (\phi) \wedge \glob{[i,j]}(\even{[l,m]} \phi)   $  three times:
\[\glob{[11,29]}(\even{[0,15]}(a \geq 80)) \wedge \even{[8,23]}(a \geq 80) \wedge \even{[9,24]}(a \geq 80) \wedge \textcolor{red}{\even{[10,25]}(a \geq 80) \wedge \glob{[10,25]}(a < 80)}\]
The expression in red is always false because the value of $a$ is required to be at least greater than or equal to 80 and, at the same time, always less than 80 in the time interval $[10,25]$.
This contradiction makes the original requirement false for each possible signal of $a$.

\paragraph{Consistency checking}
Consistency checking is a relevant issue in the context of requirement engineering. In fact, many software errors are introduced during the requirement phase of a project as a result of poor validation of the set of requirements.  Fixing such errors, especially when they are detected late in the development cycle of the project, is difficult and costly. Furthermore, there is evidence that errors in functional and interface requirements are often the main source of critical safety software errors that could potentially cause serious accidents~\cite{Heitmeyer96}. 
When the set of requirements becomes large and complex, this task becomes impossible to perform manually.  Verifying automatically whether a set of requirements is consistent amounts to checking the satisfiability of the requirement set, but it poses a challenge, especially when dealing with large-scale CPS.
To make the problem of STL satisfiability feasible, without losing the level of expressiveness needed to describe real cyber-physical systems, we focus our
work on bounded-time discrete-time STL. Although these limitations may seem restrictive, they  are not significant on a practical level.  Indeed, a finite time-horizon is an acceptable constraint, provided that the chosen horizon is long enough to assess the satisfiability of the requirements in a real-world scenario (i.e., the flight time of an aircraft, the length of the mission of a drone). As for the limitation to discrete-time STL, it is consistent with the discretization of continuous behaviors that occurs when using digital hardware.

A common approach to solving the satisfability problem for discrete-time STL is to encode the formula as a satisfiability modulo theories (SMT) problem.
For example, solving the satisfiability problem for the formula $\glob{[2,3]}(\even{[0,1]}(a \geq 80))$ is equivalent to solving this SMT problem instance
$((a_2 \geq 80) \vee (a_3 \geq 80)) \wedge ((a_3 \geq 80) \vee (a_4 \geq 80))$.
However, a clear issue of this approach is that it requires a number of variables that grows with the length of the considered time intervals (i.e., the formula horizon) and time step in the temporal operators.

\paragraph{Our approach} In this paper we consider a novel alternative approach.  We solve the consistency problem for a set of discrete-time STL requirements with bounded-time temporal operators by constructing a tree-shaped, one-pass tableau. The idea behind our tableau is that in order to prove the consistency of a (discrete-time) requirement set within a finite time horizon, one does not need to check the consistency of requirements at every time instant. Indeed, requirements with non-intersecting time domains are trivially consistent with each other and for requirements with intersecting time domains the check can be performed only for those time instants in which a new constraint is imposed, while the time instants in between can be skipped. In practice, when in consecutive nodes of the tableau the propositional part of the temporal formula extracted by the decomposition of the temporal operators is the same, the evaluation of the consistency is performed only once.
Given the structure of the algorithm, which allows for temporal jumps, the method is particularly suitable to check the consistency of requirements that need to be upheld for a long time horizon (e.g., the flight time of an aircraft).

\paragraph{Our contribution} We summarize the contributions of this paper as follows:
\begin{itemize}
\item We introduce a novel method, based on a tree-shaped, one-pass tableau, to decide the satisfiability problem for discrete-time STL with bounded temporal operators. This method can be applied in different applications, from checking the consistency of a set of CPS requirements to generating examples of signals that satisfy a formula or improving specification mining techniques~\cite{BartocciMNN22}.  We prove both the soundness and the completeness of our method.
\item We provide several heuristics and optimizations to considerably improve the efficiency of our approach in practice. These heuristics take advantage of several decomposition properties of bounded temporal operators such as Globally and Finally in the discrete-time setting.
\item We implement our approach in a prototype tool written in Python. We successfully applied our method to solve the consistency checking problem on several benchmarks, including a realistic set of requirements provided by one of our industry partners.
\item We evaluated our method by comparing it against an SMT-based approach, using SMT solvers for satisfiability checking.  We also compare our method with another approach proposed in the literature to check the satisfiability of the requirements
in Mission-time LTL~\cite{LiVR22}, a ``bounded variant of Metric Temporal Logic (MTL) over naturals''. To compare with their approach, we restrict our language to accept only Boolean predicates. 
\end{itemize}

\paragraph{Paper organization}
The structure of the paper is as follows:
Sec. \ref{sec:rw} presents a review of the literature of relevant works on the proposed topic, while Sec. \ref{sec:stl} briefly presents STL.
Sec. \ref{sec:tableau} describes the tableau method first through some examples to illustrate the idea behind the algorithm, and then with a formal definition. Proofs of its correctness, completeness, and complexity are also included.
Sec. \ref{sec:heur} summarizes the heuristics and optimizations devised to make the algorithm more efficient. Sec. \ref{sec:exp} contains the experimental evaluation of the tableau against different benchmarks, while Sec. \ref{sec:ablation} presents an ablation analysis of the proposed optimizations. The last section outlines conclusions and future work.

\section{Related Work}
\label{sec:rw}
Signal Temporal Logic (STL)~\cite{MalerN04} is a formal language that has become increasingly popular due to its expressiveness in describing properties of continuous signals.
STL extends Metric Interval Temporal Logic (MITL) restricted to temporal operators in bounded non-singleton time intervals ($MITL_{[a,b]}$)~\cite{AlurFH96}, enriching it with predicates over real-valued signals.
Despite being widely used in the context of cyber-physical systems, STL still poses interesting challenges, in particular solving the satisfiability problem, which tries to determine whether there exists a set of signals that satisfy a given specification.
Deciding satisfiability for STL is equivalent to deciding the satisfiability of ($MITL_{[a,b]}$) and is EXPSPACE-complete~\cite{AlurH93}.
Despite its complexity, solving the STL satisfiability problem is instrumental in many domains beyond consistency check, such as model predictive control~\cite{RamanDMMSS14} and automatic trace generation~\cite{PrabhakarLK18}. Since STL is an extension of a fragment of MITL, a related relevant work to mention is the SMT encoding proposed in~\cite{BersaniRP15,BersaniRP16} to solve the satisfiability problem for different continuous-time temporal logics such as MITL and Quantified Temporal Logic (QTL)~\cite{HirshfeldR05}.  In contrast with these works, our satisfiability checking procedure supports discrete-time STL with Boolean predicates and inequalities over continuous values (which are not included in MITL) as basic propositions.

\citet{Dokhanchi} present a framework for debugging STL specifications
covering validity, vacuity, and redundancy checking.
They translate STL formulas into MITL by replacing real constraints with Boolean propositions through a brute-force approach.
Then, they translate MITL formulas into CLTL-over-clocks and use qtlSolver and Zot~\cite{BersaniRP13} to check satisfiability.
They show that satisfiability of a restricted fragment of MITL is reducible to LTL satisfiability, which can be solved faster with existing tools~\cite{CimattiCGGPRST02,RozierV10}.
While we do not consider vacuity and redundancy in this paper,
our tool for STL satisfiability checking could be used as a backend for the algorithms by \citet{Dokhanchi}.
Our approach has the advantage that it does not need the costly preprocessing to translate STL formulas into MITL.
\citet{RamanDMMSS14} present an SMT-based encoding of discrete-time STL to address the control synthesis problem in model predictive control under an STL specification.
In Sec.~\ref{sec:exp}, we compare our tree-shaped tableau approach with their SMT-based encoding.

\citet{LiVR22} address the satisfiability problem for requirements expressed in Mission-time LTL (MLTL)~\cite{ReinbacherRS14}.
Although MLTL specifications can also be encoded in discrete-time STL, the two logics are not equivalent in expressivity because MLTL supports only Boolean signals.
\citet{LiVR22} propose three methods for satisfiability checking:
the first converts the MLTL formula into LTL or LTL-on-finite-traces (LTLf), and exploits existing solvers for these logics;
the second translates the MLTL formula into an SMV~\cite{0071856} Boolean transition system to be checked with the NuXmv~\cite{CavadaCDGMMMRT14} model checker;
the third encodes the MLTL formula into a first-order logic (FOL) formula and solves it using the Z3~\cite{Z3} Theorem Prover.
Their results show that the second approach performs well for formulas with small temporal intervals,
while the third scales better for larger intervals. 
 Our approach, in contrast, offers an alternative by using a tableau method that can skip time instants when they are not necessary to prove the inconsistencies, which can provide benefits in performance.
 In the experimental section, we compare our approach with the FOL encoding proposed by~\citet{LiVR22} using their benchmarks.  
 
 Other approaches~\cite{BaeLee19,LeeYB21,roehm} demonstrate how to perform STL bounded model checking of hybrid/dynamical systems modeling cyber-physical systems (CPS) as solving, using SMT-solvers, a satisfiability problem of ``a first-order logic formula over reals''~\cite{LeeYB21}.
 \citet{LeeYB21} also consider the problem of satisfiability of STL for the continuous-time setting.
 In contrast, our approach considers only the discrete-time interpretation of STL.


Tableau methods have been used extensively to assess the satisfiability of Linear Temporal Logic (LTL)~\cite{Pnueli77} specifications, most of them being graph-shaped and two-pass~\cite{Wolp,Licht}
(i.e., first a graph representing all the paths of the transition system describing the formula is constructed and then such graph is traversed once more to find a model for the input formula).
A tree-shaped, one-pass tableau for LTL has been developed by \citet{Rey},
and extended to LTL with past by \citet{GiganteMR17}.
This kind of tableau has the advantage of simultaneously building a branch of the tree and deciding whether it is an accepting branch.
Reynold's tableau has been implemented by \citet{BertelloGMR16} and later parallelized \cite{McCabeDanstedR17}.
A one-pass, tree-shaped tableau for another real-time temporal logic,
Timed Propositional Temporal Logic (TPTL)~\cite{AlurH94}, has been proposed by \citet{GeattiGMR21}.
More recently, \citet{GeattiGMV24} devised a SAT encoding of Reynold's tableau to check the satisfiability of LTL and LTL interpreted on finite traces.

Our method takes inspiration from \citet{Rey} to design a single-pass, tree-shaped tableau for STL.
While Reynolds' \textsf{PRUNE} rule exploits node repetition to prune redundant tree branches,
our tableau exploits it to skip redundant portions of the tree arising from large time intervals.
Additionally, our tableau employs an SMT solver to check satisfiability of linear constraints in temporal formulas.
To the best of our knowledge, there are no other tableau methods to assess the satisfiability of discrete-time STL with bounded-time operators.

\section{Signal Temporal Logic}
\label{sec:stl}

STL~\cite{MalerN04} is defined on a temporal domain $\mathbb{T}$, which is usually real numbers.
In this paper, however, we consider $\mathbb{T} = \mathbb{N}$.
Let $S = \{ x_1, \dots, x_n \}$ be a set of signal variables:
a \emph{signal} is a function $w : \mathbb{T} \rightarrow \mathbb{R}^n$.
We define the projection $w_R$ of a signal $w$ on a set of variables $R \subseteq S$ in the usual way.
STL formulas follow the syntax:
\[
\varphi := \top \mid f(R) = k \mid f(R) > k \mid \neg \varphi \mid \varphi_1 \lor \varphi_2
  \mid \until{I}{\varphi_1}{\varphi_2}
\]
where $R \subseteq S$ is any set of signal variables,
$f : \mathbb{R}^{|R|} \rightarrow \mathbb{R}$ any linear function of signal values,
$k \in \mathbb{Q}$ any rational constant,
and $I = [a,b]$ with $a, b \in \mathbb{T}$.

The semantics of STL are given with respect to a signal $w$ and a time instant $t \in \mathbb{T}$ as follows:
\begin{align*}
&(w,t) \models \top && \text{(always true)} \\
&(w,t) \models f(R) = k     & \text{iff } & f(w_R(t)) = k \\
&(w,t) \models f(R) > k     & \text{iff } & f(w_R(t)) > k \\
&(w,t) \models \neg \varphi & \text{iff } & (w,t) \not\models  \varphi \\
&(w,t) \models \varphi_1 \lor \varphi_2 & \text{iff } & (w,t) \models \varphi_1 \text{ or } (w,t) \models \varphi_2 \\
&(w,t) \models \until{[a,b]}{\varphi_1}{\varphi_2}  & \text{iff } & \exists t' \in [t+a, t+b] : (w, t') \models \varphi_2
\text{ and } \forall t'' \in [t, t'] : (w,t'') \models \varphi_1
\end{align*}
Note that it is possible to encode \emph{atomic propositions} as signals by defining a variable $x_p$
for each proposition $p$, and asserting $p$ as $x_p = 1$.

We employ derived propositional operators ($\land$, $\implies$)
and derived temporal operators \emph{eventually} (or \emph{finally}) $\even{I}$
and \emph{always} (or \emph{globally}) $\glob{I}$ with the usual semantics.
We define the \emph{release} operator $\releaseop$,
and two \emph{strict} variants of the until ($\suntilop$) and release ($\sreleaseop$) operators with the following semantics:
\begin{align*}
&(w,t) \models \release{[a,b]}{\varphi_1}{\varphi_2} & \text{iff } & \forall t' \in [t+a, t+b] : (w, t') \models \varphi_2 
\text{ or } \exists t'' \in [t, t'] : (w,t'') \models \varphi_1 \\
&(w,t) \models \suntil{[a,b]}{\varphi_1}{\varphi_2} & \text{iff } & \exists t' \in [t+a, t+b] : (w, t') \models \varphi_2 
\text{ and } \forall t'' \in [t+a, t'-1] : (w,t'') \models \varphi_1 \\
&(w,t) \models \srelease{[a,b]}{\varphi_1}{\varphi_2} & \text{iff } & \forall t' \in [t+a, t+b] : (w, t') \models \varphi_2 
\text{ or } \exists t'' \in [t+a, t'-1] : (w,t'') \models \varphi_1
\end{align*}

A formula $\phi$ is in \emph{strict normal form} if it only contains
the $\lor$, $\land$, $\suntilop$, and $\sreleaseop$ operators in positive form,
and possibly negated terms and $\top$.
Any formula can be transformed into one in strict normal form by applying the following substitutions recursively:
\begin{align*}
&\begin{aligned}
&\neg (\varphi_1 \lor \varphi_2) \equiv \neg \varphi_1 \land \neg \varphi_2 &
&\neg (\varphi_1 \land \varphi_2) \equiv \neg \varphi_1 \lor \neg \varphi_2 \\
&\neg (\suntil{I}{\varphi_1}{\varphi_2}) \equiv \srelease{I}{\neg \varphi_1}{\neg \varphi_2} &
&\neg (\srelease{I}{\varphi_1}{\varphi_2}) \equiv \suntil{I}{\neg \varphi_1}{\neg \varphi_2} \\
&\even{I} \varphi \equiv \suntil{I}{\top}{\varphi} &
&\glob{I} \varphi \equiv \neg \even{I} \neg \varphi \\
\end{aligned} \\
&\until{[a,b]}{\varphi_1}{\varphi_2} \equiv \glob{[0,a]} \varphi_1 \land \suntil{[a,b]}{\varphi_1}{(\varphi_1 \land \varphi_2)} \\
&\release{[a,b]}{\varphi_1}{\varphi_2} \equiv \even{[0,a]} \varphi_1 \lor \srelease{[a,b]}{(\varphi_1 \land \varphi_2)}{\varphi_2}
\end{align*}

Given any formula $\varphi = \anyunop{[a,b]} \varphi_1$ with $\mathsf{A} \in \{\mathsf{F}, \mathsf{G}\}$,
or $\varphi = \anybinop{[a,b]}{\varphi_1}{\varphi_2}$ with $\mathsf{B} \in \{\untilop, \releaseop, \suntilop, \sreleaseop\}$,
we define $I(\varphi) = [a,b]$, $I_\ell(\varphi) = a$ and $I_u(\varphi) = b$.

Given a formula $\varphi$, we define its propositional closure $\pcl(\varphi)$ as follows:
\begin{itemize}
    \item if $\varphi = \neg \varphi_1$, then $\pcl(\varphi) = \{\varphi_1\}$;
    \item if $\varphi = \varphi_1 \land \varphi_2$ or $\varphi = \varphi_1 \lor \varphi_2$, then $\pcl(\varphi) = \{\varphi_1, \varphi_2\}$;
    \item if $\varphi = \anyunop{I} \varphi_1$ with $\mathsf{A} \in \{\mathsf{F}, \mathsf{G}\}$,
    or $\varphi = \anybinop{I}{\varphi_1}{\varphi_2}$ with $\mathsf{B} \in \{\untilop, \releaseop, \suntilop, \sreleaseop\}$,
    then $\pcl(\varphi) = \{\varphi\}$.
\end{itemize}

\section{Tree-Shaped Tableau}
\label{sec:tableau}

We introduce the tableau in two phases:
in Sec.~\ref{sec:basic-tableau} a \emph{basic} tableau,
that is theoretically sound but slow;
in Sec.~\ref{sec:jump} an optimization that, as we shall see in Sec.~\ref{sec:exp},
improves performances substantially.

\subsection{The Basic Tableau}
\label{sec:basic-tableau}

We present a one-pass, tree-shaped tableau for STL satisfiability on a discrete time domain.
A tableau is a tree in which each node $u$ is labeled with a set of formulas $\Gamma(u)$, and a time counter $t(u)$.
\begin{wrapfigure}[18]{r}
\footnotesize
\begin{forest}
  for tree={
    myleaf/.style={label=below:{\strut#1}}
  },
  [{$u_0: \glob{[1,2]} x > 0 \mid \mathbf{0}$}
    [{$u_1: \glob{[1,2]} x > 0 \mid \mathbf{1}$},
     edge label={node[midway,left,font=\scriptsize\sffamily]{STEP}},
      [{$u_2: x > 0, \mglob{[1,2]} x > 0 \mid \mathbf{1}$},
       edge label={node[midway,left,font=\scriptsize\sffamily]{G}},
        [{$u_3: \glob{[1,2]} x > 0 \mid \mathbf{2}$},
          edge label={node[midway,left,font=\scriptsize\sffamily]{STEP}},
         [{$u_4: x > 0, \mglob{[1,2]} x > 0 \mid \mathbf{2}$},
           edge label={node[midway,left,font=\scriptsize\sffamily]{G}},
           [{$u_5: \emptyset \mid \mathbf{3}$},
             edge label={node[midway,left,font=\scriptsize\sffamily]{STEP}},
             myleaf={\cmark\ \textsf{EMPTY}}]
         ]
        ]
      ]
    ]
  ]
\end{forest}
\caption{Basic tableau for $\glob{[1,2]} x > 0$.
    Each node's time counter is shown in bold face.}
\label{fig:G-example}
\end{wrapfigure}
The root $u_0$ is labeled with a formula $\phi$ to be checked for satisfiability.
The tree is generated by iteratively applying to each node a set of rules that generate its children,
until either a branch is \emph{accepted}, proving the existence of a model for $\phi$%
---i.e., a signal that satisfies $\phi$---%
or all branches are \emph{rejected}, proving that $\varphi$ is unsatisfiable.

\subsubsection{Examples}
We first illustrate informally the tableau built for formula $\phi = \glob{[1,2]} x > 0$,
shown in Fig.~\ref{fig:G-example}.
The root node $u_0$ is labeled with $\phi$, with time counter $t(u_0) = 0$.
Since the lower bound of the $\globop$ operator is 1,
there is nothing to do yet, and we apply the \textsf{STEP} rule to proceed with
the next time instant of the satisfying signal that we are trying to build.
The new node $u_1$ has time $t(u_1) = 1$.
For $\phi$ to be satisfied, $x > 0$ must hold at $t = 1$.
We enforce this constraint by applying an \emph{expansion} rule,
which adds a child $u_2$ to $u_1$, still with time $t(u_2) = 1$:
$u_2$ is labeled with $x > 0$, but also with $\phi$, because it still has not been satisfied.
$\phi$ is, however, \emph{marked} ($\marked{\globop}$),
to state that more obligations have to be met in the future for it to be satisfied.

\begin{wrapfigure}[17]{l}
\footnotesize
\begin{forest}
  for tree={
    myleaf/.style={label=below:{\strut#1}}
  },
  [{$u_0: \even{[0,2]} x < 5 \mid \mathbf{0}$}
    [{$u_1: x < 5 \mid \mathbf{0}$}, myleaf={\cmark\ \textsf{EMPTY}},
     edge label={node[midway,left,xshift=20pt,font=\scriptsize\sffamily]{F}}
    ],
    [{$u_2: \meven{[0,2]} x < 5 \mid \mathbf{0}$},
      [{$u_3: \even{[0,2]} x < 5 \mid \mathbf{1}$},
        edge label={node[midway,left,xshift=20pt,font=\scriptsize\sffamily]{STEP}},
        [{$x < 5 \mid \mathbf{1}$}, myleaf={\cmark\ \textsf{EMPTY}},
          edge label={node[midway,left,xshift=20pt,font=\scriptsize\sffamily]{F}}
        ],
        [{$\meven{[0,2]} x < 5 \mid \mathbf{1}$},
          [{$\even{[0,2]} x < 5 \mid \mathbf{2}$},
            edge label={node[midway,left,xshift=20pt,font=\scriptsize\sffamily]{STEP}},
            [{$x < 5 \mid \mathbf{2}$}, myleaf={\cmark\ \textsf{EMPTY}},
              edge label={node[midway,left,xshift=20pt,font=\scriptsize\sffamily]{F}}
            ],
            [{$\meven{[0,2]} x < 5 \mid \mathbf{2}$}, myleaf={\xmark\ \textsf{UNTIL}}],
          ]
        ]
      ]
    ]
  ]
\end{forest}
\caption{Basic tableau for $\even{[0,2]} x < 5$.}
\label{fig:F-example}
\end{wrapfigure}
Since $u_2$ contains a marked temporal operator, and we can apply no more expansion rules to it,
we call it a \emph{poised} node, and we apply the \textsf{STEP} rule to generate $u_3$:
$\globop$ is unmarked, and $x > 0$ is dropped, because we are about to construct a new instant of the signal.
The expansion rule, however, extracts it again, because $\phi$ requires $x > 0$ to hold at $t = 2$, too.

The last application of the \textsf{STEP} rule creates an empty node because the $\globop$ operator
has \emph{exhausted} its interval, i.e., the time of the current node is greater than its upper bound 2.
The \textsf{EMPTY} rule \emph{accepts} (\cmark) the resulting node $u_5$
because it contains no more temporal operators,
showing that we have found a signal that satisfies $\phi$.
Such a signal can be obtained by taking the tree branch from $u_0$ to $u_5$,
and finding for each time instant $t$ a variable assignment that satisfies real-valued constraints
in the poised node labeled with $t$ (i.e., the node to which we applied \textsf{STEP}).
Thus, a satisfying signal must satisfy $x>0$ for $t = 1, 2$.

Fig.~\ref{fig:F-example} shows the tableau for $\psi = \even{[0,2]} x < 5$.
At time $t = 0$, $\psi$ can be satisfied if, either, $x < 5$ at $t = 0$,
or $x < 5$ holds in a future instant.
We represent these two cases by applying the $\evenop$ expansion rule,
which adds two children $u_1$ and $u_2$ to $u_0$, both still with time $t(u_1) = t(u_2) = 0$:
$u_1$ is only labeled with $x < 5$, meaning that the requirement encoded by $\psi$ has been satisfied,
while $u_2$ is labeled with $\psi$, which is now \emph{marked} ($\marked{\evenop}$) because
it represents an obligation to be satisfied in the future (i.e., in one of $u_2$'s descendants).
$u_1$ has no more temporal operators and represents a signal that satisfies $\psi$:
it is a trivial signal made of one time instant, in which $x < 5$ holds.
The \textsf{EMPTY} rule accepts $u_1$.

We have proved that $\phi$ is satisfiable, and we could stop building the tableau.
However, we go on to illustrate how it works.
Since $u_2$ contains only marked temporal operators, we apply the \textsf{STEP} rule to proceed with the next time instant.
The new node $u_3$, only child of $u_2$, has $t(u_3) = 1$,
and contains temporal operators that were marked in $u_2$
which represent requirements that still have to be satisfied.
We keep building the tableau in the same way.
Applications of the expansion rule always create a left child that is accepted, 
and a right child that has further descendants, until we reach the rightmost node with $t = 2$.
This node contains a marked version of $\psi$, meaning that its requirement $x < 5$ should be satisfied at a time $t > 2$.
However, $\phi$ has \emph{exhausted} its interval: according to its semantics,
$x < 5$ must hold at a time $t \leq 2$.
Due to this contradiction, this node cannot lead to a signal that satisfies $\phi$ and
is \emph{rejected} (\xmark) by the \textsf{UNTIL} rule ($\evenop$ is a special case of $\untilop$).

\subsubsection{Formal definition}
\label{sec:basic-tableau-formal}
\color{black}
We assume w.l.o.g.\ that formula $\phi$ to be checked is in strict normal form.

\emph{Expansion} rules are the first to be applied to a new node,
and they create children in which one formula from the parent node is replaced with simpler formulas,
based on expansion laws for temporal operators.
Some of such rules may \emph{mark} an operator (e.g., $\marked{\suntilop}$),
signaling that its satisfaction has been postponed to a future time instant.
Table~\ref{tab:expansion-rules} lists all expansion rules.
In this section, we only consider rules in categories P and S.

When an expansion rule extracts nested temporal operators,
their intervals are updated according to their \emph{temporal expansion}.
The temporal expansion of a formula $\varphi$ is defined as
\begin{align*}
&\exp^t(\top) = \top \\
&\exp^t(f(R) \bowtie k) = f(R) \bowtie k && \text{with } \mathord{\bowtie} \in \set{>, =}\\
&\exp^t(\neg \varphi_1) = \neg \exp^t_I(\varphi_1) \\
&\exp^t(\varphi_1 \circ \varphi_2) = \exp^t(\varphi_1) \circ \exp^t(\varphi_2) && \text{with } \circ \in \set{\land, \lor} \\
&\exp^t(\anyunop{[a,b]}{\varphi}) = \anyunop{[a+t,b+t]}{\varphi} && \text{with } \mathsf{A} \in \set{\evenop, \globop} \\
&\exp^t(\anybinop{[a,b]}{\varphi_1}{\varphi_2}) = \anybinop{[a+t,b+t]}{\varphi_1}{\varphi_2} && \text{with } \mathsf{B} \in \set{\suntilop, \sreleaseop}
\end{align*}
\vspace{-1ex}

\begin{table}
\caption{Expansion rules.
A node $u$ with $\phi \in \Gamma(u)$ that satisfies the condition in the second column
is expanded into one or two children $u_1$ and $u_2$ such that
$t(u_1) = t(u_2) = t(u)$, $\Gamma(u_1) = (\Gamma(u) \setminus \{\phi\}) \cup \Gamma_\phi(u_1)$
and $\Gamma(u_2) = (\Gamma(u) \setminus \{\phi\}) \cup \Gamma_\phi(u_2)$.
$u_2$ is only created if $\Gamma_\phi(u_2) \neq \emptyset$.}
\label{tab:expansion-rules}
\centering
\begin{tabular}{c l l l l}
\toprule
Category & $\phi \in \Gamma(u)$ & cond. & $\Gamma_\phi(u_1)$ & $\Gamma_\phi(u_2)$ \\
\midrule
\multirow{2}{*}{P}
& $\varphi_1 \lor \varphi_2$ & & $\varphi_1$ & $\varphi_2$  \\
& $\varphi_1 \land \varphi_2$ & & $\varphi_1, \varphi_2$ & \\
\midrule
\multirow{4}{*}{S}
&  $\even{I} \varphi$ &  $t(u) \in I$ &  $\exp^{t(u)}(\varphi)$ &  $\meven{I} \varphi$ \\
& $\glob{I} \varphi$ & $t(u) \in I$ &  $\exp^{t(u)}(\varphi), \mglob{I} \varphi$ \\
& $\suntil{I}{\varphi_1}{\varphi_2}$ &  $t(u) \in I$ & $\exp^{t(u)}(\varphi_2)$ &  $\exp^{t(u)}(\varphi_1), \msuntil{I}{\varphi_1}{\varphi_2}$ \\
& $\srelease{I}{\varphi_1}{\varphi_2}$ &  $t(u) \in I$ &  $\exp^{t(u)}(\varphi_1 \land \varphi_2)$ &  $\exp^{t(u)}(\varphi_2), \msrelease{I}{\varphi_1}{\varphi_2}$ \\
\midrule
\multirow{5}{*}{J}
& $\even[J]{I} \varphi$ & $t(u) \in I$ &  $\exp^{t(u)}_I(\varphi)$ &  $\meven[J]{I} \varphi$ \\
&  $\glob[J]{I} \varphi$ & $t(u) \in I$ &  $\exp^{t(u)}_I(\varphi), \mglob[J]{I} \varphi$ \\
& $\suntil[J]{I}{\varphi_1}{\varphi_2}$ & $t(u) \in I$ & $\exp^{t(u)}_I(\varphi_2)$ & $\exp^{t(u)}_I(\varphi_1), \msuntil[J]{I}{\varphi_1}{\varphi_2}$ \\
& $\srelease[J]{I}{\varphi_1}{\varphi_2}$ & $t(u) \in I$ & $\exp^{t(u)}_I(\varphi_1 \land \varphi_2)$ & $\exp^{t(u)}_I(\varphi_2), \msrelease[J]{I}{\varphi_1}{\varphi_2}$ \\
\bottomrule
\end{tabular}
\end{table}

After repeated application of expansion rules, each branch ends with a node
that only contains terms and operators that either are marked,
or such that the lower bound of their interval is greater than the node's current time.
We call such nodes \emph{poised}.
We thus proceed by trying to build the next instant of the model signal, by applying the \textsf{STEP} rule.
\begin{description}
\item[\textsf{STEP}]
Let $u$ be a poised node.
If $\Gamma(u)$ contains any (marked or unmarked) temporal operators,
$u$ has one child $u'$ such that $t(u') = t(u) + 1$, and
\begin{align*}
\Gamma(u') =
\ & \set{\anyunop{I}{\varphi} \in \Gamma(u) \mid \mathsf{A} \in \set{\evenop, \globop}}
\cup \set{\anybinop{I}{\varphi_1}{\varphi_2} \in \Gamma(u) \mid \mathsf{B} \in \set{\suntilop, \sreleaseop}} \\
&\cup \set{\anyunop{[a,b]}{\varphi} \mid \stlunop{\marked{\mathsf{A}}}{[a,b]}{\varphi} \in \Gamma(u) \land \mathsf{A} \in \set{\evenop, \globop} \land t(u) < b} \\
&\cup
\set{\anybinop{[a,b]}{\varphi_1}{\varphi_2} \mid \stlbinop{\marked{\mathsf{B}}}{[a,b]}{\varphi_1}{\varphi_2} \in \Gamma(u) \land \mathsf{B} \in \set{\suntilop, \sreleaseop} \land t(u) < b}
\end{align*}
\end{description}

Let $u$ be a poised node, and $\overline{u}$ the branch from the root $u_0$ to $u$.
If $\overline{u}$ encodes a signal that satisfies $\phi$, $u$ is \emph{accepted};
if $\overline{u}$ cannot encode a prefix of any satisfying signal%
---i.e., there is no way to reach an accepted node by further developing the tree from $u$---%
then $u$ is \emph{rejected}.
The following \emph{termination} rules are applied to all poised nodes right before the \textsf{STEP} rule,
and they can accept or reject them.
\begin{description}
\item[\textsf{FALSE}] If $\neg \top \in \Gamma(u)$, node $u$ is rejected.

\item[\textsf{LOCALLY-UNSAT}]
If the set $\set{f(R) = k, f(R) > k \in \Gamma(u)}$ is inconsistent, $u$ is rejected.

\item[\textsf{UNTIL}]
If $\msuntil{[a,b]}{\varphi_1}{\varphi_2} \in \Gamma(u)$ or $\meven{[a,b]} \varphi \in \Gamma(u)$ with $b = t(u)$, then the node is rejected.

\item[\textsf{EMPTY}]
If $\Gamma(u)$ contains no temporal operators and no rejecting rules apply,
then $u$ is accepted.
\end{description}
The consistency of sets $\set{f(R) = k, f(R) > k \in \Gamma(u)}$ is checked by a solver for linear real arithmetic.

\subsection{The \textsf{JUMP} Rule}
\label{sec:jump}

\begin{wrapfigure}{r}
\tiny
\begin{forest}
  for tree={
    myleaf/.style={label=below:{\strut#1}}
  },
  [{$\glob{[0,10]} x > 5, \even{[0,11]} x < 0 \mid \mathbf{0}$}
    [{$\mglob{[0,10]} x > 5, \even{[0,11]} x < 0, x > 5 \mid \mathbf{0}$},
     edge label={node[midway,left,font=\scriptsize\sffamily]{G}}
      [{$\mglob{[0,10]} x > 5, x < 0, x > 5 \mid \mathbf{0}$},
       edge label={node[midway,right,xshift=20pt,font=\scriptsize\sffamily]{F}},
       myleaf={\xmark \ (\textsf{LOCALLY-UNSAT})}],
      [{$\mglob{[0,10]} x > 5, \meven{[0,11]} x < 0, x > 5 \mid \mathbf{0}$}
        [{$\glob{[0,10]} x > 5, \even{[0,11]} x < 0 \mid \mathbf{10}$},
         edge label={node[midway,left,font=\scriptsize\sffamily]{JUMP}}
          [{$\mglob{[0,10]} x > 5, \even{[0,11]} x < 0, x > 5 \mid \mathbf{10}$},
           edge label={node[midway,left,font=\scriptsize\sffamily]{G}}
            [{$\mglob{[0,10]} x > 5, x < 0, x > 5 \mid \mathbf{10}$},
             edge label={node[midway,right,xshift=20pt,font=\scriptsize\sffamily]{F}},
             myleaf={\xmark \ (\textsf{LOCALLY-UNSAT})}],
            [{$\mglob{[0,10]} x > 5, \meven{[0,11]} x < 0, x > 5 \mid \mathbf{10}$}
              [{$\even{[0,11]} x < 0 \mid \mathbf{11}$},
               edge label={node[midway,right,font=\scriptsize\sffamily]{STEP}},
                [{$x < 0 \mid \mathbf{11}$},
                 edge label={node[midway,right,xshift=10pt,font=\scriptsize\sffamily]{F}},
                 myleaf={\cmark \ (\textsf{EMPTY})}],
                [{$\meven{[0,11]} x < 0 \mid \mathbf{11}$}, myleaf={\xmark \ (\textsf{UNTIL})}]
              ]
            ]
          ]
        ]
      ]
    ]
  ]
\end{forest}
\caption{Tableau for formula $\glob{[0,10]} x > 5 \land \even{[0,11]} x < 0$.}
\label{fig:tableau-example}
\end{wrapfigure}
Consider, again, the example of Fig.~\ref{fig:G-example}.
The labels of nodes $u_1$ and $u_2$ are repeated identically in $u_2$'s descendants,
except for their time instant: these are \emph{repeated} nodes.
The same can be said of nodes $u_0$, $u_1$, and $u_2$ in Fig.~\ref{fig:F-example}.
For formulas containing temporal operators with long intervals,
the basic tableau contains many repeated nodes, leading to unnecessary redundancy.
For instance, the basic tableau for $\phi' = \glob{[0,20]}{(x > 0)}$
consists of the repetition of nodes $u_1$ and $u_2$ for 20 times.
This redundancy is unnecessary: once we have created $u_1$ and $u_2$,
we already know what the rest of the tableau will look like.
To avoid redundancy and speed up the construction of the tableau,
we introduce the \textsf{JUMP} rule, which we apply instead of the \textsf{STEP} rule
to ``skip'' nodes with repeated labels, making a temporal jump by more than one time unit.

\subsubsection{Example}

Fig.~\ref{fig:tableau-example} shows the
tableau for formula $\glob{[0,10]} x > 5 \land \even{[0,11]} x < 0$.
First, we apply to the root the expansion rule for the $\globop$ operator,
which extracts $x > 5$ and marks the $\globop$ operator.
Then, the rule for $\evenop$ creates two children:
one (left) in which $\even{[0,11]} x < 0$ is satisfied, as $x < 0$ holds;
and one (right) in which its satisfaction is postponed, and the $\evenop$ operator is left marked.
The left child is immediately rejected by the \textsf{LOCALLY-UNSAT} rule,
because the constraints $x < 0$ and $x > 5$ are contradictory.
The right child is a poised node, and we can apply the \textsf{STEP} rule to it.

\begin{wrapfigure}{r}
\begin{adjustbox}{max width=\textwidth}
\tiny
\begin{forest}
  for tree={
    myleaf/.style={label=below:{\strut#1}}
  },
  [{$\glob[J_0]{[0,10]} \even{[0,1]} p \mid \mathbf{0}$}
    [{$\mglob[J_0]{[0,10]} \even{[0,1]} p, \even[J_1]{[0,1]} p \mid \mathbf{0}$},
       edge label={node[midway,left,font=\scriptsize\sffamily]{G}},
      [{$\mglob[J_0]{[0,10]} \even{[0,1]} p, p \mid \mathbf{0}$}, myleaf={\dots},
        edge label={node[midway,right,xshift=20pt,font=\scriptsize\sffamily]{F}},
      ],
      [{$\mglob[J_0]{[0,10]} \even{[0,1]} p, \meven[J_1]{[0,1]} p \mid \mathbf{0}$}
        [{$\glob[J_0]{[0,10]} \even{[0,1]} p, \even[J_1]{[0,1]} p \mid \mathbf{1}$},
         edge label={node[midway,left,font=\scriptsize\sffamily]{STEP}}
          [{$\mglob[J_0]{[0,10]} \even{[0,1]} p, \even[J_1]{[0,1]} p, \even[J_1]{[1,2]} p \mid \mathbf{1}$},
            edge label={node[midway,left,font=\scriptsize\sffamily]{G}},
            [{$\mglob[J_0]{[0,10]} \even{[0,1]} p, p, \even[J_1]{[1,2]} p \mid \mathbf{1}$},
              edge label={node[midway,right,xshift=20pt,font=\scriptsize\sffamily]{F}},
              [{$\mglob[J_0]{[0,10]} \even{[0,1]} p, p \mid \mathbf{1}$}, myleaf={\dots},
                edge label={node[midway,right,xshift=20pt,font=\scriptsize\sffamily]{F}}
              ],
              [{$u_1: \mglob[J_0]{[0,10]} \even{[0,1]} p, p, \meven[J_1]{[1,2]} p \mid \mathbf{1}$}
                [{\dots}
                  [{$u_{10}: \mglob[J_0]{[0,10]} \even{[0,1]} p, p, \meven[J_1]{[10,11]} p \mid \mathbf{10}$}
                    [{$\even[J_1]{[10,11]} p \mid \mathbf{11}$},
                      edge label={node[midway,left,font=\scriptsize\sffamily]{STEP}},
                      [{$p \mid \mathbf{11}$}, myleaf={\cmark \ (\textsf{EMPTY})},
                        edge label={node[midway,right,xshift=5pt,font=\scriptsize\sffamily]{F}}
                      ]
                      [{$\meven[J_1]{[10,11]} p \mid \mathbf{11}$}, myleaf={\xmark \ (\textsf{UNTIL})}]
                    ]
                  ]
                ]
              ]
            ],
            [{$\mglob[J_0]{[0,10]} \even{[0,1]} p, \meven[J_1]{[0,1]} p, \even[J_1]{[1,2]} p \mid \mathbf{1}$}, myleaf={\xmark \ (\textsf{UNTIL})}]
          ]
        ]
      ]
    ]
  ]
\end{forest}
\end{adjustbox}
\caption{$p = x > 0$, $J_0 = [-1,-1]$, and $J_1 = [0, 10]$.}
\label{fig:GF-tree}
\end{wrapfigure}
If we proceeded as in Sec.~\ref{sec:basic-tableau},
we would obtain a tree consisting of nine more repetitions of the first three levels of Fig.~\ref{fig:tableau-example}.
Instead, we use a new rule, that ``jumps'' directly to time 10, when the $\globop$ operator reaches the end of its interval.
This rule, called the \textsf{JUMP} rule, picks the smallest time bound $k$ of any operator that is greater than the current node's time,
and creates a child node that has the same label as the node obtained by applying the \textsf{STEP} rule repeatedly until reaching $k$.

We apply again the expansion rules, and the right child created by the $\evenop$ rule is a poised node to which the \textsf{STEP} rule applies.
Since the $\globop$ operator has exhausted its interval, the \textsf{STEP} rule creates a node that contains only $\even{[0,11]} x < 0$.
One application of the $\evenop$ expansion rule creates an accepted node,
showing that $\even{[0,11]} x < 0$ can be satisfied by imposing $x < 0$ at time 11.
We can obtain a signal that satisfies the root formula by visiting the tree from the root to the accepted node, and collecting real constraints in poised nodes.
The signal consists of solutions to these constraint sets.
To reconstruct signal instants skipped by the \textsf{JUMP} rule,
we just repeat the solution to constraints in the node to which we applied it
($x > 5$ in the example).

Fig.~\ref{fig:GF-tree} shows part of the tableau for formula $\glob[J_0]{[0,10]} \even{[0,1]} x > 0$.
Nodes are created with the same rules as in the previous examples, so we just highlight a few aspects.
Nodes connected to their parents by an edge labeled with \textsf{G} derive from the extraction of the \textsf{F} operator from the outer \textsf{G}.
Their (subscript) interval bounds are increased by the current time---%
for instance, the second application of the \textsf{G} expansion rule extracts $\even[J_1]{[1,2]} x > 0$.
Moreover, temporal operators are superscripted with the time interval of the operator from which they were extracted (or $[-1,-1]$ if they have no parent).
Let us ignore this superscript for now (we will need it later on).

The $\globop$ operator produces a new instance of $\even[J_1]{[t,t+1]} x > 0$ at every time instant.
These $\evenop$ operators produce two children each.
Thus, the tree grows exponentially.

However, $\evenop$ operators keep disappearing, because they must be satisfied within their intervals.
Thus, the number of occurrences of $\even[J_1]{[t,t+1]} x > 0$ in nodes is bounded by their intervals' width (2 in this case).
Moreover, nodes become highly redundant, even though not exactly identical:
for instance, nodes $u_1$ and $u_{10}$ only differ in the interval of the $\evenop$ operator.
Thus, we cannot just jump from $u_1$ to a node differing only in the time counter,
but the \textsf{JUMP} rule must update the intervals of $\evenop$ operators extracted from $\glob[J_0]{[0,10]} \even{[0,1]} x > 0$ accordingly.

\subsubsection{Formal definition}
We now describe the tableau with the \textsc{JUMP} rule.
In the following, we omit unary operators $\globop$ and $\evenop$ to simplify notation.
Rules for them can be easily obtained through the usual equivalences
$\glob{I} \varphi \equiv \srelease{I}{\neg \top}{\varphi}$ and $\even{I} \varphi \equiv \suntil{I}{\top}{\varphi}$.

To use the new rule, the tableau decorates (in the superscript) nested temporal operators
with the interval of the operator in which they are nested,
which we call the \emph{parent} formula.
This interval is used to keep track of whether the current time falls within
the interval in which the parent of a temporal operator is active.
For instance, $\even[J_1]{[t,t+1]} x > 0$ is decorated with $J_1$,
that is the interval in which the parent formula $\glob[J_0]{[0,10]} \even{[0,1]} x > 0$ is active.
Temporal operators in $\pcl(\Gamma(u_0))$%
---the root label---are decorated with the interval $[-1,-1]$,
meaning that they have no parent.

We use the expansion rules in categories P and J in Table~\ref{tab:expansion-rules},
and re-define the temporal expansion of a formula $\varphi$ with respect to interval $I$ as follows (it remains the same for propositional operators):
\begin{align*}
&\exp^t_I(\anybinop{[a,b]}{\varphi_1}{\varphi_2}) = \anybinop[I]{[a+t,b+t]}{\varphi_1}{\varphi_2} && \text{with } \mathsf{B} \in \set{\suntilop, \sreleaseop}
\end{align*}

We apply the \textsf{STEP} rule to a poised node $u$
only if $\Gamma(u)$ contains at least one operator
${\msuntil[J]{[a,b]}{\varphi_1}{\varphi_2}}$ or $\msrelease[J]{[a,b]}{\varphi_2}{\varphi_1}$
such that $t(u) \not\in J$ and:
\begin{itemize}
\item $t(u) = b$, or
\item there exists a formula $\psi \in \pcl(\varphi_1)$ such that $t(u) < a + I_u(\psi)$.
\end{itemize}
Otherwise, the \textsf{JUMP} rule is applied:
\begin{description}
\item[\textsf{JUMP}]
If all marked operators
$\msuntil[J]{[a,b]}{\varphi_1}{\varphi_2}$ or $\msrelease[J]{[a,b]}{\varphi_2}{\varphi_1}$ in $\Gamma(u)$
with $t(u) \not\in J$ are such that
\begin{itemize}
\item $t(u) < b$, and
\item for all formulas $\psi \in \pcl(\varphi_1)$ we have $t(u) \geq a + I_u(\psi)$,
\end{itemize}
then $u$ has one child $u'$ such that $t(u') = \min \set{t \in K(u) \mid t > t(u)}$,
where
\[
K(u) = \set{a, b \mid \anybinop[J]{[a,b]}{\varphi_1}{\varphi_2} \in \Gamma(u) \land \mathsf{B} \in \set{\suntilop, \sreleaseop, \marked{\suntilop}, \marked{\sreleaseop}} \land t(u) \not\in J},
\]
and, setting $k = t(u') - t(u)$, we have
\begin{align}
\Gamma(u') =
&\ \set{\anybinop[J]{I}{\varphi_1}{\varphi_2} \in \Gamma(u) \mid \mathsf{B} \in \set{\suntilop, \sreleaseop} \land t(u) \not\in J} \label{eq:jump-outer-unmarked} \\
&\cup
\set{\anybinop[J]{[a+k, b+k]}{\varphi_1}{\varphi_2} \mid \anybinop[J]{[a,b]}{\varphi_1}{\varphi_2} \in \Gamma(u) \land \mathsf{B} \in \set{\suntilop, \sreleaseop} \land t(u) \in J} \label{eq:jump-derived-unmarked} \\
&\cup
\set{\anybinop[J]{[a,b]}{\varphi_1}{\varphi_2} \mid \stlbinop[J]{\marked{\mathsf{B}}}{[a,b]}{\varphi_1}{\varphi_2} \in \Gamma(u) \land \mathsf{B} \in \set{\suntilop, \sreleaseop} \land t(u) \not\in J \land t(u) < b} \label{eq:jump-outer-marked} \\
&\cup
\set{\anybinop[J]{[a+k,b+k]}{\varphi_1}{\varphi_2} \mid \stlbinop[J]{\marked{\mathsf{B}}}{[a,b]}{\varphi_1}{\varphi_2} \in \Gamma(u) \land \mathsf{B} \in \set{\suntilop, \sreleaseop} \land t(u) \in J \land t(u) < b} \label{eq:jump-derived-marked}
\end{align}
\end{description}
Note that the conditions for applying the \textsf{STEP} and \textsf{JUMP} rules are mutually exclusive,
so one of them can always be applied to a poised node.

The \textsf{JUMP} rule can only be applied if $t(u) < b$ for all marked operators,
so that consistency is checked with the basic tableau at all interval upper bounds.
Moreover, the jump length $k$ is computed so that the child node's time is the closest
bound of any operator that is greater than the current time.
These two constraints make sure that we always check consistency of all possible combinations of operators that can be active at any time.

Sets \eqref{eq:jump-outer-unmarked} and \eqref{eq:jump-outer-marked}
that compose the label of the new child $u'$ implement straightforward jumps,
like the ones in Fig.~\ref{fig:tableau-example}.
Condition $t(u) \not\in J$ selects only operators that are either not nested ($J = [-1,-1]$),
or that have been extracted from the arguments of other temporal operators
that are no longer active, because $u$'s time falls after their interval.
Such operators are placed in $\Gamma(u')$ without changing their time intervals,
only unmarking them if they are marked.

Sets \eqref{eq:jump-derived-unmarked} and \eqref{eq:jump-derived-marked} cover
cases like the one in Fig.~\ref{fig:GF-tree}.
Here, to jump between $u_1$ and $u_{10}$ we cannot just put $\meven[J_1]{[1,2]} p$
into $\Gamma(u_{10})$, but we need to increment \emph{both bounds} of its interval by the jump length $k = 9$.
This kind of more complex jump, however, can only be performed once the tableau
has been explored enough by the \textsf{STEP} rule to have tried all possible combinations
of choices of whether to satisfy $\untilop$ and $\releaseop$ operators,
i.e., all possible applications of their expansion rules.
The second condition for applying the \textsf{JUMP} rule,
requiring that $t(u)$ has reached or exceeded the end of the interval of all derived operators,
ensures this constraint.
In the example, the \textsf{JUMP} rule can be applied only at time 1,
when $\even{[0,1]} p$ has exhausted its interval.

Note that $\Gamma(u_{10})$, as obtained by applying the \textsf{JUMP} rule,
is not exactly the same as $u_{10}$'s label in the figure:
real-numbered constraints are dropped, and marked operators are unmarked.
Indeed, consistency has already been checked in the node $u$
to which we applied the \textsf{JUMP} rule,
so it needs not be checked again.
Unmarked operators are marked again by expansion rules.

Termination rules are the same as in the basic tableau.


\subsubsection{Reconstructing signals}
\label{sec:reconstructing-signals}
If the tableau contains an accepted node $u_n$, a satisfying signal
can be obtained from the branch $\mathbf{u} = u_0 \dots u_n$ that goes from the root $u_0$ to $u_n$.
Let $\textbf{p} = p_0 \dots p_m$ be the sequence of indices
such that $u_{p_0} \dots u_{p_m}$ is the sequence of poised nodes in $\mathbf{u}$.
For each $0 \leq i \leq m$, we ask the linear-real-arithmetic solver
for a solution to the constraint set $\set{f(R) = k, f(R) > k \in \Gamma(u_{p_i})}$.
The resulting variable assignment will constitute time $t = p_i$ of our satisfying signal.
If the \textsf{JUMP} rule was applied to some node $u_{p_j}$, for $0 \leq j < m$, then we have $p_{j+1} > p_j$.
We associate to time instants $t$, with $p_j < t < p_{j+1}$,
of the signal the variable assignment obtained for $u_{p_j}$ as described before.
For instance, from Fig.~\ref{fig:tableau-example} we obtain a signal
that satisfies $x > 5$ at times $t = 0, \dots, 10$, and $x < 0$ at $t = 11$,
such as $x = 6$ for $t = 0, \dots, 10$ and $x = -1$ at $t = 11$.

\color{black}

\subsection{Soundness}

We prove that the tableau is sound, meaning that if a complete tableau has an accepted branch,
then the formula is satisfiable, and the set of requirements it represents is consistent.
To simplify the proofs, we only consider $\suntilop$ and $\sreleaseop$ operators,
from which $\globop$ and $\evenop$ are derived.
We start by proving termination.

\begin{definition}
\label{def:time-horizon}
The \emph{Time Horizon} $\horiz(\phi)$ of a formula $\phi$ is defined as follows:
\begin{align*}
&\horiz(\top) = \horiz(f(R) > k) = 0 &
&\horiz(\neg \varphi_1) = \horiz(\varphi_1) \\
&\horiz(\varphi_1 \circ \varphi_2) = \max(\horiz(\varphi_1), \horiz(\varphi_2)) && \text{with } \circ \in \set{\land, \lor} \\
&\horiz(\anybinop[J]{[a,b]}{\varphi_1}{\varphi_2}) = b + \max(\horiz(\varphi_1), \horiz(\varphi_2)) && \text{with } \mathsf{B} \in \set{\suntilop, \sreleaseop}
\end{align*}
\end{definition}

\begin{theorem}[Termination]
The complete tableau for a formula $\phi$ is a finite tree.
\end{theorem}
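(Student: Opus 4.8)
The plan is to prove the theorem by König's lemma: a rooted tree that is finitely branching and all of whose branches are finite is itself finite. The first hypothesis is immediate from the rules --- every expansion rule creates at most two children, and \textsf{STEP} and \textsf{JUMP} create exactly one --- so the entire argument reduces to showing that no branch of the complete tableau is infinite.

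To bound the length of a branch I would split it into maximal runs of expansion rules (categories P and S, or P and J of Table~\ref{tab:expansion-rules}) separated by applications of \textsf{STEP} or \textsf{JUMP}, and bound the two kinds of segment separately.
\emph{(a) Each expansion run is finite.} Every expansion rule removes one \emph{unmarked} formula $\psi$ from $\Gamma(u)$ and replaces it by unmarked formulas each of which is structurally smaller than $\psi$ --- the two disjuncts/conjuncts for the P rules, and $\exp^{t(u)}(\cdot)$ applied to an argument of $\psi$ for the temporal rules --- possibly together with a \emph{marked} copy of $\psi$. Marked operators, linear constraints, and $\top/\neg\top$ are inert: no expansion rule fires on them. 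Since $\exp^t$ only shifts interval bounds and leaves the syntactic structure untouched, assigning to a node the finite multiset of complexities of its unmarked formulas --- for a complexity measure that ignores bounds and charges each temporal operator strictly more than its one-step decomposition, and charges marked operators $0$ --- gives a quantity that strictly decreases, in the well-founded multiset ordering, at every expansion step. Hence after finitely many expansions a poised node is reached.
\emph{(b) A branch contains finitely many \textsf{STEP}/\textsf{JUMP} applications.} The counter $t(u)$ is non-decreasing along a branch, \textsf{STEP} raises it by exactly $1$ and \textsf{JUMP} by at least $1$, so it suffices to bound $t(u)$. I claim $t(u)\le\horiz(\phi)+1$ for every node $u$, and would prove it through two invariants, maintained by induction on the construction of the tableau: (I1) for every temporal operator $\psi\in\Gamma(u)$, $t(u)\le I_u(\psi)$; and (I2) for every temporal operator $\psi\in\Gamma(u)$, $\horiz(\psi)\le\horiz(\phi)$ (marking being irrelevant to $\horiz$). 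Both hold at the root, where $t(u_0)=0$ and $\psi=\phi$. For (I1): an expansion rule introduces a nested operator $\exp^{t(u)}(\chi)$ only when $t(u)\in I(\psi)$ and it shifts the lower bound up by $t(u)$, so the new lower bound is $\ge t(u)$; \textsf{STEP} keeps a marked operator only when $t(u)<I_u(\psi)$, so its upper bound is $\ge t(u)+1$; and the target $\min\{t\in K(u)\mid t>t(u)\}$ of \textsf{JUMP}, together with the fact that $K(u)$ contains the bounds of every operator that is active at $u$, prevents the counter from overshooting any operator's interval. For (I2): monotonicity of $\horiz$ under the subformula relation, combined with the estimate $\horiz(\exp^t(\chi))\le\horiz(\chi)+t$ and (I1) applied to the operator $\psi$ from which $\chi$ is extracted (giving $t(u)\le I_u(\psi)$), shows that the newly created operator has horizon at most $\horiz(\phi)$. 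Finally, combining (I1) and (I2): whenever $\Gamma(u)$ contains a temporal operator $\psi$ we get $t(u)\le I_u(\psi)\le\horiz(\psi)\le\horiz(\phi)$, and since \textsf{STEP}/\textsf{JUMP} are applied only to such nodes and increase the counter by at least $1$, no node has $t(u)>\horiz(\phi)+1$. Thus a branch contains at most $\horiz(\phi)+1$ applications of \textsf{STEP}/\textsf{JUMP}, each followed by a finite expansion run, and is therefore finite.

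The main obstacle is part (b), specifically the preservation of (I2) under the \textsf{JUMP} rule: in the cases \eqref{eq:jump-derived-unmarked} and \eqref{eq:jump-derived-marked} the interval of a nested operator is shifted by the jump length $k$, which raises its horizon by $k$, and one must show this increase stays within the ``budget'' $\horiz(\phi)$. This needs careful bookkeeping of how $k$ is constrained --- by the bounds collected in $K(u)$ and by the side conditions gating \textsf{JUMP}, in particular that it fires only once all derived operators have exhausted their intervals and only while $t(u)<b$ for every marked operator --- to conclude that $k$ never exceeds the slack $I_u(\psi)-t(u)$ of the parent operator. Once that estimate is established, König's lemma closes the proof.
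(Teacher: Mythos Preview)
Your proposal is correct and follows essentially the same route as the paper's proof. Both arguments reduce finiteness to bounding the time counter by $\horiz(\phi)$, and both do so via the invariant that every temporal operator $\psi\in\Gamma(u)$ satisfies $I_u(\psi)\le\horiz(\phi)$ (your (I2) is a slight strengthening, since $I_u(\psi)\le\horiz(\psi)$ is immediate); the paper handles the \textsf{JUMP} case by the same bookkeeping you outline, bounding the shift $k$ by the upper bound of the parent operator's interval and using that the derived operator is a subformula of its parent. Your presentation is more explicitly structured---K\"onig's lemma, the multiset-ordering argument for expansion runs, and the two named invariants (I1), (I2)---whereas the paper folds (I1) into its closing paragraph and treats termination of expansion runs informally (``simpler or marked formulas''), but the underlying ideas coincide.
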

\begin{proof}
Expansion rules, and step and jump rules only add a finite number of children (at most two) to each node.
Thus, the tableau can be infinite only if it has at least an infinitely long branch.
First, note that expansion rules may only add a finite number of nodes to a branch,
because they only add children containing simpler or marked formulas.
Further, we prove that the step and jump rules can be applied only a finite number of times in each branch.
To do so, we prove that for any node $u$ and for all temporal operators $\psi \in \Gamma(u)$,
we have $I_u(\psi) \leq \horiz(\phi)$.

New temporal operators can be introduced into nodes by the expansion rules.
For any formula $\psi \in \Gamma(u)$, for all temporal operators
$\varphi \in \Gamma_\psi(u_1) \cup \Gamma_\psi(u_2)$,
we have $I_u(\varphi) \leq \horiz(\psi)$.
This fact is only non-trivial for temporal operators.
For instance, let $\psi = \suntil[J]{I}{\varphi_1}{\varphi_2}$,
and let $\anybinop{[a,b]}{\varphi_3}{\varphi_4} \in \pcl(\varphi_1) \cup \pcl(\varphi_2)$,
with $\mathsf{B} \in \set{\suntilop, \sreleaseop}$.
In this case, we have $\anybinop[I]{[a+t,b+t]}{\varphi_3}{\varphi_4} \in \pcl(\varphi)$
for some $\varphi \in \Gamma_\psi(u_1) \cup \Gamma_\psi(u_2)$.
The expansion rule is triggered only if $t(u) \in I$, hence $t \leq I_u(\psi)$.
It is easy to prove by induction that, since $\anybinop{[a,b]}{\varphi_3}{\varphi_4}$ is a subformula of $\psi$,
we have $I_u(\psi) + b \leq I_u(\psi) + \max(\horiz(\varphi_1), \horiz(\varphi_2)) \leq \horiz(\psi) $,
hence $b+t \leq I_u(\psi) + b \leq \horiz(\psi)$.
The proof is analogous for the release operator.

This fact, together with $\horiz(\psi') \leq \horiz(\psi)$ for any subformula $\psi'$ of $\psi$,
concludes the proof.

The step rule leaves bounds of temporal operators unchanged.
The jump rule may increase the upper bound of an operator
$\anybinop[J]{[a,b]}{\varphi_1}{\varphi_2}$ with $\mathsf{B} \in \set{\suntilop, \sreleaseop}$
and $J = [c,d]$ by an amount $k$ such that $k \leq d$, because $k \in K(u)$,
and if $\anybinop[J]{[a,b]}{\varphi_1}{\varphi_2} \in \Gamma(u)$,
it must be derived by the expansion rules by an operator $\varphi \in \Gamma(u)$
such that $I(\varphi) = [c,d]$.
Thus, $\anybinop[J]{[a,b]}{\varphi_1}{\varphi_2}$ is a subformula of $\varphi$,
and $b + d \leq \horiz(\varphi)$, hence $b + k \leq \horiz(\varphi)$.

For any node $u$, the node $u'$ added by a step or a jump rule is such that $t(u') > t(u)$,
and for all $\varphi \in \Gamma(u')$ we have either $I_l(\varphi) > t(u)$ for non-marked operators,
or $I_u(\varphi) > t(u)$.
Since $I_u(\varphi) < \horiz(\phi)$, at some point a node $u_f$ is reached such that
either $\Gamma(u_f)$ contains no temporal operators, or $t(u_f) \geq \horiz(\phi)$.
In both cases, $u_f$ is a leaf.
\end{proof}

To prove the soundness of the tableau, we introduce the concept of \emph{model} of a formula $\phi$, which formalizes the proof obligations that need to hold in each time instant of a satisfying signal.
In fact, in Lemma~\ref{lemma:model} we prove that we can build a satisfying signal from a model for $\phi$, and \emph{vice versa}.
In Thm.~\ref{thm:soundness} we exploit this fact by showing that if the tableau for $\phi$ has an accepted branch,
then it has a model, and therefore a satisfying signal.
\begin{definition}
\label{def:model}
A \emph{model} for a formula $\phi$ is a sequence of \emph{atoms}
$\mathbf{\Delta} = \Delta_0 \Delta_1 \dots \Delta_n$,
with $n \leq \horiz(\phi)$, that are the smallest sets of formulas that satisfy the following rules:
$\phi \in \Delta_0$;\\
for each $i = 0, \dots, n$, we have:
\begin{enumerate}
\item \label{item:model-or}
    if $\varphi_1 \lor \varphi_2 \in \Delta_i$ then $\varphi_1 \in \Delta_i$ or $\varphi_2 \in \Delta_i$
\item \label{item:model-and}
    if $\varphi_1 \land \varphi_2 \in \Delta_i$ then $\varphi_1 \in \Delta_i$ and $\varphi_2 \in \Delta_i$
\end{enumerate}
and for each $j = 0, \dots, n-1$ we have
\begin{enumerate}[resume]
\item \label{item:model-temp-lta}
    if $\anybinop{[a,b]}{\varphi_1}{\varphi_2} \in \Delta_j$,
    with $\mathsf{B} \in \set{\suntilop, \sreleaseop}$, and $j < a$ then
    $\anybinop{[a,b]}{\varphi_1}{\varphi_2} \in \Delta_{j+1}$,
\item \label{item:model-suntil-atb}
    if $\suntil{[a,b]}{\varphi_1}{\varphi_2} \in \Delta_j$ and $a \leq j < b$ then either
    $\exp^j(\varphi_2) \in \Delta_j$ or
    ($\exp^j(\varphi_1) \in \Delta_j$ and $\suntil{[a,b]}{\varphi_1}{\varphi_2} \in \Delta_{j+1}$),
\item \label{item:model-suntil-eqb}
    if $\suntil{[a,b]}{\varphi_1}{\varphi_2} \in \Delta_j$ and $j = b$ then
    $\exp^j(\varphi_2) \in \Delta_j$,
\item \label{item:model-srelease-atb}
    if $\srelease{[a,b]}{\varphi_1}{\varphi_2} \in \Delta_j$ and $a \leq j < b$ then either
    $\exp^j(\varphi_1 \land \varphi_2) \in \Delta_j$ or
    ($\exp^j(\varphi_2) \in \Delta_j$ and $\srelease{[a,b]}{\varphi_1}{\varphi_2} \in \Delta_{j+1}$),
\item \label{item:model-srelease-eqb}
    if $\suntil{[a,b]}{\varphi_1}{\varphi_2} \in \Delta_j$ and $j = b$ then
    $\exp^j(\varphi_1 \land \varphi_2) \in \Delta_j$,
\end{enumerate}
where $\exp^j$ is defined as in Sec.~\ref{sec:basic-tableau-formal}.
\end{definition}

\begin{lemma}
\label{lemma:model}
A formula $\phi$ is satisfiable iff it has a model $\mathbf{\Delta} = \Delta_0 \Delta_1 \dots \Delta_n$
such that for all $i = 1, \dots, n$ the set $\set{f(R) = k, f(R) > k \in \Delta_i}$ is consistent.
\end{lemma}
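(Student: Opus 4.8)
The plan is to prove the two implications separately, reconstructing a model from a satisfying signal and \emph{vice versa}, in the spirit of the branch-to-signal reconstruction of Sec.~\ref{sec:reconstructing-signals}. Since the formal development uses only $\suntilop$ and $\sreleaseop$, the argument covers $\globop$ and $\evenop$ automatically through $\glob{I} \varphi \equiv \srelease{I}{\neg \top}{\varphi}$ and $\even{I} \varphi \equiv \suntil{I}{\top}{\varphi}$. The technical backbone of both directions is one auxiliary notion, ``$\psi$ holds at time $i$ in a signal $w$'': the pointwise reading for $\top$, predicates, and their negations; the recursive reading for $\land$ and $\lor$; and, for an operator $\suntil{[a,b]}{\varphi_1}{\varphi_2}$ (whose interval is absolute, as stored inside the $\Delta_i$), the requirement that some $t' \in [\max(a,i),b]$ satisfy $(w,t') \models \varphi_2$ while $(w,t'') \models \varphi_1$ for every $t'' \in [\max(a,i),t'-1]$, dually for $\sreleaseop$. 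The point, immediate from the STL semantics, is that this notion obeys exactly the one-step unfoldings encoded by the model's rules: for $a \le j < b$, ``$\suntil{[a,b]}{\varphi_1}{\varphi_2}$ holds at $j$'' iff ``$\exp^{j}(\varphi_2)$ holds at $j$'' or (``$\exp^{j}(\varphi_1)$ holds at $j$'' and ``$\suntil{[a,b]}{\varphi_1}{\varphi_2}$ holds at $j+1$''), because $\exp^{j}$ merely re-bases the outermost intervals of an extracted argument, so that ``$\exp^{j}(\varphi)$ holds at $j$'' coincides with $(w,j) \models \varphi$; the $j = b$ and $j < a$ cases degenerate analogously, and the $\sreleaseop$ statements follow by duality.

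For the direction ``$\phi$ satisfiable $\Rightarrow$ a suitable model exists'', take $w$ with $(w,0) \models \phi$ and build $\mathbf{\Delta}$ by recursion on time: put $\phi$ in $\Delta_0$, close under the forced rules, and resolve every genuine choice --- the disjunction rule and the ``either/or'' of rules~\ref{item:model-suntil-atb} and~\ref{item:model-srelease-atb} --- according to a witness supplied by $w$. The invariant ``every $\psi \in \Delta_i$ holds at time $i$ in $w$'' then holds at $\Delta_0$ and is preserved by every rule application, by the unfoldings above; in particular every predicate in $\Delta_i$ is satisfied by $w_R(i)$, so each $\set{f(R) = k, f(R) > k \in \Delta_i}$ is consistent, and $\neg\top$ never enters any $\Delta_i$. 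Finally, the argument of the Termination theorem --- no temporal operator in any $\Delta_i$ has upper bound exceeding $\horiz(\phi)$, and a set free of temporal operators closes the sequence --- lets us stop at some $n \le \horiz(\phi)$, so $\mathbf{\Delta}$ is a model of the required kind.

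For the converse, let $\mathbf{\Delta} = \Delta_0 \dots \Delta_n$ be a model with every $\set{f(R) = k, f(R) > k \in \Delta_i}$ consistent (and $\neg\top \notin \Delta_i$). For each $i$ pick a variable assignment $\sigma_i$ satisfying that set, and define $w$ by $w(i) = \sigma_i$ for $i \le n$ and $w(i) = \sigma_n$ for $i > n$ (nothing in $\phi$ constrains times past $\horiz(\phi) \ge n$). I would then establish the same invariant ``every $\psi \in \Delta_i$ holds at time $i$ in $w$'' by well-founded induction, lexicographically on the structural size of $\psi$ and on the quantity $b - i$ (with $b$ the interval upper bound of $\psi$ when $\psi$ is temporal): the Boolean and propagation rules either shrink the formula or leave it unchanged, rules~\ref{item:model-suntil-atb} and~\ref{item:model-srelease-atb} reproduce the \emph{same} formula in $\Delta_{j+1}$ with a strictly smaller budget $b - (j+1)$, and structural subformulas re-enter only through $\exp^{j}(\cdot)$, on which ``holds at time $j$'' agrees with ordinary STL satisfaction; along the way one checks that every temporal formula in $\Delta_j$ has $j \le b$, since it can only have been placed there by rule~\ref{item:model-temp-lta}, \ref{item:model-suntil-atb}, or~\ref{item:model-srelease-atb} acting on $\Delta_{j-1}$, or else be $\phi$ itself. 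Instantiating the invariant at $i = 0$, $\psi = \phi$, yields $(w,0) \models \phi$.

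The main obstacle is pinning down the auxiliary notion precisely enough that (i) it reconciles the absolute-time intervals carried inside the $\Delta_i$ with the relative-time STL semantics and, in the nested case, absorbs the $\exp^{j}$ re-basing without fuss, and (ii) it supports a genuinely well-founded induction: the temporal unfolding rules do not decrease formula size, so the induction must also track the shrinking distance to the interval upper bound, which in turn rests on the (easy, Termination-style) fact that no temporal formula survives in a model past its own upper bound. Once this scaffolding is in place, the disjunctive, conjunctive, $j = b$, and propagation cases, together with the horizon bound and the transfer of constraint-set consistency in both directions, are routine.
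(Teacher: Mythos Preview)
Your proposal is correct and follows the same two-direction strategy as the paper's proof: build a satisfying signal from a model by choosing, for each $i$, an assignment that satisfies the constraint set of $\Delta_i$, and conversely read a model off a satisfying signal, in both cases appealing to the one-step unfolding laws for $\suntilop$ and $\sreleaseop$. The paper states those unfolding laws explicitly but then simply asserts that the constructed signal satisfies $\phi$ ``by induction on the syntactic structure of $\phi$'', leaving the absolute/relative interval bookkeeping and the well-foundedness of the induction implicit; your auxiliary ``$\psi$ holds at time $i$'' notion (outermost interval absolute, arguments evaluated with ordinary STL semantics) and the lexicographic order on $(\text{size}, b-i)$ are precisely what is needed to make that induction rigorous, since the temporal unfolding rules do not decrease formula size.
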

\begin{proof}
Given a model for $\phi$, we can build a signal $w$ that satisfies $\phi$ such that $w(i)$ is an assignment that satisfies
$\set{f(R) = k, f(R) > k \in \Delta_i}$ for each $i = 1, \dots, n$.

The fact that such a signal satisfies $\phi$ can be proved by induction on the syntactic structure of $\phi$
by applying the following laws:
\begin{align*}
&(w, t) \models \anybinop{[a,b]}{\varphi_1}{\varphi_2}
&&\text{iff}
&&(w, t+1) \models \anybinop{[a,b]}{\varphi_1}{\varphi_2}
\text{ for $\mathsf{B} \in \set{\suntilop, \sreleaseop}$}
&\text{if $t < a$} \\
&(w, t) \models \suntil{[a,b]}{\varphi_1}{\varphi_2}
&&\text{iff}
&&\begin{aligned}[t]
&(w, t) \models \exp^t(\varphi_2) \\
&\lor (w, t) \models \exp^t(\varphi_1) \land (w, t+1) \models \suntil{[a,b]}{\varphi_1}{\varphi_2}
\end{aligned}
&\text{if $a \leq t < b$} 
\end{align*}
\begin{align*}
&(w, t) \models \suntil{[a,b]}{\varphi_1}{\varphi_2}
&&\text{iff}
&&(w, t) \models \exp^t(\varphi_2)
&\text{if $t = b$} \\
&(w, t) \models \srelease{[a,b]}{\varphi_1}{\varphi_2}
&&\text{iff}
&&\begin{aligned}[t]
&(w, t) \models \exp^t(\varphi_1) \land (w, t) \models \exp^t(\varphi_2) \\
&\lor (w, t) \models \exp^t(\varphi_2) \land (w, t+1) \models \srelease{[a,b]}{\varphi_1}{\varphi_2}
\end{aligned}
&\text{if $a \leq t < b$} \\
&(w, t) \models \srelease{[a,b]}{\varphi_1}{\varphi_2}
&&\text{iff}
&&(w, t) \models \exp^t(\varphi_1) \land (w, t) \models \exp^t(\varphi_2)
&\text{if $t = b$}
\end{align*}
These laws can be proved similarly to the well-known expansion laws for the LTL until and release operators~\cite{BaierK08},
but adapting them for STL's timed operators.

If a signal satisfies $\phi$, one can build a model for it by applying the laws above bottom-up 
w.r.t.\ the syntactic structure of the formula.
\end{proof}

Before proceeding with the soundness proof, we highlight the fact that nested temporal operators cause periodicity in sufficiently deep tableau branches.
We illustrate this phenomenon in Fig.~\ref{fig:GF-tree},
and we formalize it here to justify the correctness of the \textsf{JUMP} rule.
\begin{lemma}
\label{lemma:operator-ladder}
Let $\suntil[J]{I}{\varphi_1}{\varphi_2}$ or $\srelease[J]{I}{\varphi_2}{\varphi_1}$ in $\Gamma(u)$
with $I = [a,b]$ for some poised node $u$ in the basic tableau such that $t(u) = a$.

If the conjunction of the formulas in $\Gamma(u)$ is satisfiable, there exists a branch in which
for all formulas $\anybinop{[c,d]}{\varphi_1}{\varphi_2} \in \pcl(\varphi_1)$,
for all poised nodes $u'$ such that $a + d \leq t(u') \leq b + d$, we have either
$\anybinop[I]{[t(u')+c-i,t(u')+d-i]}{\varphi_1}{\varphi_2} \in \Gamma(u')$ or
$\stlbinop[I]{\marked{\mathsf{B}}}{[t(u')+c-i,t(u')+d-i]}{\varphi_1}{\varphi_2} \in \Gamma(u')$
for all $i = 0, 1, \dots, d$.
\end{lemma}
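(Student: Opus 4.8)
The plan is to exhibit one explicit branch of the (decorated) basic tableau and to establish the ladder structure along it by induction on the time counter $t(u')$. Since $\bigwedge\Gamma(u)$ is satisfiable, Lemma~\ref{lemma:model} gives a model $\mathbf\Delta = \Delta_0 \dots \Delta_n$ with consistent atoms. I would use $\mathbf\Delta$ to resolve the choices left by the disjunctive expansion rules (for $\lor$, $\suntilop$, $\sreleaseop$), so that the induced branch is never rejected by \textsf{FALSE} or \textsf{LOCALLY-UNSAT}; and among the admissible models I would take one that \emph{postpones the temporal operators nested inside $\varphi_1$ to the latest instant their intervals allow}, so that on the branch each such operator stays marked (rather than being discharged early) until it reaches the end of its interval. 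The existence of such a ``maximally delayed'' model, and the fact that the induced branch develops at least through the interval $[a+d,b+d]$, is where the satisfiability hypothesis and Lemma~\ref{lemma:model} do the real work.

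Along this branch I would then prove, by induction on $t(u')$ over the poised nodes, the invariant: for each $\anybinop{[c,d]}{\psi_1}{\psi_2} \in \pcl(\varphi_1)$, the set $\Gamma(u')$ contains exactly the copies $\anybinop[{[a,b]}]{[c+s,\,d+s]}{\psi_1}{\psi_2}$ --- marked if that copy is already active, unmarked if it is not yet active or was just carried by \textsf{STEP} --- for every extraction time $s$ with $\max\{a,\, t(u')-d\} \le s \le \min\{b,\, t(u')\}$, each carrying the superscript $I = [a,b]$ of the outer operator. When $t(u') \in [a+d,b+d]$ this range of $s$ covers $\{t(u')-d, \dots, t(u')\} \cap [a,b]$, and reindexing by $i = t(u') - s$ gives the statement. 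The base case $t(u') = a$ is exactly the application of the outer operator's expansion rule (category J of Table~\ref{tab:expansion-rules}) to $u$: it introduces $\exp^{a}_I(\varphi_1)$, hence the $s = a$ copy with superscript $I$, by the definition of $\exp^t_I$ in Sec.~\ref{sec:basic-tableau-formal}.

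For the inductive step I would move from the poised node at time $t'$ to the one at time $t'+1$ through one \textsf{STEP} followed by expansions, and track what happens to the family of intervals. \textsf{STEP} keeps (and unmarks) every copy whose upper bound exceeds $t'$ and drops the one whose upper bound equals $t'$; then, if $t'+1 \in [a,b]$, the expansion of the outer operator re-extracts a fresh copy at $s = t'+1$; finally the expansion rules act on each copy, which on our branch is re-marked unless it sits at its upper bound, in which case it is discharged (for $\suntilop$, forced by \textsf{UNTIL}) or kept marked for one last step (for $\sreleaseop$, hence $\globop$). One also checks the inactive phase $t(u') < c+s$, where a fresh copy just rides along unmarked --- covered by the ``unmarked'' disjunct --- and that the superscript is set once, at extraction, by $\exp^t_I$, and never touched by \textsf{STEP}.

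I expect the main obstacle to be the boundary behaviour at the upper endpoints of the nested intervals, i.e.\ the rung $i = d$ together with the fact that the outer operator is no longer re-extracted once $t' > b$: there one must be careful about the difference between $\suntilop$, whose marked copy at its own upper bound is rejected by \textsf{UNTIL} (so the rung appears there only in discharged form, which is exactly the instant at which the basic tableau does the corresponding consistency check), and $\sreleaseop$, whose marked copy is not rejected at its upper bound, and likewise about the derived $\evenop$ and $\globop$ cases. A second delicate point is the branch selection: showing that ``postpone maximally'' can be combined with ``never trigger \textsf{LOCALLY-UNSAT}'' on one and the same branch, which once more comes down to choosing the model of $\bigwedge\Gamma(u)$ with care. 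The remaining work is a routine case analysis over the expansion rules of category J, restricted to $\suntilop$ and $\sreleaseop$.
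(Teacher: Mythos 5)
Your proposal takes essentially the same route as the paper's (much terser) proof: unfold the branch from $u$ using only \textsf{STEP}, always choose the postponing part $\Gamma_\phi(u_2)$ of the expansion rules except at an operator's upper bound, and show by induction on the time counter that a fresh copy of each nested operator is extracted at every step while stale copies expire at their upper bounds, so that the family becomes the stationary ladder from time $a+d$ onward. The extra care you devote to using the model from Lemma~\ref{lemma:model} to steer the disjunctive choices and keep the branch from being rejected by \textsf{LOCALLY-UNSAT} is not spelled out in the paper's proof, but it is exactly the role the satisfiability hypothesis is meant to play, so this is a welcome refinement rather than a divergence.
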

\begin{proof}
The proof can be carried out by induction, observing that if we unfold a branch starting from $u$
without ever using the \textsf{JUMP} rule,
and by always choosing the $\Gamma_\phi(u_2)$ part of expansion rules when possible
(i.e., when the current time is strictly lower than an operator's upper bound),
new instances of nested temporal operators are extracted at each step,
with their time bounds increased by 1.
Since operators whose upper bound is lower than a node's time disappear,
we reach a point (namely after time $a + d$) in which extracted operators become stationary.
I.e., we have $\anybinop[I]{[t(u')+c-i,t(u')+d-i]}{\varphi_1}{\varphi_2}$
(or the marked counterpart) for $i = 0, 1, \dots, d$ and, at each further step,
$\anybinop[I]{[t(u')+c-d,t(u')]}{\varphi_1}{\varphi_2}$ disappears (with $i = d$),
while a new $\anybinop[I]{[t(u')+c,t(u')+d]}{\varphi_1}{\varphi_2}$ appears ($i = 0$).
\end{proof}

Given a set of formulas $\Gamma$ from tableau node labels,
we define $\unmark(\Gamma)$ as the set obtained by unmarking marked operators,
and removing interval superscripts from temporal operators in $\Gamma$.
For instance, if $\suntil[J]{I}{\varphi_1}{\varphi_2}$ or $\msuntil[J]{I}{\varphi_1}{\varphi_2}$ are in $\Gamma$,
they are both replaced by $\suntil{I}{\varphi_1}{\varphi_2}$.
Instances of the release operator are replaced similarly, and all other operators remain unchanged.

\begin{theorem}[Soundness]
\label{thm:soundness}
If the tableau for formula $\phi$ has an accepted branch, $\phi$ is satisfiable.
\end{theorem}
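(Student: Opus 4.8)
The strategy is to turn the accepted branch into a \emph{model} for $\phi$ in the sense of Definition~\ref{def:model} whose constraint sets $\set{f(R) = k, f(R) > k \in \Delta_i}$ are all consistent, and then to invoke Lemma~\ref{lemma:model} to obtain a satisfying signal. I would proceed in two stages: first \emph{unroll} every application of the \textsf{JUMP} rule occurring in the accepted branch into an equivalent sequence of \textsf{STEP} applications, producing an accepted branch of the \emph{basic} tableau of Sec.~\ref{sec:basic-tableau}; then read off the model directly from that basic branch.

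\emph{Stage 1 (unrolling).} Suppose the accepted branch applies \textsf{JUMP} at a poised node $u$, creating a child $u'$ with $t(u') = t(u) + k$ and $k \geq 2$. I replace this single edge by a path of $k$ \textsf{STEP} edges interleaved with expansion rules, resolving each expansion choice according to Lemma~\ref{lemma:operator-ladder}: operators with $t(u) \notin J$ are carried along with their intervals unchanged, while operators derived from a still-active parent (those with $t(u) \in J$) reappear at each step with both interval endpoints shifted by one, so that after $k$ steps they coincide with the endpoint-shifted operators of sets~\eqref{eq:jump-derived-unmarked} and~\eqref{eq:jump-derived-marked}. Two facts need checking: (i) none of the inserted poised nodes is rejected, and (ii) the last inserted node has the same label as $u'$ up to marks, which the expansion rules restore, so the remainder of the accepted branch still attaches. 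For (i): the real-valued constraints at the inserted nodes are literal copies of those at $u$ (since $\exp^t$ fixes atomic predicates), so \textsf{LOCALLY-UNSAT} and \textsf{FALSE} cannot fire there given that they did not fire at $u$; and no marked operator reaches its upper bound at an inserted node, because the jump target $t(u') = \min\set{t \in K(u) \mid t > t(u)}$ is no larger than the nearest relevant interval upper bound $b$, while the \textsf{JUMP} side condition (equivalently, the fact that a marked operator at its upper bound is rejected by \textsf{UNTIL}) guarantees $t(u) < b$ for every marked operator. Iterating over all \textsf{JUMP} edges yields an accepted branch $\mathbf{v} = v_0 \dots v_{n'}$ of the basic tableau.

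\emph{Stage 2 (reading off the model).} In $\mathbf{v}$, the nodes group into maximal blocks sharing a time counter, occurring at consecutive times $0, 1, \dots, N$ with $N = t(v_{n'}) \leq \horiz(\phi)$ by the Termination theorem. For each $t$, let $\Delta_t$ be $\unmark$ of the union of the labels of the block at time $t$, retaining only the disjuncts actually selected along the branch; then $\phi \in \Delta_0$ since $\phi \in \Gamma(v_0)$. This sequence is a model: conditions~(\ref{item:model-or}) and~(\ref{item:model-and}) follow from the category-P rules; condition~(\ref{item:model-temp-lta}) holds because the \textsf{STEP} rule copies every operator whose lower bound still exceeds the current time; conditions~(\ref{item:model-suntil-atb}) and~(\ref{item:model-srelease-atb}) are exactly the category-S expansion rules for $\suntilop$ and $\sreleaseop$ followed by \textsf{STEP} (which carries the marked copy, unmarked, into the next block since $t < b$); and conditions~(\ref{item:model-suntil-eqb}) and~(\ref{item:model-srelease-eqb}) hold because at $t = b$ the branch must take the $\Gamma_\phi(u_1)$ alternative, as otherwise the operator would be left marked at its upper bound and \textsf{UNTIL} would reject the block, contradicting acceptance. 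Finally, the constraint set of each $\Delta_t$ is that of the poised node of its block, which passed \textsf{LOCALLY-UNSAT} and is therefore consistent, so Lemma~\ref{lemma:model} delivers a signal satisfying $\phi$.

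I expect Stage~1 to be the main obstacle: one must show that replacing a \textsf{JUMP} by $k$ \textsf{STEP}s never skips a time instant at which a new proof obligation first becomes checkable or an existing one expires, which is precisely what the definition of $K(u)$ (the jump target is the nearest interval endpoint strictly beyond $t(u)$) and the \textsf{JUMP} side conditions ($t(u) < b$ for every marked operator, and $t(u) \geq a + I_u(\psi)$ for every $\psi \in \pcl(\varphi_1)$) are engineered to ensure. Making this rigorous requires combining Lemma~\ref{lemma:operator-ladder} with careful bookkeeping of the interval superscripts $J$ that separate the ``persisting'' operators (sets~\eqref{eq:jump-outer-unmarked} and~\eqref{eq:jump-outer-marked}) from the ``shifting'' ones (sets~\eqref{eq:jump-derived-unmarked} and~\eqref{eq:jump-derived-marked}).
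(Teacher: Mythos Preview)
Your proposal is correct and follows essentially the same approach as the paper. The only difference is organizational: you first unroll every \textsf{JUMP} into a basic-tableau branch and then read off the model, whereas the paper builds the model directly from the accepted branch, synthesizing the intermediate atoms $\overline{\Delta}_{t+j}$ (via auxiliary sets $\delta_j$) whenever a \textsf{JUMP} is encountered; in both cases the synthesized intermediate data are obtained by applying the expansion rules with the $\Gamma_\phi(u_2)$ choice (except at upper bounds), Lemma~\ref{lemma:operator-ladder} is the key tool for the ``shifting'' operators with $t(u)\in J$, and the verification that no rejection rule fires / that the model conditions hold is the same argument phrased two ways.
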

\begin{proof}
We show how to build a model for $\phi$ from an accepted branch of the tableau,
as we did in Sec.~\ref{sec:reconstructing-signals}, but more formally.
Let $\mathbf{u} = u_0 \dots u_n$ be the accepted branch,
and let $\textbf{p} = p_0 \dots p_m$ be the sequence of indices
such that $u_{p_0} \dots u_{p_m}$ is the sequence of poised nodes in $\mathbf{u}$.

For each $i = 1, \dots, m$, we define $\Delta(u_{p_i})$
as the set $\bigcup_{j = p_{i-1} + 1}^{p_i} \unmark(\Gamma(u_j))$ 
and $\Delta(u_{p_0})$ as $\bigcup_{j = 0}^{p_0} \unmark(\Gamma(u_j))$.
Note that all $\Delta(u_{p_i})$'s satisfy rules \ref{item:model-or} and \ref{item:model-and}
from Def.~\ref{def:model}.

We build a model $\mathbf{\Delta} = \Delta_0 \dots \Delta_{t_f}$ for $\phi$ inductively.
We set $\Delta_0 = \Delta(u_{p_0})$.
Note that $\phi \in \Delta_0$ and $t(u_{p_0}) = 0$.
Then, let $\Delta_t = \Delta(u_{p_i})$, with $t(u_{p_i}) = t$.

If the \textsf{STEP} rule can be applied to $u_{p_i}$, then we set $\Delta_{t+1} = \Delta(u_{p_{i+1}})$.
Let $\suntil{[a,b]}{\varphi_1}{\varphi_2} \in \Delta_t$.
If $t < a$, then $t(u_{p_i}) \not\in [a,b]$, and expansion rules from Table~\ref{tab:expansion-rules}
do not apply to $\suntil[J]{[a,b]}{\varphi_1}{\varphi_2}$, leaving it unmarked.
Thus, the \textsf{STEP} rule just propagates it to $\Gamma(u_{p_i+1})$,
so $\suntil{[a,b]}{\varphi_1}{\varphi_2} \in \Delta_{t+1}$.
Hence, rule \ref{item:model-temp-lta} of Def.~\ref{def:model} is satisfied.
If $a \leq t < b$, then expansion rules apply, and we have either
(a) $\exp^t(\varphi_2) \in \Delta_t$,
or (b) $\exp^t(\varphi_1) \in \Delta_t$ and $\msuntil[J]{I}{\varphi_1}{\varphi_2} \in \Gamma(u_{p_i})$.
In case (a), rule \ref{item:model-suntil-atb} is clearly satisfied.
In case (b), the \textsf{STEP} rule inserts $\suntil[J]{I}{\varphi_1}{\varphi_2} \in \Gamma(u_{p_i + 1})$,
hence $\suntil{I}{\varphi_1}{\varphi_2} \in \Delta(u_{p_{i+1}})$,
which satisfies rule \ref{item:model-suntil-atb}.
A similar argument shows that $\Delta_{t+1}$ satisfies the rules of Def.~\ref{def:model}
when $t = b$ and for the $\releaseop$ operator.

If the \textsf{JUMP} rule is applied to $u_{p_i}$, then $t(u_{p_{i+1}}) > t + 1$.
We set $\Delta_{t(u_{p_{i+1}})} = \Delta(u_{p_{i+1}})$
and synthesize the atoms from $\Delta_{t+1}$ to $\Delta_{t(u_{p_{i+1}})-1}$.
For each $j = 1, \dots, k$, where $k = t(u_{p_{i+1}}) - t$, we define
\begin{align*}
\delta_{j+1} =
&\set{\anybinop[J]{I}{\varphi_1}{\varphi_2} \in \Gamma(u_{p_i}) \mid \mathsf{B} \in \set{\suntilop, \sreleaseop} \land t(u_{p_i}) \not\in J} \\
&\cup
\set{\anybinop[J]{[a+j, b+j]}{\varphi_1}{\varphi_2} \mid \anybinop[J]{[a,b]}{\varphi_1}{\varphi_2} \in \Gamma(u_{p_i}) \land \mathsf{B} \in \set{\suntilop, \sreleaseop} \land t(u_{p_i}) \in J} \\
&\cup
\set{\anybinop[J]{[a,b]}{\varphi_1}{\varphi_2} \mid \stlbinop[J]{\marked{\mathsf{B}}}{[a,b]}{\varphi_1}{\varphi_2} \in \Gamma(u_{p_i}) \land \mathsf{B} \in \set{\suntilop, \sreleaseop} \land t(u_{p_i}) \not\in J \land t(u_{p_i}) < b} \\
&\cup
\set{\anybinop[J]{[a+j,b+j]}{\varphi_1}{\varphi_2} \mid \stlbinop[J]{\marked{\mathsf{B}}}{[a,b]}{\varphi_1}{\varphi_2} \in \Gamma(u_{p_i}) \land \mathsf{B} \in \set{\suntilop, \sreleaseop} \land t(u_{p_i}) \in J \land t(u_{p_i}) < b}
\end{align*}
We then build $\overline{\Delta}_{t+j}$ by applying expansion rules from Table~\ref{tab:expansion-rules} to $\delta_j$ until saturation.
When applying the rules for $\suntil[J]{[a,b]}{\varphi_1}{\varphi_2}$
and $\srelease[J]{[a,b]}{\varphi_1}{\varphi_2}$, we always choose $\Gamma_\phi(u_2)$ unless $t+j = b$;
when applying the rule for $\varphi_1 \lor \varphi_2$,
we extract $\varphi_1$ or $\varphi_2$ by repeating the choice that was made during the expansions
that lead to $u_{p_i}$
(note that formulas of the form $\varphi_1 \lor \varphi_2$ can only appear after being extracted from a temporal operator, and all temporal operators in $\Gamma(u_{p_{i+1}})$ must be already present in $\Gamma(u_{p_i})$ because of how the jump length $k$ is defined).
Finally, we set $\Delta_{t+j} = \unmark(\overline{\Delta}_{t+j})$.

We show that $\Delta_{t+j-1}$ satisfies the rules of Def.~\ref{def:model} for all $j = 1, \dots, k$.
Note that for any formula $\anybinop[J]{[a,b]}{\varphi_1}{\varphi_2} \in \overline{\Delta}_{t+j-1}$,
the truth of $t+j-1 \in J$ and whether $t+j-1 < a$ or $a \leq t+j-1 < b$
is the same for all $j = 1, \dots, k$.
If this were not the case, and there were a time $t+h$ in which one of such relation changes,
then $t+h$ would be a bound of some operator, and it would be included in $K(u_{p_i})$ from the definition of the \textsf{JUMP} rule,
and thus $t(u_{p_{i+1}}) = t+h$.

Let $\suntil[J]{[a,b]}{\varphi_1}{\varphi_2} \in \overline{\Delta}_{t+j-1}$ with $t+j-1 < a$
(the argument for the release operator is analogous).
If $t+j-1 \not\in J$, then $\suntil[J]{[a,b]}{\varphi_1}{\varphi_2} \in \delta_{t+j}$ and is also in $\Delta_{t+j}$,
which satisfies rule \ref{item:model-temp-lta} of Def.~\ref{def:model}.
If $t+j-1 \in J$, then by construction
$\suntil[J]{[a-j+1,b-j+1]}{\varphi_1}{\varphi_2} \in \Gamma(u_{p_i})$,
and by Lemma~\ref{lemma:operator-ladder} also
$\suntil[J]{[a-j,b-j]}{\varphi_1}{\varphi_2} \in \Gamma(u_{p_i})$
(or $\msuntil[J]{[a-j,b-j]}{\varphi_1}{\varphi_2} \in \Gamma(u_{p_i})$ if $t+j = a$),
and hence $\suntil[J]{[a,b]}{\varphi_1}{\varphi_2} \in \delta_{t+j}$
(and $\suntil{[a,b]}{\varphi_1}{\varphi_2} \in \Delta_{t+j}$).

Let $a \leq t+j-1 < b$.
We have $t+j-1 \in J$ iff $t(u_{p_i}) \in J$.
Hence, if $t+j-1 \not\in J$, then $\suntil[J]{[a,b]}{\varphi_1}{\varphi_2} \in \delta_{t+j}$
because $\msuntil[J]{[a,b]}{\varphi_1}{\varphi_2} \in \Gamma(u_{p_i})$.
Since we always used $\Gamma_\phi(u_2)$ when applying the expansion rules to $\delta_{t+j-1}$ to obtain $\Delta_{t+j-1}$,
we have $\exp^{t+j-1}(\varphi_1) \in \Delta_{t+j-1}$, which satisfies rule \ref{item:model-suntil-atb} of Def.~\ref{def:model}.
If, on the other hand, $t+j-1 \in J$, then by Lemma~\ref{lemma:operator-ladder}
we have $\msuntil[J]{[a-j,b-j]}{\varphi_1}{\varphi_2} \in \Gamma(u_{p_i})$,
and so $\suntil[J]{[a,b]}{\varphi_1}{\varphi_2} \in \overline{\Delta}_{t+j}$.
Again, $\exp^{t+j-1}(\varphi_1) \in \Delta_{t+j-1}$, which satisfies rule \ref{item:model-suntil-atb}.

If $t+j-1 = b$, then we must have $t+j-1 \in J$ (or the \textsf{JUMP} rule would not apply).
In this case, while building $\Delta_{t+j-1}$ the $\Gamma_\phi(u_1)$ part of the expansion rule was chosen,
and $\exp^{t+j-1}(\varphi_2) \in \Delta_{t+j-1}$,
which satisfies rule \ref{item:model-suntil-eqb} of Def.~\ref{def:model}.
\end{proof}

\subsection{Completeness}
The tableau method is complete: we prove that if a formula is satisfiable,
then the tableau built for the formula has at least one accepting branch.
We first prove the completeness of the basic tableau, built by only using the \textsf{STEP} rule,
and then show that if a formula has an accepting branch in it,
it has one also in a tableau using the \textsf{JUMP} rule.

Lemma~\ref{lemma:model} states that if there exists a signal that satisfies $\phi$,
then also a model for it exists.
We first show that, given a model for $\phi$,
we can descend the tableau built only by the \textsf{STEP} rule and find an accepting branch for $\phi$.
\begin{lemma}
\label{lemma:step-complete}
If a formula $\phi$ is satisfiable, then its basic tableau has an accepting branch.
\end{lemma}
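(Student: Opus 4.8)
The plan is to turn a semantic witness into a syntactic branch, using the notion of \emph{model} as the bridge. By Lemma~\ref{lemma:model}, since $\phi$ is satisfiable it has a model $\mathbf{\Delta} = \Delta_0 \Delta_1 \dots \Delta_n$ with $n \le \horiz(\phi)$ whose constraint sets $\set{f(R) = k, f(R) > k \in \Delta_i}$ are all consistent; moreover, by the construction used in the proof of that lemma there is a single signal $w$ that fulfils at time $i$ every formula in $\Delta_i$. I would use $\mathbf{\Delta}$ as an oracle resolving the only nondeterminism of the basic tableau, namely which child to pick in the category-P and category-S expansion rules: at a node $v$ with $t(v) = t$ containing a formula to which such a rule applies, pick the child prescribed by the clause of Def.~\ref{def:model} witnessed by $\Delta_t$ --- e.g.\ for $\varphi_1 \lor \varphi_2$ the child carrying whichever disjunct lies in $\Delta_t$, and for $\suntil{[a,b]}{\varphi_1}{\varphi_2}$ with $t \in [a,b]$ the child $u_1$ when $\exp^t(\varphi_2) \in \Delta_t$ and $u_2$ otherwise. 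When no expansion rule applies the node is poised and I apply \textsf{STEP}.

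The core of the argument is the invariant that every node $v$ on the constructed branch satisfies $\unmark(\Gamma(v)) \subseteq \Delta_{t(v)}$. The base case is $\Gamma(u_0) = \set{\phi}$ with $\phi \in \Delta_0$. For expansion rules the invariant is preserved because clauses \ref{item:model-or}--\ref{item:model-srelease-eqb} of Def.~\ref{def:model} are exactly the deterministic shadow of the category-P and category-S rules: the formulas added in the child selected by the oracle already belong to $\Delta_t$, using that $\exp^t$ denotes the same shift in the tableau and in the model. For the \textsf{STEP} transition from a poised $u$ at time $t$ to its child $u'$ at time $t+1$: an unmarked outer operator $\anybinop{[a,b]}{\varphi_1}{\varphi_2}$ that survived saturation must have $t \notin [a,b]$, and in fact $t < a$ (at a time $\ge a$ the S-rule would have fired, and at time $b$ the oracle, by clause~\ref{item:model-suntil-eqb}/\ref{item:model-srelease-eqb}, resolves it to $u_1$ rather than leaving it, so it never outlives its interval unmarked), whence clause~\ref{item:model-temp-lta} puts it in $\Delta_{t+1}$; a marked operator retained by \textsf{STEP} (i.e.\ with $t < b$) was marked only because the oracle chose $u_2$ at time $t$ with $a \le t < b$, so clauses~\ref{item:model-suntil-atb}/\ref{item:model-srelease-atb} put its unmarked form in $\Delta_{t+1}$.

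It then remains to rule out the rejecting rules and land on an accepting node. \textsf{FALSE} cannot fire since $(w,t) \not\models \neg\top$, hence $\neg\top \notin \Delta_t \supseteq \unmark(\Gamma(u))$; \textsf{LOCALLY-UNSAT} cannot fire since the constraint set of $\Gamma(u)$ is a subset of that of $\Delta_t$, consistent by Lemma~\ref{lemma:model}. \textsf{UNTIL} cannot fire: a marked $\msuntil{[a,b]}{\varphi_1}{\varphi_2} \in \Gamma(u)$ with $t(u) = b$ could only have arisen by choosing $u_2$ when expanding $\suntil{[a,b]}{\varphi_1}{\varphi_2} \in \Delta_b$ at time $b$, but clause~\ref{item:model-suntil-eqb} forces $\exp^b(\varphi_2) \in \Delta_b$, so the oracle would have chosen $u_1$ --- contradiction (the $\meven{[a,b]} \varphi$ case is the instance with left argument $\top$). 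Finally, the Termination theorem makes the branch finite; its last node $u_f$ contains no temporal operator, since otherwise an expansion rule or \textsf{STEP} would still be applicable and $u_f$ would not be last, so \textsf{EMPTY} accepts $u_f$ and the branch is accepting.

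I expect the main difficulty to lie in the bookkeeping of the invariant: keeping the $\exp^t$-shifts of nested operators synchronized between the tableau rules and the model clauses, and carrying out the case analysis that ties together \textsf{STEP}, \textsf{UNTIL} and the model clauses --- in particular the fact that following the oracle keeps every unmarked operator strictly before the end of its interval (so that \textsf{STEP} terminates and clause~\ref{item:model-temp-lta} always applies), and the treatment of operators sitting at the very end of the model ($b = n$), which must be resolved rather than marked. These are precisely the points where the faithfulness of Def.~\ref{def:model} as a deterministic counterpart of the category-P and category-S expansion rules is essential.
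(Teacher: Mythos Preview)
Your proposal is correct and follows essentially the same approach as the paper: use the model $\mathbf{\Delta}$ from Lemma~\ref{lemma:model} as an oracle for the expansion-rule choices, maintain the invariant $\unmark(\Gamma(v)) \subseteq \Delta_{t(v)}$ along the guided branch, and argue that no rejecting rule can fire, so the finite branch (by Termination) ends in an \textsf{EMPTY}-accepted node. If anything, your treatment of the \textsf{STEP} invariant and the \textsf{UNTIL}-rejection case is slightly more explicit than the paper's.
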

\begin{proof}
Let $\mathbf{\Delta} = \Delta_0 \Delta_1 \dots \Delta_m$ be a model for $\phi$.
We inductively build a node sequence $\mathbf{u} = u_0, \dots, u_n$ such that $u_n$ is a leaf
and for all $i = 0, \dots, n$, $u_i$ is not rejected, and $\unmark(\Gamma(u_i)) \subseteq \Delta_{t(u_i)}$.

The base case is trivially proved, because $\Gamma(u_0) = \set{\phi} \subseteq \Delta_0$.

Let $i > 0$ with $\Gamma(u_i) \subseteq \Delta_{t(u_i)}$.
We choose $u_{i+1}$ as follows.
\begin{itemize}
\item If $u_i$ is not a poised node, then an expansion rule from Table~\ref{tab:expansion-rules} has been applied to it.
    If it is the rule for the $\land$ operator, $u_i$ has only one child, which we set to $u_{i+1}$.
    Otherwise, $u_i$ has two children $u'$ and $u''$ in which one formula $\psi$ is replaced by
    the contents of either $\Gamma_\psi(u_1)$ or $\Gamma_\psi(u_2)$ from Table~\ref{tab:expansion-rules}.
    In both cases, we have either $\unmark(\Gamma(u')) \subseteq \Delta_{t(u_i)}$ or $\unmark(\Gamma(u'')) \subseteq \Delta_{t(u_i)}$
    according to Def.~\ref{def:model}, and we choose $u_{i+1}$ accordingly.
    In particular, if $\psi$ is a temporal operator, the expansion rules are only applied if
    $t(u_i)$ falls within its interval, satisfying rules \ref{item:model-suntil-atb}--\ref{item:model-srelease-eqb}
    of Def.~\ref{def:model}.

    Expansion rules may cause the appearance of terms in $\Gamma(u_{i+1})$.
    By Def.~\ref{def:model}, there exists a signal that satisfies them,
    hence rules \textsf{FALSE} and \textsf{LOCALLY-UNSAT} do not reject $u_{i+1}$.
    Moreover, by rule \ref{item:model-suntil-eqb} all until operators are satisfied in $\mathbf{\Delta}$,
    hence $\Delta_{t(u_i)}$ does not contain any formula $\suntil{[a,b]}{\varphi_1}{\varphi_2}$ with $b = t(u_i)$,
    and since $\Gamma(u_{i+1}) \subseteq \Delta_{t(u_{i+1})}$, the \textsf{UNTIL} rule cannot reject $u_{i+1}$.

\item If $u_i$ is a poised node containing temporal operators,
    we set $u_{i+1}$ to the only child of $u_i$ generated by the \textsf{STEP} rule.
    The \textsf{STEP} rule does not introduce terms into the new node's label,
    so rules \textsf{FALSE} and \textsf{LOCALLY-UNSAT} cannot reject it.
    Again, rule \ref{item:model-suntil-eqb} of Def.~\ref{def:model}
    prevents the \textsf{UNTIL} rule from rejecting $u_{i+1}$.
\end{itemize}
In both cases, $u_{i+1}$ may contain no temporal operators.
If no more expansion rules can be applied to it, $u_{i+1}$ is accepted by the \textsf{EMPTY} rule, and $i + 1 = n$.
Since, by Def.~\ref{def:time-horizon}, $\phi$ has a finite time horizon $\horiz(\phi)$,
an accepted node $u_n$ will eventually be reached such that $t(u_n) \leq \horiz(\phi)$.
\end{proof}

\begin{lemma}
\label{lemma:jump-complete}
If the basic tableau for a formula $\phi$ has an accepted branch,
then the tableau built by using the \textsf{JUMP} rule also has an accepted branch.
\end{lemma}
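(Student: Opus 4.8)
The plan is to descend the tableau with the \textsf{JUMP} rule while maintaining a correspondence with an accepting computation of the basic tableau, where the correspondence is given by $\unmark(\cdot)$, which erases interval superscripts and unmarks operators. Concretely, I would prove the following slightly stronger statement by induction: for every node $v$ of the \textsf{JUMP} tableau, if the basic tableau started from a node labelled $\unmark(\Gamma(v))$ with time counter $t(v)$ has an accepting branch, then so does the \textsf{JUMP} sub-tableau rooted at $v$. Applying this at the root settles the lemma, since the \textsf{JUMP}-tableau root label and the basic root label coincide after erasing superscripts. The induction is lexicographic, with principal measure $\horiz(\phi) - t(v)$ (which strictly decreases on \textsf{STEP} and \textsf{JUMP} steps) and a secondary well-founded measure on $\Gamma(v)$ that decreases on expansion steps.

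For the inductive step, the expansion and \textsf{STEP} cases are routine bookkeeping: the P and J expansion rules of Table~\ref{tab:expansion-rules} mirror the P and S rules once superscripts are erased, because $\exp^t_I$ and $\exp^t$ agree modulo superscripts, and \textsf{STEP} behaves identically modulo superscripts; one simply follows whichever child lies on the given basic accepting branch, observing that the termination rules \textsf{FALSE}, \textsf{LOCALLY-UNSAT}, and \textsf{UNTIL} cannot reject a node whose erased label appears along an accepting basic branch. The real work is the \textsf{JUMP} case: $v$ acquires a single child $v'$ with $t(v') = t(v)+k$ for $k$ determined by $K(v)$, and $\Gamma(v')$ is assembled from the four sets \eqref{eq:jump-outer-unmarked}--\eqref{eq:jump-derived-marked}. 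I would first note that an accepting basic branch through $\unmark(\Gamma(v))$ witnesses that $\bigwedge \unmark(\Gamma(v))$ is satisfiable from time $t(v)$ (reconstruct a satisfying signal from the accepting branch as in the soundness proof, together with Lemma~\ref{lemma:model}), and then prove that $\bigwedge \unmark(\Gamma(v'))$ is satisfiable from time $t(v)+k$. Granted this, a routine generalization of Lemma~\ref{lemma:step-complete} to an arbitrary initial node yields an accepting basic branch through $\unmark(\Gamma(v'))$, and since $t(v') > t(v)$ the induction hypothesis applies and closes the case.

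The crux, and the step I expect to be the main obstacle, is constructing from a signal $w$ satisfying $\unmark(\Gamma(v))$ at time $t(v)$ a signal satisfying $\unmark(\Gamma(v'))$ at time $t(v)+k$. The construction is the ``stretch'' implicit in Sec.~\ref{sec:reconstructing-signals}: repeat the value $w(t(v))$ over the $k-1$ instants skipped by the jump, then resume with $w$ shifted by $k-1$. One must verify that each group of operators in $\Gamma(v')$ survives this stretch. For the outer operators of \eqref{eq:jump-outer-unmarked} and \eqref{eq:jump-outer-marked} --- those with $t(v)\notin J$, carried over with unchanged intervals --- this is almost immediate, because the definition of $K(v)$ forces the inserted instants to lie strictly before every interval bound of such operators, so their truth at $t(v)+k$ matches their truth at $t(v)$; the side condition $t(v) < b$ keeps the marked obligations live. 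For the derived operators of \eqref{eq:jump-derived-unmarked} and \eqref{eq:jump-derived-marked} --- those with $t(v)\in J$, whose intervals the \textsf{JUMP} rule shifts by $k$ --- the shift is exactly what compensates the stretch: the extra precondition for \textsf{JUMP}, namely that $t(v)$ has reached or passed the interval end of every operator nested in $\varphi_1$, places the branch in the stationary regime of Lemma~\ref{lemma:operator-ladder}, so that repeating the time-$t(v)$ instant while translating intervals by $k$ faithfully simulates $k$ basic \textsf{STEP}s. Checking finally that the per-instant linear-arithmetic constraints stay consistent (they are copies, or shifted copies, of constraints already satisfied by $w$) completes the stretching argument, and with it the proof; the delicate bookkeeping of how the shifted intervals of derived operators line up with the stretched signal, mediated by Lemma~\ref{lemma:operator-ladder}, is where most care is needed.
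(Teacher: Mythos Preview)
Your inductive framework and the expansion/\textsf{STEP} cases are fine; the gap is the ``stretch'' in the \textsf{JUMP} case, which fails for the outer operators in \eqref{eq:jump-outer-unmarked} and \eqref{eq:jump-outer-marked} whose intervals are \emph{not} shifted. Take the poised node $\Gamma(v)=\{\meven{[0,100]} p,\ \mglob{[0,50]} \neg p,\ \neg p\}$ at $t(v)=0$, reached from $\even{[0,100]} p \land \glob{[0,50]} \neg p$. Here \textsf{JUMP} applies with $k=50$, giving $\Gamma(v')=\{\even{[0,100]} p,\ \glob{[0,50]} \neg p\}$ with both intervals unchanged. Any $w$ witnessing $\unmark(\Gamma(v))$ from time $0$ has $\neg p$ on $[0,50]$, so its $\evenop$-witness lies at some $j\in[51,100]$. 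Your stretched $w'$ shifts the tail by $k-1=49$, moving that witness to $j+49$; for every $j>51$ this falls outside $[50,100]$, so $w'$ does not satisfy $\even{[0,100]} p$ from time $50$. The sentence ``$K(v)$ forces the inserted instants to lie strictly before every interval bound \dots\ so their truth at $t(v)+k$ matches their truth at $t(v)$'' conflates tableau structure with signal semantics: no bound lies in $(t(v),t(v'))$, but shifting the tail still drags an $\evenop$-witness out of its fixed, unshifted interval. In this example the unstretched $w$ happens to work, but once derived operators with $k$-shifted intervals are also present, no single uniform transformation of $w$ serves both groups.

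The paper avoids signals altogether and works directly on the accepting basic branch $\mathbf{u}$. At the first poised node $u_{p_i}$ where \textsf{JUMP} applies, if some outer $\suntilop$-operator $\psi\in\Gamma(u_{p_i})$ gets satisfied at a time in $(t(u_{p_i}),t(u_{p_i})+k]$ along $\mathbf{u}$, the paper exhibits an alternative accepting basic branch that chooses $\Gamma_\psi(u_1)$ already in the expansion phase at time $t(u_{p_i})$, landing on a \emph{smaller} poised node with $\psi$ removed. Iterating eliminates all such $\psi$; the remaining branch always takes the deferring option $\Gamma_\psi(u_2)$ on $[t(u_{p_i}),t(u_{p_i})+k)$, so $k$ basic \textsf{STEP}s reproduce exactly the \textsf{JUMP} child (Lemma~\ref{lemma:operator-ladder} handles the derived operators). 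Note that this reshaping may change which poised node one passes through at time $t(v)$---in your per-node induction it corresponds to moving to a sibling of $v$---so repairing your argument would still require showing that deferral past $t(v')$ is always possible, which is essentially the paper's reshaping transplanted into your framework.
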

\begin{proof}
Let $\mathbf{u} = u_0 \dots u_n$ be an accepted branch in the basic tableau built for a formula $\phi$,
and $\textbf{p} = p_0 \dots p_m$ be the sequence of indices
such that $u_{p_0} \dots u_{p_m}$ is the sequence of poised nodes in $\mathbf{u}$.
Let $u_{p_i}$ be a poised node in $\mathbf{u}$ such that the \textsf{JUMP} rule applies to $u_{p_i}$ but not to $u_{p_{i-1}}$.
The tableau built by using the \textsf{JUMP} rule contains a branch in which $u_{p_i}$ is followed by $u'_{p_i}$,
the node generated by the \textsf{JUMP} rule.

If for each application of the expansion rules in all nodes $v \in \mathbf{u}$ such that $t(u_i) \leq t(v) \leq t(u'_{p_i})$
the $\Gamma(u_2)$ part of Table~\ref{tab:expansion-rules} had been chosen for temporal operators
except when $t(v)$ is equal to their interval's upper bound,
and the same choice as was made between $u_{p_i-1}$ and $u_{p_i}$ was made for the $\lor$ operator,
then we would have $u_{p_{i+k}} = u'_{p_i}$, where $k$ is the jump length $t(u'_{p_i}) - t(u_{p_i})$.

This can be proved by noting that, due to the definition of $k$,
for each temporal operator $\psi = \anybinop[J]{[a,b]}{\varphi_1}{\varphi_2} \in \Gamma(u_{p_i})$
such that $t(u_{p_i}) \not\in J$, either $t(v) < a$ or $a \leq t(v) < b$
for all $v \in \mathbf{u}$ such that $t(u_i) \leq t(v) < t(u'_{p_i})$.
Thus, if $t(u_i) < a$, trivially $\psi \in \Gamma(u_{p_{i+j}})$ for all $j = 0, \dots, k$,
and the same can be proved by induction if $a \leq t(u_i) < b$ by showing
that the $\Gamma_\psi(u_2)$ part of Table~\ref{tab:expansion-rules} can always be chosen when expanding $\psi$.
For formulas $\psi$ such that $t(u_{p_i}) \in J$,
the claim that $\psi \in \Gamma(u_{p_{i+k}})$ follows from Lemma~\ref{lemma:operator-ladder}.

If, indeed, $\mathbf{u}$ is such that $u_{p_{i+k}} = u'_{p_i}$, then the claim follows trivially.
Otherwise, there is a set $\Psi$ of temporal operators $\psi \in \Gamma(u_{p_i})$
such that for some $j \leq k$ we have $\psi \not\in \Gamma(u_{p_{i+j}})$.

Let w.l.o.g.\ $\psi = \suntil[J]{[a,b]}{\varphi_1}{\varphi_2}$ with $t(u_{p_i}) \not\in J$
(the argument for the release operator is analogous).
If $\psi \not\in \Gamma(u_{p_{i+j}})$ then we know that in some node between $u_{p_{i+j-1}}$ and $u_{p_{i+j}}$
the $\Gamma_\psi(u_1)$ part of the until expansion rule in Table~\ref{tab:expansion-rules} was applied,
marking the satisfaction of $\psi$.
We argue that there exists an accepting branch $\mathbf{v}$ in the basic tableau
such that $v_h = u_h$ for all $h = 0, \dots, p_{i-1}$, and $v_{p_i}$ is a poised node such that
$\Gamma(v_{p_i}) \subseteq \Gamma(u_{p_i}) \setminus \set{\psi}$ and the \textsf{JUMP} rule can be applied to it.
Such branch is the one obtained by repeating all applications of the expansion rules as in $\mathbf{u}$,
except for the one of $\psi$ in the nodes between $v_{p_{i-1}+1}$ and $v_{p_i}$, for which we choose $\Gamma_\psi(u_1)$.
This choice does not prevent $\mathbf{v}$ from reaching an accepted node,
because it does not cause the nodes up to $v_{p_i}$ to be rejected.
Indeed, the \textsf{UNTIL} rejection rule cannot be triggered because $\Gamma_\psi(u_1)$ does not include $\psi$.
The \textsf{FALSE} and \textsf{LOCALLY-UNSAT} rules can only be triggered if $v_{p_i}$ contains some term that is not in $u_{p_{i+j-1}}$.
Terms, however, only come from the expansion of temporal operators, which means that $v_{p_i}$ contains some temporal operator that $u_{p_{i+j-1}}$ does not.
Let $\theta = \anybinop[H]{[c,d]}{\theta_1}{\theta_2}$ be such an operator.
We have $t(v_{p_i}) \not\in H$, because $t(v_{p_i}) \in H$ and $\theta \in \Gamma(v_{p_i})$ but $\theta \in \Gamma(v_{p_i})$
would contradict Lemma~\ref{lemma:operator-ladder} (note that $\theta$ could also appear marked).
If $\theta$ disappears because the $\Gamma_\theta(u_1)$ part of Table~\ref{tab:expansion-rules} is applied, we can repeat the reasoning with $\psi = \theta$.
If $\theta$ disappears because $t(v_{p_i}) < d < t(u_{p_{i+j}})$,
then we have $d \in K$, and hence $k = d$, which contradicts $d < t(u_{p_{i+j}})$.
If $t(v_{p_i}) = t(u_{p_i}) = d$, then the \textsf{JUMP} rule does not apply to $u_{p_i}$,
which also leads to a contradiction.

If we have $\psi$ with $t(u_{p_i}) \in J$ and $u_{p_{i+j}}$ is the first node such that $\psi \not\in \Gamma(u_{p_{i+j}})$,
then by a similar reasoning we can argue that
the $\Gamma_\psi(u_1)$ part of the until expansion rule can also be applied to
$\psi' = \suntil[J]{[a-j,b-j]}{\varphi_1}{\varphi_2}$ between $u_{p_{i-1}+1}$ and $u_{p_i}$,
and another accepting branch $\mathbf{v}$ exists equal to $\mathbf{u}$ except that
the $\Gamma_\psi(u_1)$ part of the expansion rule is always chosen for periodic occurrences of $\psi$
between the first node $v$ such that $t(v) = t(u_{p_i})$ and the last node $v'$ such that $t(v') = t(u_{p_{i+k}})$.

By taking branch $\mathbf{v}$ as described above for all formulas in $\Phi$, we can always find an accepting branch in the tableau built without using the \textsf{JUMP} rule.
\end{proof}

From Lemmas~\ref{lemma:step-complete} and \ref{lemma:jump-complete} follows
\begin{theorem}[Completeness]
If a formula is satisfiable, then the tableau rooted in it has an accepted branch.
\end{theorem}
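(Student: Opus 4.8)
The plan is simply to chain the two preceding lemmas, together with Lemma~\ref{lemma:model}, into a short composition: the statement to be proved is precisely that the \emph{full} tableau (the one that may use the \textsf{JUMP} rule) has an accepted branch whenever its root formula $\phi$ is satisfiable, so no new machinery is needed. First I would assume $\phi$ satisfiable and invoke Lemma~\ref{lemma:model} to obtain a model $\mathbf{\Delta} = \Delta_0 \dots \Delta_n$ of $\phi$, all of whose linear-arithmetic constraint sets $\set{f(R) = k, f(R) > k \in \Delta_i}$ are consistent. This converts the semantic hypothesis into the combinatorial object the tableau construction can track.

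Next I would apply Lemma~\ref{lemma:step-complete}: given the model $\mathbf{\Delta}$, it constructs a branch $u_0 \dots u_n$ of the \emph{basic} tableau along which no node is rejected, each node satisfies $\unmark(\Gamma(u_i)) \subseteq \Delta_{t(u_i)}$, and the last node is accepted by \textsf{EMPTY} within the finite time horizon $\horiz(\phi)$. Hence the basic tableau for $\phi$ has an accepted branch. Finally, Lemma~\ref{lemma:jump-complete} takes this accepted branch of the basic tableau and produces an accepted branch of the tableau that uses the \textsf{JUMP} rule, which is exactly the claim. Composing these three steps yields the theorem.

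The only delicate content is already discharged inside Lemma~\ref{lemma:jump-complete}, so the main obstacle for the theorem itself is negligible, but it is worth recalling where the weight sits: when the \textsf{JUMP} rule fires at a poised node $u_{p_i}$, one must match the jumped-to node with the node $u_{p_{i+k}}$ that $k$ consecutive \textsf{STEP} applications would reach in the basic branch---either directly, by always taking the postponing child $\Gamma_\phi(u_2)$ of the expansion rules and using Lemma~\ref{lemma:operator-ladder} to handle the nested operators that are regenerated periodically, or, if some $\suntilop$/$\sreleaseop$ operator got discharged along the way, by rerouting to an alternative accepted basic branch in which that operator is discharged already before $u_{p_i}$. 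That the reroute cannot trigger \textsf{FALSE}, \textsf{LOCALLY-UNSAT}, or \textsf{UNTIL} hinges on the jump length $k$ being the next operator bound after $t(u_{p_i})$, so that no interval-membership relation ($t < a$, $a \le t < b$, or $t \in J$) changes strictly inside the jumped interval.
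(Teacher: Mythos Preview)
Your proposal is correct and matches the paper's approach exactly: the paper's proof of the theorem is the single sentence ``From Lemmas~\ref{lemma:step-complete} and~\ref{lemma:jump-complete} follows,'' i.e., precisely the composition you describe (your explicit mention of Lemma~\ref{lemma:model} is harmless but redundant, since Lemma~\ref{lemma:step-complete} already takes satisfiability as its hypothesis). The extra paragraph recapitulating the argument of Lemma~\ref{lemma:jump-complete} is accurate commentary but not needed for the theorem itself.
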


\subsection{Complexity}
\label{sec:complexity}

Consider formula $\glob{[0, k_1]} \even{[0, k_2]} p \land \glob{[0, k_1 + k_2]} \neg p$.
An accepted branch in the tableau rooted in it must have at least $k_1 + k_2$ poised nodes,
because the \textsf{JUMP} rule can only be applied to nodes $u$ with $t(u) \geq k_1 + k_2$.
The tableau can be implemented as a depth-first search
that keeps in memory only the current branch,
using memory at most exponential in the binary representation of $k_1$ and $k_2$.
Local consistency checks can be performed in polynomial time by a linear real arithmetic solver~\cite{Khachiyan80}, so they do not dominate complexity.

A lower bound for STL satisfiability follows from the satisfiability problem
of MTL on discrete time domains, which was proven EXPSPACE-hard
by a reduction from EXPSPACE-bounded Turing machines \cite{AlurH93}.
Thus, discrete-time STL satisfiability is EXPSPACE-complete.

\section{Heuristics and Optimizations}
\label{sec:heur}

Our implementation includes heuristics and other optimizations that speed up tableau construction.
We describe them in this section, and empirically test their effectiveness in Sec.~\ref{sec:ablation}.

\subsection{Early termination}
\label{sec:early-termination}
Our implementation of the tableau stops generating the tree as soon as it finds an accepted branch,
instead of building the complete tableau.


\subsection{Syntactic Manipulations and other operator-specific rules}
\label{sec:syntactic-manip}

\paragraph{Specialized Rules}
We already introduced special rules for the $\globop$ and $\evenop$ operators in Sec.~\ref{sec:tableau}.
We also use a special rule for implication: instead of using the standard equivalence $\varphi_1 \implies \varphi_2 \equiv \neg \varphi_1 \lor \varphi_2$,
we add to category P of Table~\ref{tab:expansion-rules} a new expansion rule that generates two nodes in which $\varphi_1 \implies \varphi_2$ is respectively replaced with $\neg \varphi_1$ and $\varphi_1, \varphi_2$.
We observed empirically that asserting the antecedent together with the subsequent often improves performances
by causing the rejection of redundant subtrees, despite the resulting node being larger.

Nested temporal operators can make the tableau exponentially large.
To mitigate this problem, we add an expansion rule that replaces $\glob{I} \even{I'}$ formulas according to the equivalence
\[
\glob{[a,b]} \even{[c,d]} \varphi \equiv
(\textcolor{blue}{\even{[a+c+1,a+d]} \varphi} \lor (\textcolor{purple}{\glob{[a+c,a+c]} \varphi \land \glob{[a+d+1,a+d+1]} \varphi})) \land \glob{[a+2,b]} \even{[c,d]} \varphi
\]
The right-hand-side formula ``unrolls'' $\glob{[a,b]}$ by two time units:
$\glob{[a,a+1]} \even{[c,d]} \varphi$ can be decomposed in
$\even{[a+c,a+d]} \varphi \land \even{[a+1+c,a+1+d]} \varphi$,
which is true if, either, $\varphi$ holds anywhere in the overlapping part of the two $\evenop$ intervals
(\textcolor{blue}{blue formula}),
or if it holds at both times $a+c$ and $a+d+1$ (\textcolor{purple}{purple formula}).

\paragraph{Removing redundant formulas}
We use the following equivalences to merge operators with contiguous or overlapping intervals and reduce node size:
\begin{align*}
&\glob{[a,b]} \varphi \land \glob{[c,d]} \varphi \iff \glob{[a,d]} \varphi && \text{if } a \leq c \leq b \leq d \\
&\even{[a,b]} \varphi \land \even{[c,d]} \varphi \iff \even{[c,d]} \varphi && \text{if } c \geq a \land d \leq b.
\end{align*}

\paragraph{Shifting time intervals of nested operators}
We generalize the first step \eqref{eq:railroad-shift-interval} of the inconsistency proof for the example in Sec.~\ref{sec:intro}.
For any operator $\psi = \anybinop{[a,b]}{\varphi_1}{\varphi_2}$, with $\mathsf{B} \in \set{\suntilop, \sreleaseop}$,
such that $\Phi = \pcl(\varphi_1) \cup \pcl(\varphi_2)$ contains only temporal operators,
we compute $c_{\min}$ as the minimum lower bound of all operators in $\Phi$,
and replace $\psi$ with $\anybinop{[a+c_{\min},b+c_{\min}]}{\varphi'_1}{\varphi'_2}$,
where $\varphi'_1$ and $\varphi'_2$ are obtained by shifting all time intervals of operators in $\Phi$
back by $c_{\min}$.
We apply this transformation recursively, starting with innermost formulas and proceeding outwards.

\subsection{Early consistency check}
\label{sec:early-consistency-check}
Rather than performing the local consistency check (i.e., the \textsf{LOCALLY-UNSAT} rule)
only on poised nodes, we do it before each decomposition step, to avoid fully decomposing nodes that already contains inconsistent atomic expressions.

\subsection{Memoization}
\label{sec:memo}
Nodes in the tableau can often be redundant, especially when generated by nested temporal operators.
To speed up tableau construction, we memorize rejected nodes,
and after each application of the \textsf{STEP} and \textsf{JUMP} rules we check if the resulting node
implies one of the previously rejected nodes.
If this is the case, the new node can be safely rejected without building the subtree rooted in it.
Since a proper implication check would be too expensive, we use a weaker one that is quicker,
and exploits the following relations:
\begin{align*}
\glob{[a,b]} \varphi &\implies \glob{[c,d]} \varphi && \text{if } a \leq c \leq d \leq b,
&
\even{[a,b]} \varphi &\implies \even{[c,d]} \varphi && \text{if } c \leq a \leq b \leq d \\
\srelease{[a,b]}{\varphi_1}{\varphi_2} &\implies \srelease{[c,d]}{\varphi_1}{\varphi_2} && \text{if } a = c \leq d \leq b,
&
\suntil{[a,b]}{\varphi_1}{\varphi_2} &\implies \suntil{[c,d]}{\varphi_1}{\varphi_2} && \text{if } c = a \leq b \leq d
\end{align*}

\subsection{Easy operators first}
\label{sec:easy-first}

Checking nested temporal operators can be very slow, especially when the formula is not satisfiable and the tableau must be generated completely.
However, contradiction can often be assessed by checking just non-nested operators that are extracted by the expansion rules.
For instance, consider a tableau node labeled with $\glob{[0,10]} \even{[30,40]} x > 0, \glob{[30,50]} x \leq 0$.
Since the \textsf{JUMP} rule can only be used after time 40,
we must develop the tableau by only applying expansion and the \textsf{STEP} rules until then.
The $\glob{I} \even{I'}$ formula causes the tableau size to grow exponentially,
making its construction infeasible.
However, the first application of the $\globop$ expansion rule creates a node labeled with
$\mglob{[0,10]} \even{[30,40]} x > 0, \textcolor{blue}{\glob{[30,50]} x \leq 0, \even{[30,40]} x > 0}$:
we can infer unsatisfiability just from the last two formulas,
without needing to develop the entire tableau for the original formula.

\begin{wrapfigure}[18]{r}
\footnotesize
\begin{forest}
  for tree={
    myleaf/.style={label=below:{\strut#1}}
  },
  [{$\glob{[0,10]} \even{[30,40]} p, \glob{[30,50]} \neg p \mid \mathbf{0}$}
    [{$\mglob{[0,10]} \even{[30,40]} p, \glob{[30,50]} \neg p, \even{[30,40]} p \mid \mathbf{0}$},
     edge label={node[midway,left,font=\scriptsize\sffamily]{G}},
      [{\textcolor{blue}{$\glob{[30,50]} \neg p, \even{[30,40]} p \mid \mathbf{1}$}},
       edge label={node[midway,left,xshift=20pt,font=\scriptsize\sffamily]{STEP}},
        [{$\glob{[30,50]} \neg p, \even{[30,40]} p \mid \mathbf{30}$},
         edge label={node[midway,left,font=\scriptsize\sffamily]{JUMP}},
          [{$\dots$}, myleaf={\xmark\ \textsf{LOCALLY-UNSAT}}]
        ]
      ],
      [{$\left.\begin{aligned}&\mglob{[0,10]} \even{[30,40]} p,\\&\glob{[30,50]} \neg p, \even{[30,40]} p\end{aligned} \right| \mathbf{1}$}]
    ],
  ]
\end{forest}
\caption{The \textcolor{blue}{blue} node and its sibling are treated as AND nodes.
  The \textsf{JUMP} rule can be applied to the purple node,
  and the tableau quickly shows that its label is unsatisfiable,
  and so must be its sibling. We have $p = x > 0$.}
\label{fig:easy-first-example}
\end{wrapfigure}
Thus, we change the  \textsf{STEP} and \textsf{JUMP} rules to return two children:
one containing only non-nested operators, and one as defined in Sec.~\ref{sec:tableau}.
We treat the poised node as an AND-node in an AND-OR tree:
we check its first child---the one without nested operators---first,
and the second only if the first leads to an accepted branch
(cf.\ Fig.~\ref{fig:easy-first-example}).

This feature, combined with memoization, can prune entire subtrees,
proving very effective in speeding up the tableau for unsatisfiable formulas.

\section{Experimental Evaluation}
\label{sec:exp}

We implemented the tableau for STL in Python, using the Z3 SMT solver~\cite{Z3} to check satisfiability of linear constraints.
The experimental evaluation comprises different benchmark sets taken from literature and industry.
We carried out all the experiments on a machine equipped with a 4.5GHz AMD CPU and 64~GB of RAM running Ubuntu 24.04.
Our tool and benchmarks are available at~\cite{stltree}.

\subsection{STL Benchmarks}
\label{sec:stl-benchmarks}

We compared our implementation of the tableau with the SMT encoding by \citet{RamanDMMSS14},
which we re-implemented without the parts devoted to controller synthesis,
so that it only performs satisfiability checking.
Our implementation discharges constraints on Z3.

This benchmark suite consists of several sets of STL requirements.
The first five sets are adapted from \cite{BaeLee19}. 
They consist of five different models (\emph{Car}, \emph{Thermostat}, \emph{Water Tank}, \emph{Railroad}, and \emph{Battery}) described by four STL properties each.
MTL is a set of requirements for an automatic transmission and fault-tolerant fuel control system adapted from \cite{Benchmarks_Temporal_Logic_Requirements}.
PCV is a set of requirements of a powertrain control from \cite{pcv}.
Lastly, CPS requirements, adapted from \cite{BoufaiedJBBP21}, describe a satellite sub-system called \emph{Attitude Determination and Control System}, which is responsible for controlling the orientation of a satellite with respect to a reference point.

Another set of requirements comes from the aerospace domain and was provided by an industrial partner. It consists of 31 item-level software functional requirements taken from a real system; an example is shown in Fig.~\ref{fig:aerospace-reqs}.
The experimental results are shown for $T=1000$.

\begin{stlbox}[
    float,
    title={Example of STL Requirements from an industrial aerospace project.},
    label={fig:aerospace-reqs}
]
\renewcommand{\arraystretch}{1.1} 
\footnotesize
\begin{tabular}{p{13cm}} 
    $\glob{[0,T]} \big( (\mathit{inactive} \wedge (n_s =1)  \wedge (|X_c - X_b |\leq 5) \wedge $\\$ \glob{[0,5]}(\mathit{airspeed} \geq V_\mathit{min}) \wedge \neg X_\mathit{over} \wedge$ $X_\mathit{ActivationRequest}) \big)\implies \even{[1,3]} \mathit{active})$ \\
    This requirement specifies the condition of transition from \emph{inactive} to \emph{active} state. \\
    
    \midrule

    $\glob{[0,T]}(\mathit{inactive} \implies \even{[0,5]} (\mathit{LMT}_\mathit{ar} \wedge a_\mathit{tone}))$ \\
    This requirement describes what should happen while in a \emph{inactive} state. \\

    \midrule

    $\glob{[0,T]}((X_c \geq 0) \wedge (X_c \leq 360))$ \\
    This requirement constrains the ranges of the parameters/variables. \\
\end{tabular}
\end{stlbox}

\begin{table*}
    \centering
    \caption{Results for the STL benchmarks.
    The second column shows the number of requirements contained in each benchmark.
    Columns 3,4,5 resp.\ report maximum nesting depth of temporal operators, Time Horizon (as per \ref{def:time-horizon}), and temporal operators included in the formula.
    Columns 6,7 show solution time (s) and result using the SMT encoding \cite{RamanDMMSS14}, and columns 8,9 for the Tableau.
    $\top$ and $\bot$ mean resp.\ ``satisfiable'' and ``unsatisfiable''.}
    \label{tab:execution_times}
    \footnotesize
    \begin{tabular}{l r r r c r c r c}
        \toprule
        \textbf{Benchmark} & \# & \textbf{max n} & \textbf{TH} & \textbf{TOps} & \textbf{SMT} & \textbf{Res.} & \textbf{Tableau} & \textbf{Res.} \\
        \midrule
        Car \cite{BaeLee19} & 4 & 2 & 100 & G,F,U & 0.213 & $\top$ & \textbf{0.052} & $\top$\\
        Thermostat \cite{BaeLee19} & 4 & 2 & 40 & G,F,U,R & 0.070 & $\top$  & \textbf{0.037} & $\top$  \\
        Watertank \cite{BaeLee19} & 4 & 2 & 50 & G,F & 0.095 & $\top$ & \textbf{0.040} & $\top$\\
        Railroad \cite{BaeLee19} & 4 & 2 & 100 & G,F & \textbf{0.136} & $\bot$ & 0.239 & $\bot$ \\
        Battery \cite{BaeLee19} & 4 & 2 & 64 & G,F,U & 0.114 & $\top$ & \textbf{0.093} & $\top$ \\
        PCV \cite{pcv} & 9 & 2 & 1,250 & G,F & 19.095 & $\top$ & \textbf{1.195} & $\top$\\
        MTL \cite{Benchmarks_Temporal_Logic_Requirements} & 10 & 3 & 2,020 & G,F & 10.932 & $\bot$ & \textbf{0.696} & $\bot$ \\
        CPS \cite{BoufaiedJBBP21} & 37 & 3 & 13,799 & G,F,U & 120.000 & TO & \textbf{1.909} & $\bot$\\
        Aerospace & 31 & 2 & 1,010 & G,F & 14.627 & $\top$ & \textbf{4.859} & $\top$ \\
        \bottomrule
    \end{tabular}
\end{table*}
The experimental results, summarized in Table \ref{tab:execution_times}, show that the tableau outperforms the SMT encoding 9 times out of 10 for non-trivial formulas with nested temporal operators.
The SMT encoding outperforms the tableau only on the \emph{Railroad} benchmark, from which the example in Sec.~\ref{sec:intro} has been taken.
The combination of these two formulas containing nested temporal operators, in fact, represents a worst-case scenario for the tableau, exacerbated by the requirement set being unsatisfiable, which prevents our implementation from stopping early.
The resulting tableau is, however, highly redundant, which allows memoization to prune a large part of the tableau, whose size would otherwise be doubly exponential in formula size.

The experiments show that three out of the ten requirement sets are inconsistent, further highlighting the usefulness of consistency checking tools in requirement engineering.


\subsection{MLTL Benchmarks}
\label{sec:MLTL}

\begin{figure*}
    \centering
    \begin{subfigure}[b]{0.33\textwidth}
        \centering
        \includegraphics[width=\linewidth]{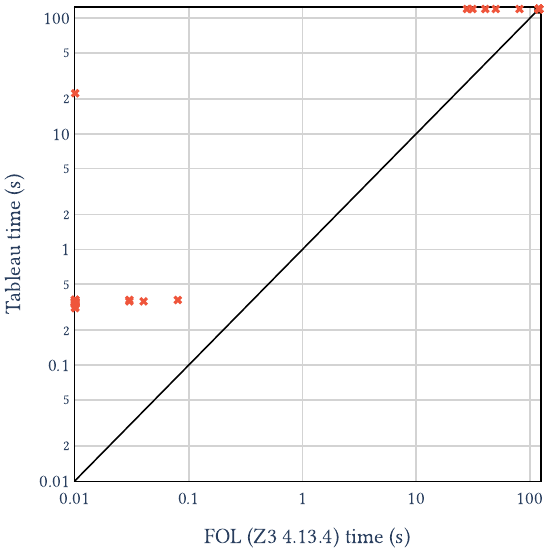}
    \end{subfigure}%
    \begin{subfigure}[b]{0.33\textwidth}
        \centering
        \includegraphics[width=\linewidth]{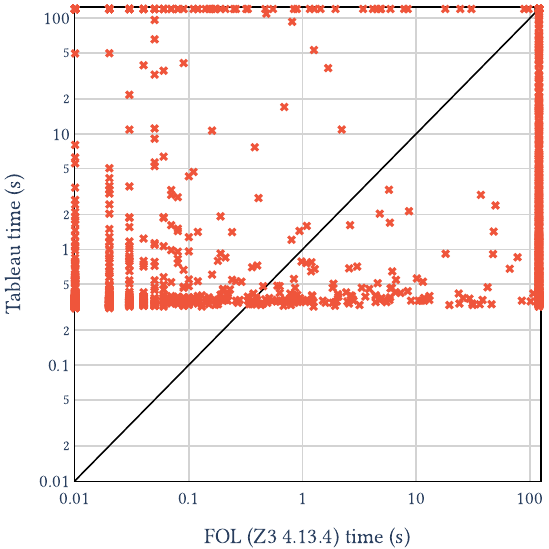}
    \end{subfigure}%
    \begin{subfigure}[b]{0.33\textwidth}
        \centering
        \includegraphics[width=\linewidth]{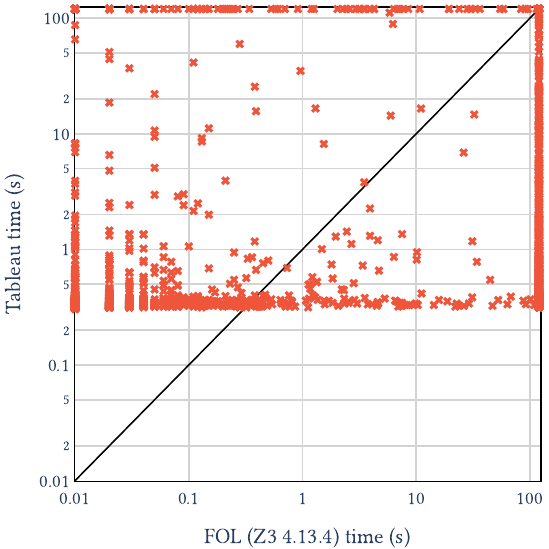}
    \end{subfigure}
    \vspace{1em}

    \begin{subfigure}[b]{0.33\textwidth}
        \centering
        \includegraphics[width=\linewidth]{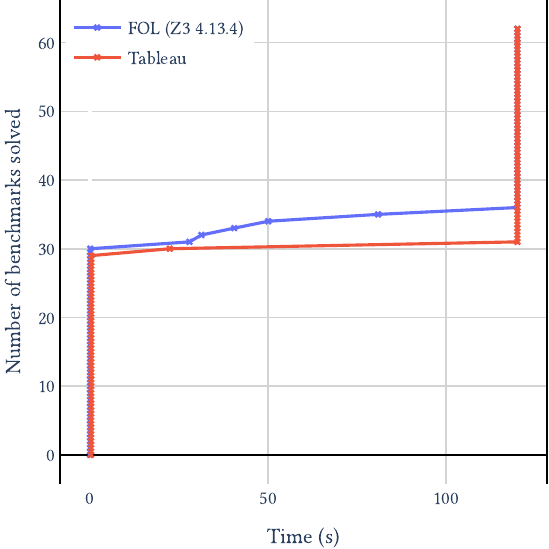}
        \caption{Nasa-Boeing suite}
        \label{fig:nasa}
    \end{subfigure}%
    \begin{subfigure}[b]{0.33\textwidth}
        \centering
        \includegraphics[width=\linewidth]{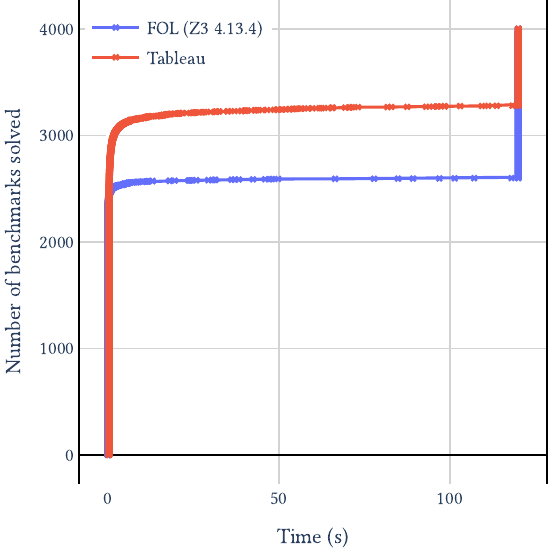}
        \caption{Random suite}
        \label{fig:random}
    \end{subfigure}%
    \begin{subfigure}[b]{0.33\textwidth}
        \centering
        \includegraphics[width=\linewidth]{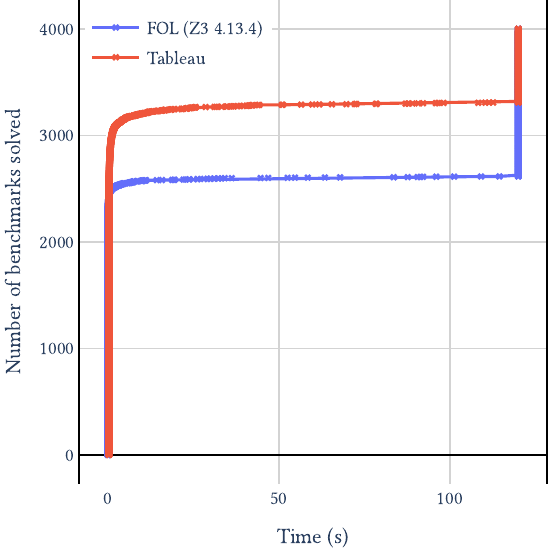}
        \caption{Random0 suite}
        \label{fig:random02}
    \end{subfigure}
    \caption{Scatter (top) and survival (bottom) plots for MLTL benchmarks \cite{LiVR22}.
    FOL is the encoding by~\cite{LiVR22}.
    In scatter plots, benchmarks represented by dots above the identity line are solved faster by FOL, and \emph{vice versa};
    the axes are logarithmic.
    In survival plots, higher is better.}
    \label{fig:combined-plots}
\end{figure*}

MLTL~\cite{ReinbacherRS14} is strictly related to discrete-time STL.
Indeed, it is equivalent to a fragment in which the until operator is replaced with the $\suntilop$ operator defined in Sec.~\ref{sec:stl}, restricted to Boolean signals.
Thus, we compare our tableau with a state-of-the-art approach to check the satisfiability of MLTL requirements \cite{LiVR22}.
\citet{LiVR22} compare several encodings that exploit different techniques to check MLTL satisfiability.
They collected a benchmark suite consisting of 63 requirement sets adapted from NASA and Boeing projects~\cite{DurejaR18}, and two sets of 4000 formulas that were randomly generated based on realistic patterns (\emph{Random} and \emph{Random0}, the latter having temporal intervals all starting from 0).
Their experiments show that their FOL encoding performs best in most benchmarks.
Thus, we compare our tableau with their encoding, running on Z3 4.13.4, on the same benchmarks.
We downloaded implementation and benchmarks from \cite{mltlsat}.

The results are summarized in Fig.~\ref{fig:combined-plots}, showing scatter and survival plots for formulas checked within the 120 s timeout.
Survival plots show how many benchmarks each tool can solve within the time on the $x$-axis,
and are the standard plots used for comparing satisfiability solvers \cite{BrainDG17}.
The tableau always takes a time of at least 0.3 s, due to the overhead of the Python runtime environment.
The encoding by \citet{LiVR22} is able to solve a few more benchmarks on the Nasa-Boeing suite, although the performance of the two techniques is comparable on most benchmarks.
Our tableau drastically outperforms the FOL encoding on randomly generated formulas.
A possible explanation for such differences is that the benchmarks in the NASA-Boeing suite on which the SMT solver performs better contain a large propositional part, on which the SMT solver performs very well, which overcomes the time required to solve temporal constraints.
The random benchmarks, on the other hand, contain several temporal operators with very large intervals, on which our tableau performs better.
The experiments suggest that our tableau could be used together with the FOL encoding by \citet{LiVR22}, since it performs better with different kinds of requirements.

\subsection{Ablation analysis of tableau optimizations}
\label{sec:ablation}

\begin{table}[bt]
    \centering
    \caption{Ablation analysis on STL benchmarks.
    The 2\textsuperscript{nd} column reports the solution times with all optimizations enabled (same as Table~\ref{tab:execution_times}, repeated for comparison);
    all subsequent columns report solution times after disabling the optimization in the column header.
    All times are in seconds, the timeout (TO) is 120~s.}
    \label{tab:ablation-stl}
    \footnotesize
    \begin{tabular}{l r r r r r r r}
        \toprule
        \textbf{Benchmark} & All opts. & \textsf{JUMP} & No G,F rules & Syntactic & Early check & Memoization & Easy ops. \\
        & & §\ref{sec:tableau} & & §\ref{sec:syntactic-manip} &  §\ref{sec:early-consistency-check} & §\ref{sec:memo} & §\ref{sec:easy-first} \\
        \midrule
        Car \cite{BaeLee19} & 0.052 & 0.501 & 0.049 & 0.045 & 0.046 & 0.046 & \textbf{0.025} \\
        Thermostat \cite{BaeLee19} & 0.037 & 0.181 & 0.035 & 0.034 & 0.035 & 0.034 & \textbf{0.024} \\
        Watertank \cite{BaeLee19} &  0.040 & 0.280 & 0.041 & 0.040 & 0.039 & 0.039 & \textbf{0.026} \\
        Railroad \cite{BaeLee19} & \textbf{0.239} & 0.750 & TO & TO & 0.264 & TO & 0.380 \\
        Battery \cite{BaeLee19} & 0.094 & 0.300 & 0.101 & 0.108 & 0.086 & 0.117 & \textbf{0.065} \\
        PCV \cite{pcv} & 1.195 & 93.350 & 1.464 & 1.155 & 1.173 & 1.171 & \textbf{0.531} \\
        MTL \cite{Benchmarks_Temporal_Logic_Requirements} & \textbf{0.697} & 0.745 & TO & 0.849  & 1.209 & 14.461 & TO \\
        CPS \cite{BoufaiedJBBP21} &  \textbf{1.909} & TO & TO & TO & TO & TO & TO \\
        Aerospace & 4.859 & TO & 15.885 & TO & TO & 4.943 & \textbf{3.226} \\
        \bottomrule
    \end{tabular}
\end{table}

We conducted an ablation analysis to assess the impact of the optimizations described in Sec.~\ref{sec:heur}.
The results on the STL benchmarks introduced in Sec.~\ref{sec:stl-benchmarks} are reported in Table~\ref{tab:ablation-stl},
and those for MLTL benchmarks are shown through survival plots in Fig.~\ref{fig:ablation-plots}.

\begin{figure}
    \centering
    \begin{subfigure}[b]{0.33\textwidth}
        \centering
        \includegraphics[width=\linewidth]{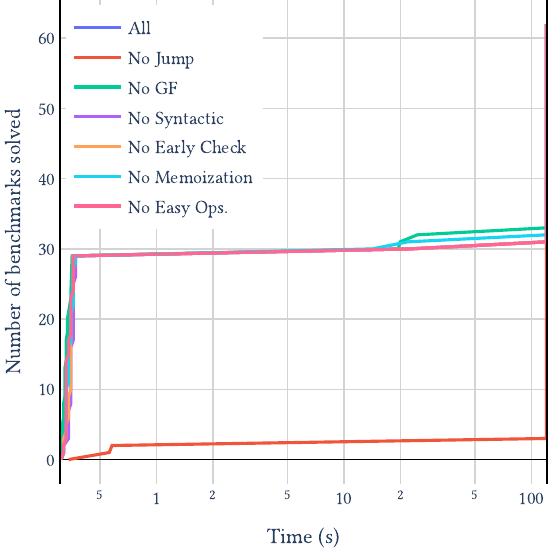}
        \caption{Nasa-Boeing suite}
        \label{fig:nasa2}
    \end{subfigure}%
    \begin{subfigure}[b]{0.33\textwidth}
        \centering
        \includegraphics[width=\linewidth]{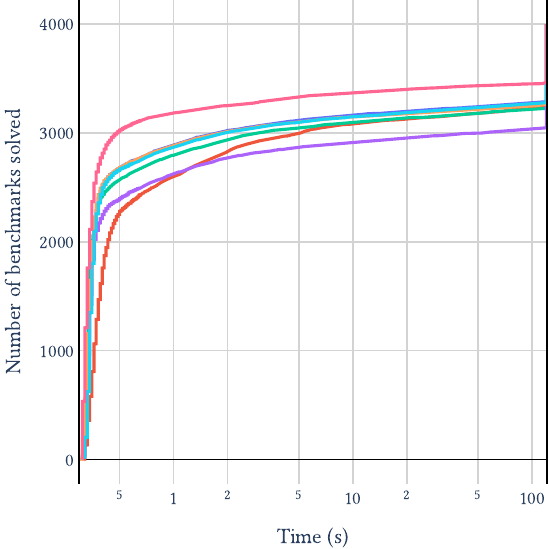}
        \caption{Random suite}
        \label{fig:random2}
    \end{subfigure}%
    \begin{subfigure}[b]{0.33\textwidth}
        \centering
        \includegraphics[width=\linewidth]{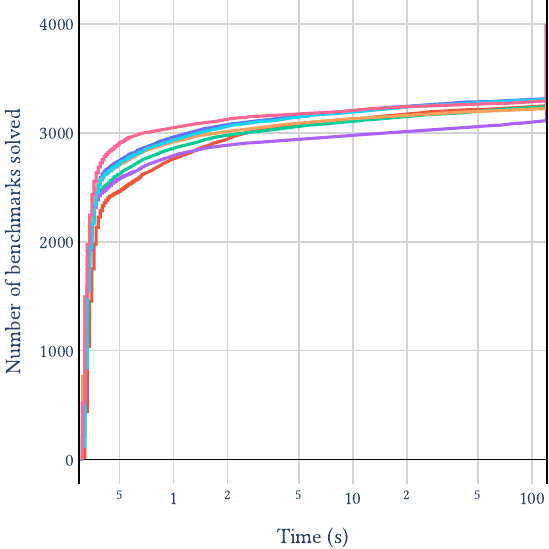}
        \caption{Random0 suite}
        \label{fig:random0}
    \end{subfigure}
    \caption{Survival plots for the ablation analysis on MLTL benchmarks from \cite{LiVR22}.
    Time axes are logarithmic.}
    \label{fig:ablation-plots}
\end{figure}

Railroad, PCV, MTL, CPS, and Aerospace are the benchmarks that show the most significant variations when disabling optimizations,
as well as the Random suites from \cite{LiVR22}.

For Railroad, this phenomenon can be explained by the particular structure of the two contradictory formulas analyzed in Sec.~\ref{sec:intro}.
The first one contains a $\even{}$ nested into a $\glob{}$%
---a worst-case scenario for the tableau---%
which prevents the \textsf{JUMP} rule from significantly shortening tree branches
(cf.\ Sec.~\ref{sec:complexity}).
The second formula contains an implication nested into a $\glob{}$ operator,
which causes branching at every time instant, thus exponentially increasing tableau size.
Since the two formulas are contradictory,
early termination does not apply (Sec.~\ref{sec:early-termination})
and the tableau must be built entirely.

For the remaining benchmarks, the impact of optimizations can be explained by their large time horizon and number of requirements (hence, size).

Disabling the \textsf{JUMP} rule (2\textsuperscript{nd} column)
mostly affects PCV, CPS, and Aerospace, and all MLTL benchmarks.
This is due to the high time horizon of these requirements (cf.\ Table~\ref{tab:execution_times}).
While MTL also has a long time horizon, it contains contradictions that are isolated by the easy-operators-first optimization (Sec.~\ref{sec:easy-first}),
and all branches are pruned after a few applications of the \textsf{STEP} rule.

The main benefit of using the specialized rules for unary operators $\globop$ and $\evenop$
is that the expansion rule for $\globop$ produces only one child node,
while both $\suntilop$ and $\sreleaseop$ expansion rules produce two children,
increasing tree branching.
If the specialized rules for unary operators $\globop$ and $\evenop$ are disabled (3\textsuperscript{rd} column), Railroad, MTL, and CPS are affected the most.
MTL and CPS are the benchmarks with the highest nesting depth, of mostly $\globop$ and $\evenop$ operators,
which causes many temporal operators to extracted at every application of the expansion rules,
significantly increasing tree size.
Furthermore, both external and nested operators in CPS have large time horizons.
Railroad contains requirements that are worst-case scenarios for the tableau,
an issue exacerbated by the higher tree branching due to using $\suntilop$ and $\sreleaseop$ rules
instead of the specialized $\globop$ and $\evenop$ rules.

Disabling optimizations based on syntactic manipulations (Sec.~\ref{sec:syntactic-manip})
affects mostly Railroad, CPS and Aerospace, mainly because they contain many instances of nested $\glob{}\even{}$ operators.
Moreover, shifting time intervals is particularly beneficial for Railroad,
as we also used it in the manual proof of its inconsistency in Sec.~\ref{sec:intro},
and Random MLTL benchmarks due to their large time intervals.

The early consistency check (Sec.~\ref{sec:early-consistency-check}) greatly benefits CPS and Aerospace.
These benchmarks contain integer signals
that have been modeled with constraints of the form $x == 0 \lor x == 1 \lor x==2$
(for a variable $x$ that can only take one value among $0, 1, 2$).
They also contain many assume-guarantee requirements of the form
$\glob{[a,b]} (x == k \implies \varphi_k)$, for $k \in \set{0,1,2}$,
whose conjunction results in many nodes containing clearly contradictory terms of the form $x == k, x == h$ with $k \neq h$.
At the same time, these benchmark contain many $\lor$, $\implies$, and $\even{}$ operators,
whose expansion rules create a large number of branches%
---precisely, a tree portion of size exponential in the number of such operators.
Early consistency checking prunes nodes before expansion rules can be applied,
avoiding the useless creation of large subtrees.

Memoization is very beneficial to Railroad, MTL, and CPS.
This phenomenon is mainly due to the high redundancy of tableau nodes generated by these benchmarks (cf.\ Fig.~\ref{fig:GF-tree}).
In particular, the presence of nested $\glob{}\even{}$ operators causes large redundant subtrees,
that memoization recognizes and prunes effectively.

The optimization that checks easy operators first is very beneficial to MTL and CPS,
because these benchmarks contain contradictions that arise from non-nested operators,
and checking them first allows to skip the analysis of more complex parts of the requirements.
This optimization is, however, detrimental to almost all other benchmarks,
in a dramatic way for Random MLTL benchmarks (Fig.~\ref{fig:ablation-plots}).
Indeed, for formulas that are satisfiable, or for which finding a contradiction requires analyzing deeply nested operators,
checking a node containing only ``easy'' operators does not provide any benefit,
and causes additional overhead.
However, the speedup enjoyed by MTL and CPS justifies the use of this optimization in some cases.
\color{black}

\section{Conclusion}
We propose a tree-shaped, one-pass tableau method to address the satisfiability of a set of discrete-time STL formulas with bounded temporal operators.  Our approach integrates an SMT solver to check the satisfiability of linear constraints within the nodes of the tree.  We prove that our method is sound and complete.  

We evaluated our approach on several benchmarks and compared it against existing methods that encode the specification as SMT and FOL formulas. The experimental results demonstrate the applicability of our method to a large and complex set of requirements, and it outperforms existing approaches in most of the considered scenarios. 

In future work, we plan to improve our method with new heuristics and extend its application to control synthesis problems and automatic example generation.




\begin{acks}
This work was partially funded by the
\grantsponsor{WWTF}{Vienna Science and Technology Fund (WWTF)}{https://wwtf.at/}
grant
\grantnum{WWTF}{ICT22-023 (TAIGER)}
and by the 
\grantsponsor{EC}{European Commission}{https://commission.europa.eu/}
in the Horizon Europe research and innovation programme
\includegraphics[width=1em]{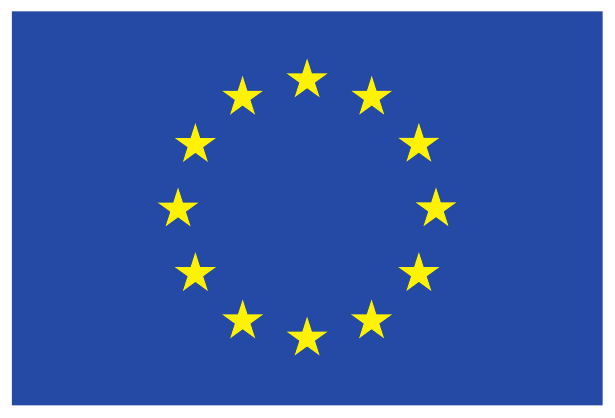}
under grant agreements
\grantnum{EC}{No.\ 101107303 (MSCA Postdoctoral Fellowship CORPORA)},
\grantnum{EC}{No.\ 101160022 (VASSAL)},
\grantnum{EC}{No.\ 101212818 (RobustifAI)}.
\end{acks}

\bibliographystyle{ACM-Reference-Format}
\bibliography{bib}


\begin{thebibliography}{53}


\ifx \showCODEN    \undefined \def \showCODEN     #1{\unskip}     \fi
\ifx \showISBNx    \undefined \def \showISBNx     #1{\unskip}     \fi
\ifx \showISBNxiii \undefined \def \showISBNxiii  #1{\unskip}     \fi
\ifx \showISSN     \undefined \def \showISSN      #1{\unskip}     \fi
\ifx \showLCCN     \undefined \def \showLCCN      #1{\unskip}     \fi
\ifx \shownote     \undefined \def \shownote      #1{#1}          \fi
\ifx \showarticletitle \undefined \def \showarticletitle #1{#1}   \fi
\ifx \showURL      \undefined \def \showURL       {\relax}        \fi
\providecommand\bibfield[2]{#2}
\providecommand\bibinfo[2]{#2}
\providecommand\natexlab[1]{#1}
\providecommand\showeprint[2][]{arXiv:#2}

\bibitem[Alur et~al\mbox{.}(1996)]%
        {AlurFH96}
\bibfield{author}{\bibinfo{person}{Rajeev Alur}, \bibinfo{person}{Tom{\'{a}}s
  Feder}, {and} \bibinfo{person}{Thomas~A. Henzinger}.}
  \bibinfo{year}{1996}\natexlab{}.
\newblock \showarticletitle{The Benefits of Relaxing Punctuality}.
\newblock \bibinfo{journal}{\emph{J. {ACM}}} \bibinfo{volume}{43},
  \bibinfo{number}{1} (\bibinfo{year}{1996}), \bibinfo{pages}{116--146}.
\newblock
\href{https://doi.org/10.1145/227595.227602}{doi:\nolinkurl{10.1145/227595.227602}}


\bibitem[Alur and Henzinger(1993)]%
        {AlurH93}
\bibfield{author}{\bibinfo{person}{Rajeev Alur} {and}
  \bibinfo{person}{Thomas~A. Henzinger}.} \bibinfo{year}{1993}\natexlab{}.
\newblock \showarticletitle{Real-Time Logics: Complexity and Expressiveness}.
\newblock \bibinfo{journal}{\emph{Inf. Comput.}} \bibinfo{volume}{104},
  \bibinfo{number}{1} (\bibinfo{year}{1993}), \bibinfo{pages}{35--77}.
\newblock
\href{https://doi.org/10.1006/INCO.1993.1025}{doi:\nolinkurl{10.1006/INCO.1993.1025}}


\bibitem[Alur and Henzinger(1994)]%
        {AlurH94}
\bibfield{author}{\bibinfo{person}{Rajeev Alur} {and}
  \bibinfo{person}{Thomas~A. Henzinger}.} \bibinfo{year}{1994}\natexlab{}.
\newblock \showarticletitle{A Really Temporal Logic}.
\newblock \bibinfo{journal}{\emph{J. {ACM}}} \bibinfo{volume}{41},
  \bibinfo{number}{1} (\bibinfo{year}{1994}), \bibinfo{pages}{181--204}.
\newblock
\href{https://doi.org/10.1145/174644.174651}{doi:\nolinkurl{10.1145/174644.174651}}


\bibitem[Bae and Lee(2019)]%
        {BaeLee19}
\bibfield{author}{\bibinfo{person}{Kyungmin Bae} {and} \bibinfo{person}{Jia
  Lee}.} \bibinfo{year}{2019}\natexlab{}.
\newblock \showarticletitle{Bounded model checking of signal temporal logic
  properties using syntactic separation}.
\newblock \bibinfo{journal}{\emph{Proc. ACM Program. Lang.}}
  \bibinfo{volume}{3} (\bibinfo{year}{2019}), \bibinfo{pages}{1--30}.
\newblock
\href{https://doi.org/10.1145/3290364}{doi:\nolinkurl{10.1145/3290364}}


\bibitem[Baier and Katoen(2008)]%
        {BaierK08}
\bibfield{author}{\bibinfo{person}{Christel Baier} {and}
  \bibinfo{person}{Joost{-}Pieter Katoen}.} \bibinfo{year}{2008}\natexlab{}.
\newblock \bibinfo{booktitle}{\emph{Principles of Model Checking}}.
\newblock \bibinfo{publisher}{{MIT} Press}.
\newblock
\showISBNx{978-0-262-02649-9}


\bibitem[Bartocci et~al\mbox{.}(2018)]%
        {BartocciDDFMNS18}
\bibfield{author}{\bibinfo{person}{Ezio Bartocci},
  \bibinfo{person}{Jyotirmoy~V. Deshmukh}, \bibinfo{person}{Alexandre
  Donz{\'{e}}}, \bibinfo{person}{Georgios Fainekos}, \bibinfo{person}{Oded
  Maler}, \bibinfo{person}{Dejan Nickovic}, {and} \bibinfo{person}{Sriram
  Sankaranarayanan}.} \bibinfo{year}{2018}\natexlab{}.
\newblock \showarticletitle{Specification-Based Monitoring of Cyber-Physical
  Systems: {A} Survey on Theory, Tools and Applications}.
\newblock In \bibinfo{booktitle}{\emph{Lectures on Runtime Verification -
  Introductory and Advanced Topics}}. \bibinfo{series}{LNCS},
  Vol.~\bibinfo{volume}{10457}. \bibinfo{publisher}{Springer},
  \bibinfo{pages}{135--175}.
\newblock
\href{https://doi.org/10.1007/978-3-319-75632-5\_5}{doi:\nolinkurl{10.1007/978-3-319-75632-5\_5}}


\bibitem[Bartocci et~al\mbox{.}(2021)]%
        {BartocciMMMN21}
\bibfield{author}{\bibinfo{person}{Ezio Bartocci}, \bibinfo{person}{Niveditha
  Manjunath}, \bibinfo{person}{Leonardo Mariani}, \bibinfo{person}{Cristinel
  Mateis}, {and} \bibinfo{person}{Dejan Nickovic}.}
  \bibinfo{year}{2021}\natexlab{}.
\newblock \showarticletitle{CPSDebug: Automatic failure explanation in {CPS}
  models}.
\newblock \bibinfo{journal}{\emph{Int. J. Softw. Tools Technol. Transf.}}
  \bibinfo{volume}{23}, \bibinfo{number}{5} (\bibinfo{year}{2021}),
  \bibinfo{pages}{783--796}.
\newblock
\href{https://doi.org/10.1007/S10009-020-00599-4}{doi:\nolinkurl{10.1007/S10009-020-00599-4}}


\bibitem[Bartocci et~al\mbox{.}(2025)]%
        {cps2025}
\bibfield{author}{\bibinfo{person}{Ezio Bartocci}, \bibinfo{person}{Leonardo
  Mariani}, \bibinfo{person}{Dejan Ni\v{c}kovi\'{c}}, {and}
  \bibinfo{person}{Drishti Yadav}.} \bibinfo{year}{2025}\natexlab{}.
\newblock \showarticletitle{Signal Feature Coverage and Testing for CPS
  Dataflow Models}.
\newblock \bibinfo{journal}{\emph{ACM Trans. Softw. Eng. Methodol.}}
  (\bibinfo{year}{2025}).
\newblock
\showISSN{1049-331X}
\href{https://doi.org/10.1145/3714467}{doi:\nolinkurl{10.1145/3714467}}


\bibitem[Bartocci et~al\mbox{.}(2022)]%
        {BartocciMNN22}
\bibfield{author}{\bibinfo{person}{Ezio Bartocci}, \bibinfo{person}{Cristinel
  Mateis}, \bibinfo{person}{Eleonora Nesterini}, {and} \bibinfo{person}{Dejan
  Nickovic}.} \bibinfo{year}{2022}\natexlab{}.
\newblock \showarticletitle{Survey on mining signal temporal logic
  specifications}.
\newblock \bibinfo{journal}{\emph{Inf. Comput.}} \bibinfo{volume}{289},
  \bibinfo{number}{Part} (\bibinfo{year}{2022}), \bibinfo{pages}{104957}.
\newblock
\href{https://doi.org/10.1016/J.IC.2022.104957}{doi:\nolinkurl{10.1016/J.IC.2022.104957}}


\bibitem[Bersani et~al\mbox{.}(2013)]%
        {BersaniRP13}
\bibfield{author}{\bibinfo{person}{Marcello~M. Bersani},
  \bibinfo{person}{Matteo Rossi}, {and} \bibinfo{person}{Pierluigi {San
  Pietro}}.} \bibinfo{year}{2013}\natexlab{}.
\newblock \showarticletitle{A Tool for Deciding the Satisfiability of
  Continuous-Time Metric Temporal Logic}. In
  \bibinfo{booktitle}{\emph{TIME'13}}. \bibinfo{publisher}{{IEEE} Computer
  Society}, \bibinfo{pages}{99--106}.
\newblock
\href{https://doi.org/10.1109/TIME.2013.20}{doi:\nolinkurl{10.1109/TIME.2013.20}}


\bibitem[Bersani et~al\mbox{.}(2015)]%
        {BersaniRP15}
\bibfield{author}{\bibinfo{person}{Marcello~M. Bersani},
  \bibinfo{person}{Matteo Rossi}, {and} \bibinfo{person}{Pierluigi {San
  Pietro}}.} \bibinfo{year}{2015}\natexlab{}.
\newblock \showarticletitle{An SMT-based approach to satisfiability checking of
  {MITL}}.
\newblock \bibinfo{journal}{\emph{Inf. Comput.}}  \bibinfo{volume}{245}
  (\bibinfo{year}{2015}), \bibinfo{pages}{72--97}.
\newblock
\href{https://doi.org/10.1016/J.IC.2015.06.007}{doi:\nolinkurl{10.1016/J.IC.2015.06.007}}


\bibitem[Bersani et~al\mbox{.}(2016)]%
        {BersaniRP16}
\bibfield{author}{\bibinfo{person}{Marcello~M. Bersani},
  \bibinfo{person}{Matteo Rossi}, {and} \bibinfo{person}{Pierluigi {San
  Pietro}}.} \bibinfo{year}{2016}\natexlab{}.
\newblock \showarticletitle{A tool for deciding the satisfiability of
  continuous-time metric temporal logic}.
\newblock \bibinfo{journal}{\emph{Acta Informatica}} \bibinfo{volume}{53},
  \bibinfo{number}{2} (\bibinfo{year}{2016}), \bibinfo{pages}{171--206}.
\newblock
\href{https://doi.org/10.1007/S00236-015-0229-Y}{doi:\nolinkurl{10.1007/S00236-015-0229-Y}}


\bibitem[Bertello et~al\mbox{.}(2016)]%
        {BertelloGMR16}
\bibfield{author}{\bibinfo{person}{Matteo Bertello}, \bibinfo{person}{Nicola
  Gigante}, \bibinfo{person}{Angelo Montanari}, {and} \bibinfo{person}{Mark
  Reynolds}.} \bibinfo{year}{2016}\natexlab{}.
\newblock \showarticletitle{Leviathan: {A} New {LTL} Satisfiability Checking
  Tool Based on a One-Pass Tree-Shaped Tableau}. In
  \bibinfo{booktitle}{\emph{{IJCAI}'16}}. \bibinfo{publisher}{{IJCAI/AAAI}
  Press}, \bibinfo{pages}{950--956}.
\newblock
\urldef\tempurl%
\url{http://www.ijcai.org/Abstract/16/139}
\showURL{%
\tempurl}


\bibitem[Boufaied et~al\mbox{.}(2021)]%
        {BoufaiedJBBP21}
\bibfield{author}{\bibinfo{person}{Chaima Boufaied}, \bibinfo{person}{Maris
  Jukss}, \bibinfo{person}{Domenico Bianculli}, \bibinfo{person}{Lionel~Claude
  Briand}, {and} \bibinfo{person}{Yago~Isasi Parache}.}
  \bibinfo{year}{2021}\natexlab{}.
\newblock \showarticletitle{Signal-Based Properties of Cyber-Physical Systems:
  Taxonomy and Logic-based Characterization}.
\newblock \bibinfo{journal}{\emph{J. Syst. Softw.}}  \bibinfo{volume}{174}
  (\bibinfo{year}{2021}), \bibinfo{pages}{110881}.
\newblock
\href{https://doi.org/10.1016/J.JSS.2020.110881}{doi:\nolinkurl{10.1016/J.JSS.2020.110881}}


\bibitem[Boufaied et~al\mbox{.}(2023)]%
        {BoufaiedMBB23}
\bibfield{author}{\bibinfo{person}{Chaima Boufaied}, \bibinfo{person}{Claudio
  Menghi}, \bibinfo{person}{Domenico Bianculli}, {and}
  \bibinfo{person}{Lionel~C. Briand}.} \bibinfo{year}{2023}\natexlab{}.
\newblock \showarticletitle{Trace Diagnostics for Signal-Based Temporal
  Properties}.
\newblock \bibinfo{journal}{\emph{{IEEE} Trans. Software Eng.}}
  \bibinfo{volume}{49}, \bibinfo{number}{5} (\bibinfo{year}{2023}),
  \bibinfo{pages}{3131--3154}.
\newblock
\href{https://doi.org/10.1109/TSE.2023.3242588}{doi:\nolinkurl{10.1109/TSE.2023.3242588}}


\bibitem[Brain et~al\mbox{.}(2017)]%
        {BrainDG17}
\bibfield{author}{\bibinfo{person}{Martin Brain}, \bibinfo{person}{James~H.
  Davenport}, {and} \bibinfo{person}{Alberto Griggio}.}
  \bibinfo{year}{2017}\natexlab{}.
\newblock \showarticletitle{Benchmarking Solvers, {SAT}-style}. In
  \bibinfo{booktitle}{\emph{SC$^2$@ISSAC'17}} \emph{(\bibinfo{series}{{CEUR}
  Workshop Proceedings}, Vol.~\bibinfo{volume}{1974})}.
  \bibinfo{publisher}{CEUR-WS.org}.
\newblock
\urldef\tempurl%
\url{https://ceur-ws.org/Vol-1974/RP3.pdf}
\showURL{%
\tempurl}


\bibitem[Cavada et~al\mbox{.}(2014)]%
        {CavadaCDGMMMRT14}
\bibfield{author}{\bibinfo{person}{Roberto Cavada}, \bibinfo{person}{Alessandro
  Cimatti}, \bibinfo{person}{Michele Dorigatti}, \bibinfo{person}{Alberto
  Griggio}, \bibinfo{person}{Alessandro Mariotti}, \bibinfo{person}{Andrea
  Micheli}, \bibinfo{person}{Sergio Mover}, \bibinfo{person}{Marco Roveri},
  {and} \bibinfo{person}{Stefano Tonetta}.} \bibinfo{year}{2014}\natexlab{}.
\newblock \showarticletitle{The nuXmv Symbolic Model Checker}. In
  \bibinfo{booktitle}{\emph{{CAV}'14}} \emph{(\bibinfo{series}{LNCS},
  Vol.~\bibinfo{volume}{8559})}. \bibinfo{publisher}{Springer},
  \bibinfo{pages}{334--342}.
\newblock
\href{https://doi.org/10.1007/978-3-319-08867-9\_22}{doi:\nolinkurl{10.1007/978-3-319-08867-9\_22}}


\bibitem[Cimatti et~al\mbox{.}(2002)]%
        {CimattiCGGPRST02}
\bibfield{author}{\bibinfo{person}{Alessandro Cimatti},
  \bibinfo{person}{Edmund~M. Clarke}, \bibinfo{person}{Enrico Giunchiglia},
  \bibinfo{person}{Fausto Giunchiglia}, \bibinfo{person}{Marco Pistore},
  \bibinfo{person}{Marco Roveri}, \bibinfo{person}{Roberto Sebastiani}, {and}
  \bibinfo{person}{Armando Tacchella}.} \bibinfo{year}{2002}\natexlab{}.
\newblock \showarticletitle{{NuSMV} 2: An OpenSource Tool for Symbolic Model
  Checking}. In \bibinfo{booktitle}{\emph{{CAV}'02}}
  \emph{(\bibinfo{series}{LNCS}, Vol.~\bibinfo{volume}{2404})}.
  \bibinfo{publisher}{Springer}, \bibinfo{pages}{359--364}.
\newblock
\href{https://doi.org/10.1007/3-540-45657-0\_29}{doi:\nolinkurl{10.1007/3-540-45657-0\_29}}


\bibitem[de~Moura and Bjørner(2008)]%
        {Z3}
\bibfield{author}{\bibinfo{person}{Leonardo de Moura} {and}
  \bibinfo{person}{Nicolaj Bjørner}.} \bibinfo{year}{2008}\natexlab{}.
\newblock \showarticletitle{Z3: An Efficient SMT Solver}. In
  \bibinfo{booktitle}{\emph{TACAS'08.}} \emph{(\bibinfo{series}{LNCS},
  Vol.~\bibinfo{volume}{4963})}. \bibinfo{publisher}{Springer}.
\newblock
\href{https://doi.org/10.1007/978-3-540-78800-3_24}{doi:\nolinkurl{10.1007/978-3-540-78800-3_24}}


\bibitem[Dokhanchi et~al\mbox{.}(2017)]%
        {Dokhanchi}
\bibfield{author}{\bibinfo{person}{Adel Dokhanchi}, \bibinfo{person}{Bardh
  Hoxha}, {and} \bibinfo{person}{Georgios Fainekos}.}
  \bibinfo{year}{2017}\natexlab{}.
\newblock \showarticletitle{Formal Requirement Debugging for Testing and
  Verification of Cyber-Physical Systems}.
\newblock \bibinfo{journal}{\emph{ACM Trans. Embed. Comput. Syst.}}
  \bibinfo{volume}{17}, \bibinfo{number}{2}, Article \bibinfo{articleno}{34}
  (\bibinfo{year}{2017}), \bibinfo{numpages}{26}~pages.
\newblock
\showISSN{1539-9087}
\href{https://doi.org/10.1145/3147451}{doi:\nolinkurl{10.1145/3147451}}


\bibitem[Dureja and Rozier(2018)]%
        {DurejaR18}
\bibfield{author}{\bibinfo{person}{Rohit Dureja} {and}
  \bibinfo{person}{Kristin~Y. Rozier}.} \bibinfo{year}{2018}\natexlab{}.
\newblock \showarticletitle{More Scalable {LTL} Model Checking via Discovering
  Design-Space Dependencies $(D^3)$}. In \bibinfo{booktitle}{\emph{{TACAS}'18}}
  \emph{(\bibinfo{series}{LNCS}, Vol.~\bibinfo{volume}{10805})}.
  \bibinfo{publisher}{Springer}, \bibinfo{pages}{309--327}.
\newblock
\href{https://doi.org/10.1007/978-3-319-89960-2\_17}{doi:\nolinkurl{10.1007/978-3-319-89960-2\_17}}


\bibitem[Geatti et~al\mbox{.}(2021)]%
        {GeattiGMR21}
\bibfield{author}{\bibinfo{person}{Luca Geatti}, \bibinfo{person}{Nicola
  Gigante}, \bibinfo{person}{Angelo Montanari}, {and} \bibinfo{person}{Mark
  Reynolds}.} \bibinfo{year}{2021}\natexlab{}.
\newblock \showarticletitle{One-pass and tree-shaped tableau systems for {TPTL}
  and TPTL\({}_{\mbox{b}}\)+Past}.
\newblock \bibinfo{journal}{\emph{Inf. Comput.}}  \bibinfo{volume}{278}
  (\bibinfo{year}{2021}), \bibinfo{pages}{104599}.
\newblock
\href{https://doi.org/10.1016/J.IC.2020.104599}{doi:\nolinkurl{10.1016/J.IC.2020.104599}}


\bibitem[Geatti et~al\mbox{.}(2024)]%
        {GeattiGMV24}
\bibfield{author}{\bibinfo{person}{Luca Geatti}, \bibinfo{person}{Nicola
  Gigante}, \bibinfo{person}{Angelo Montanari}, {and} \bibinfo{person}{Gabriele
  Venturato}.} \bibinfo{year}{2024}\natexlab{}.
\newblock \showarticletitle{{SAT} Meets Tableaux for Linear Temporal Logic
  Satisfiability}.
\newblock \bibinfo{journal}{\emph{J. Autom. Reason.}} \bibinfo{volume}{68},
  \bibinfo{number}{2} (\bibinfo{year}{2024}), \bibinfo{pages}{6}.
\newblock
\href{https://doi.org/10.1007/S10817-023-09691-1}{doi:\nolinkurl{10.1007/S10817-023-09691-1}}


\bibitem[Gigante et~al\mbox{.}(2017)]%
        {GiganteMR17}
\bibfield{author}{\bibinfo{person}{Nicola Gigante}, \bibinfo{person}{Angelo
  Montanari}, {and} \bibinfo{person}{Mark Reynolds}.}
  \bibinfo{year}{2017}\natexlab{}.
\newblock \showarticletitle{A One-Pass Tree-Shaped Tableau for LTL+Past}. In
  \bibinfo{booktitle}{\emph{{LPAR}'21}} \emph{(\bibinfo{series}{EPiC Series in
  Computing}, Vol.~\bibinfo{volume}{46})}. \bibinfo{publisher}{EasyChair},
  \bibinfo{pages}{456--473}.
\newblock
\href{https://doi.org/10.29007/3HB9}{doi:\nolinkurl{10.29007/3HB9}}


\bibitem[He et~al\mbox{.}(2022)]%
        {HeBNIG22}
\bibfield{author}{\bibinfo{person}{Jie He}, \bibinfo{person}{Ezio Bartocci},
  \bibinfo{person}{Dejan Nickovic}, \bibinfo{person}{Haris Isakovic}, {and}
  \bibinfo{person}{Radu Grosu}.} \bibinfo{year}{2022}\natexlab{}.
\newblock \showarticletitle{DeepSTL - From English Requirements to Signal
  Temporal Logic}. In \bibinfo{booktitle}{\emph{{ICSE}'22}}.
  \bibinfo{publisher}{{ACM}}, \bibinfo{pages}{610--622}.
\newblock
\href{https://doi.org/10.1145/3510003.3510171}{doi:\nolinkurl{10.1145/3510003.3510171}}


\bibitem[Heitmeyer et~al\mbox{.}(1996)]%
        {Heitmeyer96}
\bibfield{author}{\bibinfo{person}{Constance~L. Heitmeyer},
  \bibinfo{person}{Ralph~D. Jeffords}, {and} \bibinfo{person}{Bruce~G. Labaw}.}
  \bibinfo{year}{1996}\natexlab{}.
\newblock \showarticletitle{Automated consistency checking of requirements
  specifications}.
\newblock \bibinfo{journal}{\emph{ACM Trans. Softw. Eng. Methodol.}}
  \bibinfo{volume}{5} (\bibinfo{year}{1996}), \bibinfo{pages}{231--261}.
\newblock
Issue 3.
\href{https://doi.org/10.1145/234426.234431}{doi:\nolinkurl{10.1145/234426.234431}}


\bibitem[Hendrik~Roehm and Althoff(2016)]%
        {roehm}
\bibfield{author}{\bibinfo{person}{Thomas~Heinz Hendrik~Roehm, Jens~Oehlerking}
  {and} \bibinfo{person}{Matthias Althoff}.} \bibinfo{year}{2016}\natexlab{}.
\newblock \showarticletitle{STL Model Checking of Continuous and Hybrid
  Systems}. In \bibinfo{booktitle}{\emph{{ATVA}'16}}
  \emph{(\bibinfo{series}{LNCS}, Vol.~\bibinfo{volume}{9938})}.
  \bibinfo{publisher}{Springer}, \bibinfo{pages}{412--427}.
\newblock
\href{https://doi.org/10.1007/978-3-319-46520-3_26}{doi:\nolinkurl{10.1007/978-3-319-46520-3_26}}


\bibitem[Hirshfeld and Rabinovich(2005)]%
        {HirshfeldR05}
\bibfield{author}{\bibinfo{person}{Yoram Hirshfeld} {and}
  \bibinfo{person}{Alexander~Moshe Rabinovich}.}
  \bibinfo{year}{2005}\natexlab{}.
\newblock \showarticletitle{Timer formulas and decidable metric temporal
  logic}.
\newblock \bibinfo{journal}{\emph{Inf. Comput.}} \bibinfo{volume}{198},
  \bibinfo{number}{2} (\bibinfo{year}{2005}), \bibinfo{pages}{148--178}.
\newblock
\href{https://doi.org/10.1016/J.IC.2004.12.002}{doi:\nolinkurl{10.1016/J.IC.2004.12.002}}


\bibitem[Hoxha et~al\mbox{.}(2015)]%
        {Benchmarks_Temporal_Logic_Requirements}
\bibfield{author}{\bibinfo{person}{Bardh Hoxha}, \bibinfo{person}{Houssam
  Abbas}, {and} \bibinfo{person}{Georgios Fainekos}.}
  \bibinfo{year}{2015}\natexlab{}.
\newblock \showarticletitle{Benchmarks for Temporal Logic Requirements for
  Automotive Systems}. In \bibinfo{booktitle}{\emph{ARCH14-15}}
  \emph{(\bibinfo{series}{EPiC Series in Computing},
  Vol.~\bibinfo{volume}{34})}. \bibinfo{publisher}{EasyChair},
  \bibinfo{pages}{25--30}.
\newblock
\href{https://doi.org/10.29007/xwrs}{doi:\nolinkurl{10.29007/xwrs}}


\bibitem[ISO/IEC/IEEE(2018)]%
        {ISO29148}
\bibfield{author}{\bibinfo{person}{ISO/IEC/IEEE}.}
  \bibinfo{year}{2018}\natexlab{}.
\newblock \bibinfo{booktitle}{\emph{Systems and software engineering — Life
  cycle processes — Requirements engineering}}.
\newblock \bibinfo{type}{{T}echnical {R}eport} ISO/IEC/IEEE 29148:2018.
  \bibinfo{institution}{International Organization for Standardization},
  \bibinfo{address}{Geneva, Switzerland}.
\newblock


\bibitem[Jaksic et~al\mbox{.}(2015)]%
        {JaksicBGKNN15}
\bibfield{author}{\bibinfo{person}{Stefan Jaksic}, \bibinfo{person}{Ezio
  Bartocci}, \bibinfo{person}{Radu Grosu}, \bibinfo{person}{Reinhard
  Kloibhofer}, \bibinfo{person}{Thang Nguyen}, {and} \bibinfo{person}{Dejan
  Nickovic}.} \bibinfo{year}{2015}\natexlab{}.
\newblock \showarticletitle{From signal temporal logic to {FPGA} monitors}. In
  \bibinfo{booktitle}{\emph{MEMCOD'21}}. \bibinfo{publisher}{{IEEE}},
  \bibinfo{pages}{218--227}.
\newblock
\href{https://doi.org/10.1109/MEMCOD.2015.7340489}{doi:\nolinkurl{10.1109/MEMCOD.2015.7340489}}


\bibitem[Jin et~al\mbox{.}(2014)]%
        {pcv}
\bibfield{author}{\bibinfo{person}{Xiaoqing Jin}, \bibinfo{person}{Jyotirmoy~V.
  Deshmukh}, \bibinfo{person}{James Kapinski}, \bibinfo{person}{Koichi Ueda},
  {and} \bibinfo{person}{Ken Butts}.} \bibinfo{year}{2014}\natexlab{}.
\newblock \showarticletitle{Powertrain control verification benchmark}. In
  \bibinfo{booktitle}{\emph{HSCC'14}} (Berlin, Germany).
  \bibinfo{publisher}{ACM}, \bibinfo{address}{New York, USA},
  \bibinfo{pages}{253–262}.
\newblock
\showISBNx{9781450327329}
\href{https://doi.org/10.1145/2562059.2562140}{doi:\nolinkurl{10.1145/2562059.2562140}}


\bibitem[Kapinski et~al\mbox{.}(2016)]%
        {Kapinski2016STLibAL}
\bibfield{author}{\bibinfo{person}{James Kapinski}, \bibinfo{person}{Xiaoqing
  Jin}, \bibinfo{person}{Jyotirmoy~V. Deshmukh}, \bibinfo{person}{Alexandre
  Donz{\'e}}, \bibinfo{person}{Tomoya Yamaguchi}, \bibinfo{person}{Hisahiro
  Ito}, \bibinfo{person}{Tomoyuki Kaga}, \bibinfo{person}{Shunsuke Kobuna},
  {and} \bibinfo{person}{Sanjit~A. Seshia}.} \bibinfo{year}{2016}\natexlab{}.
\newblock \showarticletitle{ST-Lib: A Library for Specifying and Classifying
  Model Behaviors}.
\newblock \bibinfo{journal}{\emph{SAE Technical Paper Series}}
  (\bibinfo{year}{2016}).
\newblock
\urldef\tempurl%
\url{https://api.semanticscholar.org/CorpusID:13481741}
\showURL{%
\tempurl}


\bibitem[Khachiyan(1980)]%
        {Khachiyan80}
\bibfield{author}{\bibinfo{person}{Leonid~G. Khachiyan}.}
  \bibinfo{year}{1980}\natexlab{}.
\newblock \showarticletitle{Polynomial algorithms in linear programming}.
\newblock \bibinfo{journal}{\emph{U. S. S. R. Comput. Math. and Math. Phys.}}
  \bibinfo{volume}{20}, \bibinfo{number}{1} (\bibinfo{year}{1980}),
  \bibinfo{pages}{53--72}.
\newblock
\showISSN{0041-5553}
\href{https://doi.org/10.1016/0041-5553(80)90061-0}{doi:\nolinkurl{10.1016/0041-5553(80)90061-0}}


\bibitem[Koymans(1990)]%
        {Koymans90}
\bibfield{author}{\bibinfo{person}{Ron Koymans}.}
  \bibinfo{year}{1990}\natexlab{}.
\newblock \showarticletitle{Specifying Real-Time Properties with Metric
  Temporal Logic}.
\newblock \bibinfo{journal}{\emph{Real Time Syst.}} \bibinfo{volume}{2},
  \bibinfo{number}{4} (\bibinfo{year}{1990}), \bibinfo{pages}{255--299}.
\newblock
\href{https://doi.org/10.1007/BF01995674}{doi:\nolinkurl{10.1007/BF01995674}}


\bibitem[Lee et~al\mbox{.}(2021)]%
        {LeeYB21}
\bibfield{author}{\bibinfo{person}{Jia Lee}, \bibinfo{person}{Geunyeol Yu},
  {and} \bibinfo{person}{Kyungmin Bae}.} \bibinfo{year}{2021}\natexlab{}.
\newblock \showarticletitle{Efficient SMT-Based Model Checking for Signal
  Temporal Logic}. In \bibinfo{booktitle}{\emph{{ASE}'21}}.
  \bibinfo{publisher}{{IEEE}}, \bibinfo{pages}{343--354}.
\newblock
\href{https://doi.org/10.1109/ASE51524.2021.9678719}{doi:\nolinkurl{10.1109/ASE51524.2021.9678719}}


\bibitem[Li(2019)]%
        {mltlsat}
\bibfield{author}{\bibinfo{person}{Jianwen Li}.}
  \bibinfo{year}{2019}\natexlab{}.
\newblock \bibinfo{title}{mltlsat}.
\newblock \bibinfo{howpublished}{GitHub.com}.
\newblock
\urldef\tempurl%
\url{https://github.com/lijwen2748/mltlsat}
\showURL{%
\tempurl}


\bibitem[Li et~al\mbox{.}(2022)]%
        {LiVR22}
\bibfield{author}{\bibinfo{person}{Jianwen Li}, \bibinfo{person}{Moshe~Y.
  Vardi}, {and} \bibinfo{person}{Kristin~Y. Rozier}.}
  \bibinfo{year}{2022}\natexlab{}.
\newblock \showarticletitle{Satisfiability checking for Mission-time {LTL}
  {(MLTL)}}.
\newblock \bibinfo{journal}{\emph{Inf. Comput.}} \bibinfo{volume}{289},
  \bibinfo{number}{Part} (\bibinfo{year}{2022}), \bibinfo{pages}{104923}.
\newblock
\href{https://doi.org/10.1016/J.IC.2022.104923}{doi:\nolinkurl{10.1016/J.IC.2022.104923}}


\bibitem[Lichtenstein and Pneuli(2000)]%
        {Licht}
\bibfield{author}{\bibinfo{person}{O. Lichtenstein} {and} \bibinfo{person}{A.
  Pneuli}.} \bibinfo{year}{2000}\natexlab{}.
\newblock \showarticletitle{Propositional temporal logics: decidability and
  completeness}.
\newblock \bibinfo{journal}{\emph{Logic Journal of the IGPL}}
  \bibinfo{volume}{8} (\bibinfo{year}{2000}), \bibinfo{pages}{55--85}.
\newblock
Issue 1.
\href{https://doi.org/10.1093/jigpal/8.1.55}{doi:\nolinkurl{10.1093/jigpal/8.1.55}}


\bibitem[Maler and Nickovic(2004)]%
        {MalerN04}
\bibfield{author}{\bibinfo{person}{Oded Maler} {and} \bibinfo{person}{Dejan
  Nickovic}.} \bibinfo{year}{2004}\natexlab{}.
\newblock \showarticletitle{Monitoring Temporal Properties of Continuous
  Signals}. In \bibinfo{booktitle}{\emph{{FORMATS/FTRTFT}'04}}
  \emph{(\bibinfo{series}{LNCS}, Vol.~\bibinfo{volume}{3253})}.
  \bibinfo{publisher}{Springer}, \bibinfo{pages}{152--166}.
\newblock
\href{https://doi.org/10.1007/978-3-540-30206-3\_12}{doi:\nolinkurl{10.1007/978-3-540-30206-3\_12}}


\bibitem[McCabe{-}Dansted and Reynolds(2017)]%
        {McCabeDanstedR17}
\bibfield{author}{\bibinfo{person}{John~C. McCabe{-}Dansted} {and}
  \bibinfo{person}{Mark Reynolds}.} \bibinfo{year}{2017}\natexlab{}.
\newblock \showarticletitle{A Parallel Linear Temporal Logic Tableau}. In
  \bibinfo{booktitle}{\emph{{GandALF}'17}} \emph{(\bibinfo{series}{{EPTCS}},
  Vol.~\bibinfo{volume}{256})}. \bibinfo{pages}{166--179}.
\newblock
\href{https://doi.org/10.4204/EPTCS.256.12}{doi:\nolinkurl{10.4204/EPTCS.256.12}}


\bibitem[McMillan(1993)]%
        {0071856}
\bibfield{author}{\bibinfo{person}{Kenneth~L. McMillan}.}
  \bibinfo{year}{1993}\natexlab{}.
\newblock \bibinfo{booktitle}{\emph{Symbolic model checking}}.
\newblock \bibinfo{publisher}{Kluwer}.
\newblock
\showISBNx{978-0-7923-9380-1}
\href{https://doi.org/10.1007/978-1-4615-3190-6}{doi:\nolinkurl{10.1007/978-1-4615-3190-6}}


\bibitem[Melani et~al\mbox{.}(2025)]%
        {stltree}
\bibfield{author}{\bibinfo{person}{Beatrice Melani}, \bibinfo{person}{Ezio
  Bartocci}, {and} \bibinfo{person}{Michele Chiari}.}
  \bibinfo{year}{2025}\natexlab{}.
\newblock \bibinfo{title}{STLTree}.
\newblock
\urldef\tempurl%
\url{https://github.com/beamelani/Consistency_Check}
\showURL{%
\tempurl}


\bibitem[Mohammadinejad et~al\mbox{.}(2024)]%
        {MohammadinejadPXKTD24}
\bibfield{author}{\bibinfo{person}{Sara Mohammadinejad},
  \bibinfo{person}{Sheryl Paul}, \bibinfo{person}{Yuan Xia},
  \bibinfo{person}{Vidisha Kudalkar}, \bibinfo{person}{Jesse Thomason}, {and}
  \bibinfo{person}{Jyotirmoy~V. Deshmukh}.} \bibinfo{year}{2024}\natexlab{}.
\newblock \showarticletitle{Systematic Translation from Natural Language Robot
  Task Descriptions to {STL}}. In \bibinfo{booktitle}{\emph{AISoLA'24}}
  \emph{(\bibinfo{series}{LNCS}, Vol.~\bibinfo{volume}{15217})}.
  \bibinfo{publisher}{Springer}, \bibinfo{pages}{259--276}.
\newblock
\href{https://doi.org/10.1007/978-3-031-75434-0\_18}{doi:\nolinkurl{10.1007/978-3-031-75434-0\_18}}


\bibitem[Molin et~al\mbox{.}(2023)]%
        {MolinANZBE23}
\bibfield{author}{\bibinfo{person}{Adam Molin}, \bibinfo{person}{Edgar~A.
  Aguilar}, \bibinfo{person}{Dejan Nickovic}, \bibinfo{person}{Mengjia Zhu},
  \bibinfo{person}{Alberto Bemporad}, {and} \bibinfo{person}{Hasan Esen}.}
  \bibinfo{year}{2023}\natexlab{}.
\newblock \showarticletitle{Specification-Guided Critical Scenario
  Identification for Automated Driving}. In \bibinfo{booktitle}{\emph{{FM}'23}}
  \emph{(\bibinfo{series}{LNCS}, Vol.~\bibinfo{volume}{14000})}.
  \bibinfo{publisher}{Springer}, \bibinfo{pages}{610--621}.
\newblock
\href{https://doi.org/10.1007/978-3-031-27481-7\_35}{doi:\nolinkurl{10.1007/978-3-031-27481-7\_35}}


\bibitem[Pnueli(1977)]%
        {Pnueli77}
\bibfield{author}{\bibinfo{person}{Amir Pnueli}.}
  \bibinfo{year}{1977}\natexlab{}.
\newblock \showarticletitle{The Temporal Logic of Programs}. In
  \bibinfo{booktitle}{\emph{FOCS'77}}. \bibinfo{publisher}{{IEEE}},
  \bibinfo{pages}{46--57}.
\newblock
\href{https://doi.org/10.1109/SFCS.1977.32}{doi:\nolinkurl{10.1109/SFCS.1977.32}}


\bibitem[Prabhakar et~al\mbox{.}(2018)]%
        {PrabhakarLK18}
\bibfield{author}{\bibinfo{person}{Pavithra Prabhakar}, \bibinfo{person}{Ratan
  Lal}, {and} \bibinfo{person}{James Kapinski}.}
  \bibinfo{year}{2018}\natexlab{}.
\newblock \showarticletitle{Automatic Trace Generation for Signal Temporal
  Logic}. In \bibinfo{booktitle}{\emph{{RTSS}'18}}. \bibinfo{publisher}{{IEEE}
  Computer Society}, \bibinfo{pages}{208--217}.
\newblock
\href{https://doi.org/10.1109/RTSS.2018.00038}{doi:\nolinkurl{10.1109/RTSS.2018.00038}}


\bibitem[Raman et~al\mbox{.}(2014)]%
        {RamanDMMSS14}
\bibfield{author}{\bibinfo{person}{Vasumathi Raman}, \bibinfo{person}{Alexandre
  Donz{\'{e}}}, \bibinfo{person}{Mehdi Maasoumy}, \bibinfo{person}{Richard~M.
  Murray}, \bibinfo{person}{Alberto~L. Sangiovanni{-}Vincentelli}, {and}
  \bibinfo{person}{Sanjit~A. Seshia}.} \bibinfo{year}{2014}\natexlab{}.
\newblock \showarticletitle{Model predictive control with signal temporal logic
  specifications}. In \bibinfo{booktitle}{\emph{{CDC}'14}}.
  \bibinfo{publisher}{{IEEE}}, \bibinfo{pages}{81--87}.
\newblock
\href{https://doi.org/10.1109/CDC.2014.7039363}{doi:\nolinkurl{10.1109/CDC.2014.7039363}}


\bibitem[Reinbacher et~al\mbox{.}(2014)]%
        {ReinbacherRS14}
\bibfield{author}{\bibinfo{person}{Thomas Reinbacher},
  \bibinfo{person}{Kristin~Y. Rozier}, {and} \bibinfo{person}{Johann
  Schumann}.} \bibinfo{year}{2014}\natexlab{}.
\newblock \showarticletitle{Temporal-Logic Based Runtime Observer Pairs for
  System Health Management of Real-Time Systems}. In
  \bibinfo{booktitle}{\emph{{TACAS}'14}} \emph{(\bibinfo{series}{LNCS},
  Vol.~\bibinfo{volume}{8413})}. \bibinfo{publisher}{Springer},
  \bibinfo{pages}{357--372}.
\newblock
\href{https://doi.org/10.1007/978-3-642-54862-8\_24}{doi:\nolinkurl{10.1007/978-3-642-54862-8\_24}}


\bibitem[Reynolds(2016)]%
        {Rey}
\bibfield{author}{\bibinfo{person}{Mark Reynolds}.}
  \bibinfo{year}{2016}\natexlab{}.
\newblock \showarticletitle{A New Rule for LTL Tableaux}. In
  \bibinfo{booktitle}{\emph{GandALF'16}} \emph{(\bibinfo{series}{EPCTS},
  Vol.~\bibinfo{volume}{26})}. \bibinfo{pages}{287--301}.
\newblock
\href{https://doi.org/10.4204/EPTCS.226.20}{doi:\nolinkurl{10.4204/EPTCS.226.20}}


\bibitem[Roehm et~al\mbox{.}(2017)]%
        {RoehmHM17}
\bibfield{author}{\bibinfo{person}{Hendrik Roehm}, \bibinfo{person}{Thomas
  Heinz}, {and} \bibinfo{person}{Eva~Charlotte Mayer}.}
  \bibinfo{year}{2017}\natexlab{}.
\newblock \showarticletitle{STLInspector: {STL} Validation with Guarantees}. In
  \bibinfo{booktitle}{\emph{{CAV}'17}} \emph{(\bibinfo{series}{LNCS},
  Vol.~\bibinfo{volume}{10426})}. \bibinfo{publisher}{Springer},
  \bibinfo{pages}{225--232}.
\newblock
\href{https://doi.org/10.1007/978-3-319-63387-9\_11}{doi:\nolinkurl{10.1007/978-3-319-63387-9\_11}}


\bibitem[Rozier and Vardi(2010)]%
        {RozierV10}
\bibfield{author}{\bibinfo{person}{Kristin~Y. Rozier} {and}
  \bibinfo{person}{Moshe~Y. Vardi}.} \bibinfo{year}{2010}\natexlab{}.
\newblock \showarticletitle{{LTL} satisfiability checking}.
\newblock \bibinfo{journal}{\emph{Int. J. Softw. Tools Technol. Transf.}}
  \bibinfo{volume}{12}, \bibinfo{number}{2} (\bibinfo{year}{2010}),
  \bibinfo{pages}{123--137}.
\newblock
\href{https://doi.org/10.1007/S10009-010-0140-3}{doi:\nolinkurl{10.1007/S10009-010-0140-3}}


\bibitem[Wolper(1985)]%
        {Wolp}
\bibfield{author}{\bibinfo{person}{Pierre Wolper}.}
  \bibinfo{year}{1985}\natexlab{}.
\newblock \showarticletitle{The Tableau Method for Temporal Logic: an
  Overview}.
\newblock \bibinfo{journal}{\emph{Logique et Analyse}}  \bibinfo{volume}{28}
  (\bibinfo{year}{1985}), \bibinfo{pages}{119--136}.
\newblock


\end{thebibliography}

\end{document}